\newtheorem{remark}[theorem]{Remark}
\newtheorem{theoremold}{Theorem}
\def\fin{\ifmmode{\Large$\diamond$}\else{\unskip\nobreak\hfil
    \penalty50\hskip1em\null\nobreak\hfil{\Large$\diamond$}
    \parfillskip=0pt\finalhyphendemerits=0\endgraf}\fi}
\def\be#1#2\ee{\begin{equation}\label{eq:#1}#2\end{equation}}
\def\req#1{{\rm(\ref{eq:#1})}}
\def\bdm  {\begin{displaymath}}
  \def\edm  {\end{displaymath}}
\def\bdmal{\begin{displaymath}\begin{aligned}}
    \def\edmal{\end{aligned}\end{displaymath}}
\def\thsp{\hspace*{0.1ex}}
\mathchardef\PhiG="0108
\renewcommand{\L}{{\mathscr L}}
\newcommand{\N}{{\mathord{\mathbb N}}}
\newcommand{\R}{{\mathord{\mathbb R}}}
\newcommand{\E}{{\mathord{\mathbb E}}}
\renewcommand{\S}{{\mathord{\mathbb S}}}
\renewcommand{\P}{{\mathord{\mathbb P}}}
\newcommand{\xx}{\textit{\textbf{x}}}
\newcommand{\yy}{\textit{\textbf{y}}}
\newcommand{\B}{{\cal B}}
\newcommand{\BV}{\B_\V}
\renewcommand{\L}{{\mathscr L}}
\newcommand{\norm}[1]{\|#1\|}
\newcommand{\scalp}[1]{\langle\,#1\,\rangle}
\newcommand{\rmd}{\,\mathrm{d}}
\newcommand{\dr}{\rmd r}
\newcommand{\dx}{\rmd x}
\newcommand{\dyy}{\rmd \yy}
\newcommand{\eps}{\varepsilon}
\newcommand{\pcf}{\rho^{(2)}}
\def\dom#1{{\mathcal D}(#1)}
\def\rcp{{\rho_{\textnormal{cp}}}}
\def\req#1{{\rm(\ref{eq:#1})}}
\newcommand{\dupdots}{\mathinner{\mkern1mu\raise\p@
    \vbox{\kern7\p@\hbox{.}}\mkern2mu
    \raise4\p@\hbox{.}\mkern2mu\raise7\p@\hbox{.}\mkern1mu}}
\newcommand{\U}{{{\mathscr U}}}
\newcommand{\V}{{{\mathscr V}}}
\newcommand{\X}{{\mathscr X}}
\newcommand{\bfrho}{{\boldsymbol{\rho}}}
\newcommand{\trho}{\widetilde\rho}
\newcommand{\vt}{\widetilde v}
\newcommand{\Vt}{\widetilde V}
\newcommand{\dP}{\,\rmd\P}
\newcommand{\cbeta}{c_\beta}
\newcommand{\bfe}{{\boldsymbol{e_1}}}
\newcommand{\Srel}{S_{\rm rel}}
\newcommand{\SrelLambda}{S_{{\rm rel},\Lambda}}
\renewcommand\@biblabel[1]{#1.}
\title{
A variational framework for the inverse Henderson problem of statistical
mechanics\thanks{This article belongs to the themed collection: 
    Mathematical Physics and Numerical Simulation of Many-Particle Systems; 
    V. Bach and L. Delle Site (eds.).
    The research leading to this work
    has been done within the 
    Collaborative Research Center TRR~146; corresponding funding 
    by the DFG is gratefully acknowledged.}}
\author{Fabio Frommer\thanks{Institut f\"ur Mathematik, Johannes
    Gutenberg-Universit\"at Mainz, 55099 Mainz, Germany
    ({\tt fabiofrommer@uni-mainz.de})} \and 
    Martin Hanke\thanks{Institut f\"ur Mathematik, Johannes
    Gutenberg-Universit\"at Mainz, 55099 Mainz, Germany
    ({\tt hanke@math.uni-mainz.de})}}
\begin{document}
\sloppy
\maketitle

\begin{abstract}
The inverse Henderson problem refers to the determination of the pair
potential which specifies the interactions in an ensemble of 
classical particles in continuous space, given the density and 
the equilibrium pair correlation function of these particles as data.
For a canonical ensemble in a bounded domain
it has been observed that this pair potential minimizes a corresponding
convex relative entropy functional, and that
the Newton iteration for minimizing this functional coincides with the 
so-called inverse Monte Carlo (IMC) {iterative} scheme. 
In this paper we show that in the thermodynamic limit analogous connections 
exist between the specific relative entropy introduced by Georgii and Zessin 
and a proper formulation of the IMC iteration in the full space.
This provides a rigorous variational framework for the 
inverse Henderson problem, valid within a large class of pair potentials,
including, for example, Lennard-Jones type potentials. 

It is further shown that the pressure is 
strictly convex as a function of the pair potential and the chemical potential,
and that the specific relative entropy at fixed density is a strictly convex
function of the pair potential.
At a given reference potential and a corresponding density in the gas phase
we determine the gradient and the 
Hessian of the specific relative entropy,
and we prove that the Hessian extends to a symmetric positive semidefinite 
quadratic functional in the space of square integrable perturbations 
of this potential.
\end{abstract}

\begin{keywords}
coarse-graining, relative entropy, pair correlation function, 
cluster function, pressure, Fr\'echet derivative, Gibbs point processes  
\end{keywords}

\begin{AMS}
{\sc 82B21, 82B80, 60G55}
\end{AMS}

\hspace*{-0.7em}
{\footnotesize \textbf{Last modified.} \today}

%

\section{Introduction}
\label{Sec:Introduction}
Numerical simulation has established itself as an independent and
indispensable branch of research in the natural sciences, on equal footing
with theory and experiment. To be truly useful, numerical approaches have 
to face and master the multiscale nature which is ubiquitous in almost
all relevant applications. 
This is particularly true for soft matter, where spatial
scales may bridge from the electron scale up to the millimetre scale
of biomaterials or polymers. Concerning examples we refer to 
the excellent survey by Noid~\cite{Noid13a},
the collected volume edited by Monticelli and Salonen~\cite{MoSa13},
or the recent special issue~\cite{Schm22} 
of \emph{Journal of Physics: Condensed Matter}.

An important technique to advance numerical algorithms to the particular needs
of multiscale applications consists in \emph{coarse-graining},
cf., e.g., Noid~\cite{Noid13b}, or Peter and Kremer~\cite{PeKr09}:
small-scale features are discarded on the coarser scale by replacing detailed 
descriptions of molecules or matter by artificial \emph{beads} 
of a certain shape. The simulation then focusses only on these beads and their
interaction with the other constituents of the system. 
When and where necessary, fine details can be reinserted back
into the simulation for better accuracy as, e.g., in the AdResS scheme
(\cite{PDK05,FJK12}).

Of course, to evaluate the equations of motion for the coarse-grained model
it is necessary to derive the prevailing effective forces on these beads.
Concerning the transition from an atomic microscale to a molecular
macroscale in thermodynamic equilibrium, for example, there are essentially
two ways to settle this problem. On the one hand 
one can employ an \emph{ab initio} bottom-up approach and evaluate and assemble 
the resulting forces from the eliminated details of the fine-grained
description, cf.~Ercolessi and Adams~\cite{ErAd94},
or run a fine-grained simulation, compute the
corresponding forces, and somehow approximate them on the
coarse level as suggested, e.g., by Izvekov and Voth~\cite{IzVo05}, 
and Wang et al~\cite{WNLV09}.
The other alternative is to follow a top-down strategy and use the given
structural information about the location of the beads on the coarse scale
to formulate an \emph{inverse problem}: Which are the appropriate
forces or interactions on the coarse level that define ensembles with the
same structural properties\thsp?

In this work we treat one of the simplest incarnations 
of the latter approach. Let us presume that the probabilities of
the snapshots of the coarse-grained bead ensembles are in good agreement 
with a model which only uses additive pairwise translation 
invariant interactions of the beads.
Such interactions can be formulated in terms of a scalar pair potential 
which only depends on the relative position of the respective pair of beads. 
The pair correlation function, which measures the empirical likelihood 
to observe two beads at a given relative position, 
appears to be an adequate piece of data to be used for finding 
the corresponding potential, because both the data and the unknown 
consist of a scalar function of the space variable in that case. 
In fact, in a celebrated paper, Henderson~\cite{Hend74} 
argued that the pair potential is uniquely specified 
this way, i.e., under given coditions of temperature and density,
no two different pair potentials can give rise to the same piece
of empirical data. 
Finding a pair potential that reproduces the given pair correlation function
is therefore sometimes called the \emph{inverse Henderson problem}.

The numerical solution of the inverse Henderson problem is demanding
because of the lack of a mathematical formula for computing 
the pair correlation function for a given pair potential,
or vice versa. In the old days people have developed approximate identities
like the Percus-Yevick or hypernetted chain integral equations 
for this purpose, cf.~Hansen and McDonald~\cite{HaMcD13}, 
or have used parametric ansatz functions like, e.g., Lennard-Jones potentials, 
and optimized the corresponding parameters numerically. 
Today, the state of the art is to use non-parametric potentials and
employ iterative schemes which, in each iteration, simulate the
equilibrium ensemble for the current guess of the pair potential, and
use the associated data fit to somehow generate a new guess.
Well-known examples are the iterative Boltzmann inversion (IBI), 
cf.~Soper~\cite{Sope96} and Reith, P\"utz, and M\"uller-Plathe~\cite{RPM03},
and the inverse Monte-Carlo method (IMC) by 
Lyubartsev and Laaksonen~\cite{LyLa95}. 
We recommend the valuable reviews by Toth~\cite{Toth07} and 
R\"uhle et al~\cite{RJLKA09} for a comparison of these and further methods.

We emphasize that the setting of the inverse Henderson problem is generally 
accepted to be far too simplistic to capture all the relevant 
features of a real system, cf., e.g., \cite{LLV10,Noid13b,THT17}, 
mostly because multibody interactions are neglected. 
In particular, thermodynamic properties
of the coarse-grained model may differ from the real system, 
especially at other temperatures or densities.
 
But its simplicity offers a great opportunity for a 
mathematical analysis, which in turn may lead to a better understanding of 
other, more flexible coarse-graining techniques that are routinely being 
employed in practice. Still, only few rigorous mathematical results 
have yet been obtained, 
e.g., in \cite{JTC83,Koral07,Kuna7,Navr16,Hank18c,FHJ19},
the reason being, again, the lack of explicit formulae to attack the problem.

The aim of this paper is to point out and advocate an alternative access point
for theoretical investigations, which goes back to a nice observation by
Shell~\cite{Shel08} from within the chemical physics community: 
He argues that the Henderson potential
minimizes the (information theoretic) relative entropy
\be{Shell}
   \Srel \,=\, 
   \sum_\gamma {\cal P}_*(\gamma)\,
              \log\frac{{\cal P}_*(\gamma)}{{\cal P}(\gamma)}\,,
\ee
where -- in his words --
the summation is over all possible (coarse grained)
ensemble configurations $\gamma$,
$\mathcal{P}_*(\gamma)$ denotes the target (or observed) probability of 
$\gamma$, and $\mathcal{P}(\gamma)$ is the corresponding probability of 
a model with a given pair potential. 
(Compare, for example, Georgii~\cite{Geo13} for background on the concept of
relative entropy in stochastics.)
Subsequently, Murtola, Karttunen, and Vattulainen~\cite{MKV09}
pointed out that the functional~\req{Shell} is convex, and that
the Newton iteration for minimizing the relative entropy coincides
with the aforementioned IMC iteration
(see also Rosenberger et al~\cite{RSSV19}).

Like Henderson's paper~\cite{Hend74}, the results in \cite{MKV09,RSSV19,Shel08}
lack mathematical rigor, because their arguments are restricted to
bounded domains 
-- whereas an unambiguous definition of a ``translation invariant ensemble''
is only possible in the full space.
Concerning the Henderson theorem this shortcoming has recently been fixed 
in \cite{FHJ19}, building on fundamental work by Ruelle and by Georgii.
Here we focus on the proposal by Shell and his colleagues, provide a
rigorous justification of their results, and elaborate further on them.

To be specific we outline in Section~\ref{Sec:thermodynamic_limit} 
that the (appropriately formulated) relative entropy~\req{Shell} divided by 
the volume of the bounded domain converges in the thermodynamic limit
to the specific relative entropy, first introduced for continuum
systems by Gallavotti and Miracle-Sole~\cite{Gall68} in the case of 
hard-core interactions, and further investigated
by Georgii and Zessin in \cite{HGEO93,HGEO94,HGEO95} for 
general additive pair interactions. 
Under mild assumptions on the model and target ensembles 
to be utilized we prove that this specific relative entropy is a 
strictly convex functional on the corresponding set of pair potentials
(which include Lennard-Jones type 
and hard-core potentials),
and that its (unique) minimizer is the particular potential
which solves the inverse Henderson problem
(see Sections~\ref{Sec:convex} and \ref{Sec:rhofixed}).

From Section~\ref{Sec:derivatives} onwards we restrict ourselves to
low densities (the ``gas phase''), where the specific relative entropy is a
differentiable function of the pair potential. We calculate its Hessian,
and in Section~\ref{Sec:IMC} we verify that the Newton iteration for
minimizing the specific relative entropy does indeed coincide with the 
IMC iteration formulated in the thermodynamic limit. 
In Section~\ref{Sec:L2} we investigate the Hessian in more detail and show
that it can be represented by a selfadjoint positive semidefinite operator
in $L^2$. The mapping properties of this operator can be analyzed 
somewhat further
for the particular class of Lennard-Jones type pair potentials; we conclude
with a corresponding result in Section~\ref{Sec:LJtype}.

We hope that this variational framework for the inverse Henderson problem
opens a possibility to discuss the convergence of the IMC iteration,
or to come up with measures to stabilize or regularize
this popular iterative scheme.

\section{The relative entropy in the thermodynamic limit}
\label{Sec:thermodynamic_limit}
To reformulate Shell's approach within a rigorous mathematical framework 
we start with the assumption that the target ensemble is given by 
a translation invariant probability measure $\P_*$ on the 
configuration space
\[
   \Gamma \,=\, \{\,\gamma\subset\R^d\,:\, \triangle\subset\R^d \ \text{bounded}
   \,\Rightarrow\, \#(\gamma\cap\triangle) < \infty\,\}
\]
with density $\rho_*$ and finite locally second moments, 
compare Georgii~\cite{HGEO94}. 
For the model ensemble we restrict ourselves to ensembles
of classical particles with additive pairwise interactions defined by 
a measurable even pair potential 
$u\colon\R^d\to\R\cup\{+\infty\}$.
Concerning the latter we assume that there exists $r_0>0$ and 
decreasing positive functions $\varphi:(0,r_0)\to \R^+_0$ and 
$\psi: [0,\infty)\to \R^+$ with
\be{varphi,psi}
   \int_{0}^{r_0} r^{d-1}\varphi(r)\dr = +\infty
   \qquad \text{and} \qquad
   \int_{0}^\infty r^{d-1}\psi(r)\dr <\infty\,,
\ee
such that
\be{potentials}
\begin{aligned}
   u(x) &\,\geq\, \varphi(|x|) \qquad \textrm{ for }\ 0 < |x| < r_0\,, \\[1ex]
   |u(x)| &\,\leq\, \psi(|x|) \qquad \textrm{ for }\ |x| \geq r_0\,,
\end{aligned}
\ee
holds true almost everywhere; for our convenience we take $\psi$
to be a bounded and decreasing function defined for \emph{all} $r\geq 0$.
We denote by $\U_0$ the subset of the above pair potentials, 
which also belong to $L^\infty_{\rm loc}(\R^3\setminus\{0\})$,
and define $\U$ to be the union of $\U_0$ with the hard-core potentials, 
which satisfy \req{potentials} with $\psi$ as above and $\varphi$ replaced
by $+\infty$; 
accordingly, $r_0$ is taken to be the hard-core radius in this case.
Technically we do not distinguish between potentials (and functions in general)
which differ on sets of Lebesgue measure zero.

\begin{remark}
\label{Rem:convex}
\rm
The set $\U$ is convex. To see this
let $u_i\in\U$, $i=1,2$, be such that \req{varphi,psi} and \req{potentials} 
hold for $r_{0,i}>0$ and decreasing functions $\varphi_i$ and $\psi_i$ 
satisfying \req{varphi,psi}, respectively. 
Without loss of generality we may assume that $r_{0,1}\leq r_{0,2}$.
Let $u=tu_1+(1-t)u_2$ for some fixed $t\in(0,1)$.
We distinguish two cases. 
If $u_2$ is a hard-core potential, then $u$ is also a hard-core potential
with hard-core radius $r_{0,2}$. In this case we can choose $r_0=r_{0,2}$,
\[\begin{aligned}
   \varphi(r) &\,=\, +\infty & \quad &
   \textrm{ for }\ 0 < r < r_{0,2}\,,\\[1ex]
   \psi(r) &\,=\, t\psi_1(r)+(1-t)\psi_2(r)  & \quad & 
   \textrm{ for }\ r \geq 0\,,
\end{aligned}
\]
to achieve the assumption \req{potentials} for $u$.
On the other hand, if $u_2$ is no hard-core potential, then $u_2$ is bounded
on $[r_{0,1},r_{0,2})$, and there exists $c\geq 1$, such that
\[
   |u_2(x)| \,\leq\, c\,\psi_2(|x|) \qquad 
   \textrm{ for }\ r_{0,1} \leq |x| < r_{0,2}\,.
\]
It follows that $u$ satisfies the inequalities~\req{potentials} 
with $r_0=r_{0,1}$ and
\[
\begin{aligned}
   \varphi(r) &\,=\, t\varphi_1(r)+(1-t)\varphi_2(r) & \quad & 
   \textrm{ for }\ 0 < r < r_{0,1}\,,\\[1ex]
   \psi(r) &\,=\, t\psi_1(r)+(1-t)c\,\psi_2(r) & \quad & 
   \textrm{ for }\ r\geq 0\,.
\end{aligned}
\]
In either case we have verified that $u\in\U$, hence $\U$ is convex.
\fin
\end{remark}

Let $\Lambda=[-\ell,\ell]^d$ be a bounded box in $\R^d$.
In analogy to Shell, who considered the relative entropy framework for
canonical ensembles in $\Lambda$, we take the
restriction $\P_{*,\Lambda}$ of $\P_*$ as target, and the
grand canonical ensemble in $\Lambda$ with chemical potential $\mu\in\R$ and
pair potential $u\in\U$ as model.
For a configuration $\gamma\in\Gamma$ of $N=N(\gamma)$ particles located at
$\{x_1,\dots,x_N\}\subset\Lambda$ the probability density of the model 
is thus given by
\[
   {\cal P}(\gamma)
   \,=\, \frac{1}{\ \Xi_\Lambda}\,e^{\beta\mu N(\gamma)}e^{-\beta U(\gamma)}\,,
\]
where $\beta>0$ is the inverse temperature, 
\be{U}
   U(\gamma) \,=\, U(x_1,\dots,x_N) 
   \,=\!\! \sum_{1\leq i<j\leq N}\!\! u(x_i-x_j)
\ee
is the interaction energy, and
\[
   \Xi_\Lambda \,=\, 
   \sum_{N=0}^\infty \frac{e^{\beta\mu N}}{N!}
   \int_{\Lambda^N} e^{-\beta U(x_1,\dots,x_N)}\dx_1\cdots\dx_N   
\]
is the corresponding grand canonical partition function. 

Note that most of the quantities we deal with depend on $\beta$;
however, since we will keep the temperature fixed throughout this paper,
we refrain from making this dependency explicit in our notation.

Let $S(\P_{*,\Lambda})$ denote the entropy associated with $\P_{*,\Lambda}$.
In the spirit of \req{Shell} we then specify the relative entropy
\begin{align*}
   \SrelLambda
   &\,=\, S(\P_{*,\Lambda})
    + \int_{\Gamma_\Lambda} \bigl(
      \log{\Xi_\Lambda}
      -\beta\mu N(\gamma)+\beta\thsp U(\gamma)\bigr)
      \dP_{*,\Lambda}(\gamma)\\[1ex]
   &\,=\, S(\P_{*,\Lambda})
    + \log\Xi_\Lambda 
    - \beta\mu \,\E_{*,\Lambda}[N] \,+\, \beta\,\E_{*,\Lambda}[U]\,,
\end{align*}
where $\E_{*,\Lambda}[\,\cdot\,]$ denotes expectation with respect
to $\P_{*,\Lambda}$. Take note that the entropy and the expected 
interaction energy may be $+\infty$. Dividing by $\beta|\Lambda|$ we arrive at
\be{Shell2}
   \frac{1}{\beta|\Lambda|}\,\SrelLambda 
   \,=\, \frac{1}{\beta|\Lambda|} \,S(\P_{*,\Lambda})
    + \frac{1}{\beta |\Lambda|}\log\Xi_\Lambda -  \mu \rho_* 
    + \,\frac{1}{|\Lambda|}\,\E_{*,\Lambda} [U]\,.
\ee

Now we want to derive the analog of this identity for the thermodynamic limit
$\ell\to\infty$. 
Concerning this limit it is known that there is a sequence $(\ell_k)_k$
with $\ell_k\to\infty$, such that the probability densities associated with 
the corresponding grand canonical ensembles converge in a local topology 
to a translation invariant tempered Gibbs measure on $\Gamma$,
cf.~Ruelle~\cite{Ruel70}, for brevity called $(\mu,u)$-Gibbs measure in the 
sequel.
In general different
sequences may lead to different $(\mu,u$)-Gibbs measures, e.g., 
when the system exhibits 
a phase transition. For any such sequence, however, the limit
\be{pressure}
   p(\mu,u) \,=\, 
   \lim_{\ell\to\infty}\,\frac{1}{\beta|\Lambda|}\,\log\Xi_\Lambda 
\ee
is always the same well-defined and nonnegative finite number, 
namely the pressure of the ensemble in the thermodynamic limit, 
cf.~Ruelle~\cite{Ruel69}. Therefore, passing in \req{Shell2}
to the thermodynamic limit $\ell\to\infty$, 
we can (uniquely) define the relative entropy
$\Srel(\mu,u)$ of all these $(\mu,u)$-Gibbs measures 
with respect to the target model $\P_*$ as
\be{Georgii}
   \frac{1}{\beta}\,\Srel(\mu,u)
   \,=\, \frac{1}{\beta}\,S_* \,+\, p(\mu,u) \,-\, \mu \rho_* \,+\, E(u,\P_*)\,,
\ee
where the individual terms on the right-hand side of \req{Georgii} 
are the corresponding
limits of the respective terms in \req{Shell2}: 
$S_*=S(\P_*)$ is the specific entropy 
and $E(u,\P_*)$ is the specific interaction energy (with respect to the
potential $u$) of the target ensemble, both of which have been
shown to be well-defined in $\R\cup\{+\infty\}$, cf.~\cite{Geo11,HGEO93}.
If $\P_*$ satisfies a Ruelle condition 
(compare \req{Ruellebound} in the appendix) 
then the specific entropy is finite;
this is the case, e.g., when $\P_*$ is a $(\mu_*,u_*)$-Gibbs measure for 
some $u_*\in\U$ and $\mu_*\in\R$, compare \cite[Corollary~5.3]{Ruel70}.
For this latter particular case it has further been shown in \cite{FHJ19} that
\be{EuP}
   E(u,\P_*) \,=\, \frac{1}{2} \int_{\R^d} u(x)\rho_*^{(2)}(x)\dx\,,
\ee
where $\rho_*^{(2)}(x)$ is the pair correlation function 
associated with $\P_*$.
Here, and throughout this paper, we deliberately use the short-hand
notation $\pcf(x)$ instead of $\pcf(x,0)$ for the pair correlation function of
a translation invariant point process.

In \cite{HGEO95} Georgii studied the relative entropy $\Srel(\mu,u)$
in detail and established the following \emph{Gibbs variational principle}:

\begin{theoremold}[Gibbs variational principle]
\label{Thm:Georgii}
Let $\P_*$ be a translation invariant probability measure on $\Gamma$ with
finite locally second moments. Then the relative entropy \req{Georgii} 
is nonnegative real or $+\infty$ for every $u\in\U$ and $\mu\in\R$. 
There holds $\Srel(\mu,u)=0$, if and only if
$\P_*$ is a $(\mu,u)$-Gibbs measure.
\end{theoremold}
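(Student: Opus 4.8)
The plan is to recognize the finite-volume functional $\SrelLambda$ as a genuine relative entropy and then to propagate its two defining properties -- nonnegativity and the characterization of its zero set -- through the thermodynamic limit. Let $\P_{\mu,u,\Lambda}$ denote the grand canonical model measure on $\Gamma_\Lambda$ whose density $\mathcal P(\gamma)$ is displayed above; both $\P_{*,\Lambda}$ and $\P_{\mu,u,\Lambda}$ are absolutely continuous with respect to the Lebesgue--Poisson reference measure implicit in the definition of $\Xi_\Lambda$. Comparing the definition of $\SrelLambda$ with \req{Shell2}, and tracking the sign convention for $S(\P_{*,\Lambda})$, I would first verify the identification $\SrelLambda = \int_{\Gamma_\Lambda}\log\bigl(\rmd\P_{*,\Lambda}/\rmd\P_{\mu,u,\Lambda}\bigr)\,\rmd\P_{*,\Lambda}$, i.e.\ $\SrelLambda$ is exactly the Kullback--Leibler divergence of the target from the model at volume $\Lambda$.

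For the nonnegativity assertion, the classical Gibbs (Jensen) inequality then gives $\SrelLambda\ge 0$ for every box $\Lambda$, with equality if and only if $\P_{*,\Lambda}=\P_{\mu,u,\Lambda}$. Dividing by $\beta|\Lambda|$ and letting $\ell\to\infty$, the left-hand side of \req{Shell2} converges to $\tfrac1\beta\Srel(\mu,u)$ because each term on the right converges by construction of \req{Georgii} (with $p(\mu,u)$ given by \req{pressure}); a pointwise limit of nonnegative numbers is nonnegative or $+\infty$, so $\Srel(\mu,u)\ge 0$. The only point needing care is the possibility that $S_*$ or $E(u,\P_*)$ is infinite: when both are finite the termwise and total limits agree and the bound is immediate, while if $E(u,\P_*)=+\infty$ then $\Srel(\mu,u)=+\infty$ and the claim is trivial.

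For the equality characterization I would work through the DLR description of $(\mu,u)$-Gibbs measures, namely that their conditional distributions outside bounded regions coincide with the grand canonical specification. The \emph{if} direction is the easier one: if $\P_*$ is itself $(\mu,u)$-Gibbs, then $\P_{*,\Lambda}$ is the finite-volume Gibbs measure carrying a random boundary configuration, and it differs from the free-boundary $\P_{\mu,u,\Lambda}$ only through the interaction energy across $\partial\Lambda$. Using the Ruelle bound \req{Ruellebound} together with the temperedness built into $\U$ through \req{varphi,psi}, this boundary contribution to the relative entropy is $o(|\Lambda|)$, so $\tfrac1{|\Lambda|}\SrelLambda\to 0$ and hence $\Srel(\mu,u)=0$.

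The \emph{only if} direction is where the real work lies, and I expect it to be the main obstacle. The strategy is to bound the specific relative entropy \emph{from below} by the local relative entropies of the conditional distributions of $\P_*$ against the $(\mu,u)$-specification, in the spirit of the F\"ollmer--Preston--Georgii estimates. This rests on the (near) superadditivity of $\Lambda\mapsto\SrelLambda$, which in turn follows from the near-additivity of the interaction energy $U$ in \req{U} together with the superstability and Ruelle bounds controlling the cross-boundary terms. Granting such a lower bound, $\Srel(\mu,u)=0$ forces every local relative entropy to vanish, so the conditional distributions of $\P_*$ agree $\P_*$-almost surely with the specification; this is precisely the DLR property, whence $\P_*$ is a $(\mu,u)$-Gibbs measure. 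The delicate steps are the uniform control of the boundary interactions in the limit and the fact that $S_*$ and $E(u,\P_*)$ may individually be infinite, which rules out a naive termwise argument and forces one to work throughout with the combined quantity \req{Georgii} and with local, rather than global, relative entropies.
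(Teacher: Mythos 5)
You should first be aware that the paper does not prove this statement at all: it is quoted as a known result of Georgii, with the text explicitly attributing it to \cite{HGEO95} (and, for hard cores, to Gallavotti and Miracle-Sole \cite{Gall68}). So there is no in-paper proof to compare against; the question is only whether your outline would actually constitute a proof. It would not, although it is aimed in the right direction --- indeed it is essentially the route Georgii himself takes (finite-volume Kullback--Leibler divergence, (near-)superadditivity, DLR characterization).

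The genuine gaps are the two places where you yourself signal them. For the \emph{only if} direction, everything after ``Granting such a lower bound'' is the theorem: one must actually establish that the specific relative entropy dominates the relative entropies of the conditional distributions of $\P_*$ against the $(\mu,u)$-specification, uniformly enough to conclude that $\Srel(\mu,u)=0$ forces the DLR equations. For interactions in $\U$ this requires controlling cross-boundary energies that are only integrably decaying (and possibly hard-core), via superstability and the Ruelle bound, and it is precisely here that the F\"ollmer--Preston lattice argument does not transfer for free to continuum systems --- which is why \cite{HGEO93,HGEO95} exist. For the \emph{if} direction, the claim that the boundary contribution to $\SrelLambda$ is $o(|\Lambda|)$ is asserted, not proved; note that it involves not only the interaction energy across $\partial\Lambda$ but also the logarithm of the ratio of the constrained and free partition functions, and that for a general $(\mu,u)$-Gibbs measure (e.g.\ a nontrivial mixture in a phase-transition regime) $\P_{*,\Lambda}$ is not simply a finite-volume Gibbs measure with a random boundary condition in a way that makes this comparison immediate. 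Your identification of $\SrelLambda$ as a relative entropy and the resulting nonnegativity in the limit are fine (modulo the termwise convergence that the paper builds into the definition \req{Georgii}), but the equality characterization --- the substantive half of the theorem --- remains a program rather than a proof.
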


We have utilized this result in \cite{FHJ19} to prove a rigorous version 
of the Henderson theorem:

\begin{theoremold}[Henderson theorem]
\label{Thm:Henderson}
Let $u_1,u_2\in\U$ and $\mu_1,\mu_2\in\R$.
If $\P_1$ and $\P_2$ are $(\mu_1,u_1)$- and $(\mu_2,u_2)$-Gibbs measures, 
respectively, 
which share the same density and the same pair correlation function, 
then $u_1=u_2$ and $\mu_1=\mu_2$.
\end{theoremold}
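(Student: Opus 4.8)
The plan is to derive the theorem from the Gibbs variational principle (Theorem~\ref{Thm:Georgii}) together with the representation \req{EuP} of the specific interaction energy. Write $\rho$ and $\rho^{(2)}$ for the common density and pair correlation function of $\P_1$ and $\P_2$; since both are Gibbs measures, they satisfy a Ruelle condition, so their specific entropies $S(\P_1)$ and $S(\P_2)$ are finite. The crucial observation is that the specific interaction energy of $\P_1$ for the potential $u_2$ coincides with that of $\P_2$: by \req{EuP},
\[
   E(u_2,\P_1) \,=\, \tfrac12\int_{\R^d} u_2(x)\,\rho^{(2)}(x)\dx \,=\, E(u_2,\P_2)\,,
\]
since the integrand depends on the ensemble only through the shared pair correlation function. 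The same identity holds with the roles of $1$ and $2$ exchanged.

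Next I would evaluate the relative entropy \req{Georgii} for two ``mismatched'' pairs. Taking $\P_*=\P_1$ and inserting $(\mu_2,u_2)$ gives
\[
   \tfrac1\beta\,\Srel(\mu_2,u_2)
   \,=\, \tfrac1\beta S(\P_1) + p(\mu_2,u_2) - \mu_2\rho + E(u_2,\P_1)\,,
\]
while the self-consistent choice $\P_*=\P_2$ with $(\mu_2,u_2)$, for which the left-hand side vanishes by Theorem~\ref{Thm:Georgii}, yields
\[
   0 \,=\, \tfrac1\beta S(\P_2) + p(\mu_2,u_2) - \mu_2\rho + E(u_2,\P_2)\,.
\]
In the second identity every term but possibly $E(u_2,\P_2)$ is finite, so $E(u_2,\P_2)$ --- and hence $E(u_2,\P_1)$ --- is finite as well; subtracting and using the energy identity from the first paragraph leaves $\tfrac1\beta\,\Srel(\mu_2,u_2)=\tfrac1\beta\bigl(S(\P_1)-S(\P_2)\bigr)$ for the target $\P_1$. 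By symmetry, the relative entropy of the $(\mu_1,u_1)$-Gibbs measures with respect to $\P_2$ equals $\tfrac1\beta\bigl(S(\P_2)-S(\P_1)\bigr)$. Nonnegativity of the relative entropy (again Theorem~\ref{Thm:Georgii}) forces both differences to be nonnegative, hence $S(\P_1)=S(\P_2)$ and both relative entropies vanish. The characterization part of Theorem~\ref{Thm:Georgii} then shows that $\P_1$ is itself a $(\mu_2,u_2)$-Gibbs measure and $\P_2$ a $(\mu_1,u_1)$-Gibbs measure.

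It remains to show that one translation invariant Gibbs measure cannot belong to two distinct pairs, i.e.\ that $\P_1$ being simultaneously $(\mu_1,u_1)$- and $(\mu_2,u_2)$-Gibbs forces $u_1=u_2$ and $\mu_1=\mu_2$; this is the step I expect to be the main obstacle. I would use the Georgii--Nguyen--Zessin (Papangelou) characterization: for $\P_1$-almost every configuration $\gamma$ and almost every $x$ the one-point conditional intensity equals $e^{\beta\mu}\exp\bigl(-\beta\sum_{y\in\gamma}u(x-y)\bigr)$, and this must agree for $(\mu_1,u_1)$ and $(\mu_2,u_2)$. A differencing argument then separates the contributions: comparing the intensities at configurations differing by a single particle isolates the pairwise term and gives $u_1(x-y)=u_2(x-y)$ for Lebesgue-almost all $x,y$, hence $u_1=u_2$ in $\U$ (recall that potentials are identified up to null sets), after which the remaining constant identifies the activities and thus $\mu_1=\mu_2$. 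The delicate point is that the conditional intensities coincide only almost everywhere, so one cannot evaluate them on the empty or singleton configuration directly; instead one has to exploit that under a tempered Gibbs measure configurations of every finite local cardinality occur with positive probability and recast the differencing in terms of the reduced Campbell measure.
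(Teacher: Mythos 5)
The paper does not prove Theorem~\ref{Thm:Henderson} itself --- it imports it from \cite{FHJ19} --- and your argument is essentially the one given there: apply the Gibbs variational principle to the two ``mismatched'' pairs, use that $E(u,\P)$ depends on $\P$ only through the shared $\pcf$ to force both off-diagonal relative entropies to vanish, and then reduce to showing that a single translation invariant Gibbs measure determines $(\mu,u)$ via the GNZ/DLR characterization. That reduction is correct, and your final step (including the flagged subtlety that the conditional intensities agree only up to null sets of the reduced Campbell measure) is exactly where the cited reference also invests its technical work, so the proposal matches the intended proof.
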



Combining Georgii's Gibbs variational principle and the Henderson theorem
we can formulate an alternative version of the Gibbs variational principle.
This version is the one that we will mostly use below.

\begin{theorem}[Gibbs variational principle, alternative form]
\label{Thm:gvp}
If the target $\P_*$ is a 
$(\mu_*,u_*)$-Gibbs measure for some $u_*\in\U$ and $\mu_*\in\R$,
then the relative entropy $\Srel(\mu,u)$ becomes minimal, if and only if 
$u=u_*$ and $\mu=\mu_*$.
\end{theorem}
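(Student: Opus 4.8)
The plan is to read off the statement as an almost immediate consequence of the two results already at our disposal, namely Georgii's Gibbs variational principle (Theorem~\ref{Thm:Georgii}) and the Henderson theorem (Theorem~\ref{Thm:Henderson}). The first step is to locate the minimum value of $\Srel$. By Theorem~\ref{Thm:Georgii} we have $\Srel(\mu,u)\geq 0$ for every admissible pair $(\mu,u)$, so $0$ is a lower bound; and since $\P_*$ is by hypothesis a $(\mu_*,u_*)$-Gibbs measure, the equality clause of the same theorem gives $\Srel(\mu_*,u_*)=0$. Hence the global minimum of $\Srel$ equals $0$ and is attained at $(\mu_*,u_*)$, which settles the ``if'' direction at once.

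For the converse I would argue as follows. Suppose $(\mu,u)$ is a minimizer; since the minimum value is $0$, this means $\Srel(\mu,u)=0$. Applying the equality clause of Theorem~\ref{Thm:Georgii} once more, now in the other direction, forces $\P_*$ to be a $(\mu,u)$-Gibbs measure. At this point $\P_*$ is simultaneously a $(\mu_*,u_*)$- and a $(\mu,u)$-Gibbs measure, and in particular it carries a single, well-defined density and pair correlation function. Invoking the Henderson theorem (Theorem~\ref{Thm:Henderson}) with the degenerate choice $\P_1=\P_2=\P_*$ --- which trivially have the same density and the same pair correlation function --- then yields $u=u_*$ and $\mu=\mu_*$, as desired.

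The argument involves no real computation, so there is no substantial analytic obstacle; the only points requiring care are organizational. One must be sure that the minimum is genuinely \emph{attained} rather than merely approached, which is exactly what the exhibited equality $\Srel(\mu_*,u_*)=0$ provides, and one must check that Theorem~\ref{Thm:Henderson} is applicable in the present situation, where the two Gibbs measures coincide. The latter is immediate because a single measure shares all of its structural quantities with itself, so the hypotheses of the Henderson theorem are satisfied vacuously. It is worth emphasizing that the entire strength of the statement is inherited from Theorem~\ref{Thm:Henderson}: without uniqueness of the reproducing potential, Theorem~\ref{Thm:Georgii} alone would only characterize the minimizers as the set of all $(\mu,u)$ for which $\P_*$ happens to be a Gibbs measure, and the present sharp ``if and only if $u=u_*$ and $\mu=\mu_*$'' could fail.
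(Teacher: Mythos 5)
Your argument is correct and follows essentially the same route as the paper's own proof: both reduce the problem via Theorem~\ref{Thm:Georgii} to the case $\Srel(\mu,u)=0$, observe that $\P_*$ is then simultaneously a $(\mu_*,u_*)$- and a $(\mu,u)$-Gibbs measure, and conclude $u=u_*$, $\mu=\mu_*$ by applying the Henderson theorem to this single measure sharing its own density and pair correlation function. No gaps.
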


\begin{proof}
According to Theorem~\ref{Thm:Georgii} it remains to investigate the case
when $\Srel(\mu,u)=0$ for some $u\in\U$ and $\mu\in\R$.
The Gibbs variational principle states that $\P_*$ then is
a $(\mu_*,u_*)$- and $(\mu,u)$-Gibbs measure
at the same time. In particular this means that these two Gibbs measures share
the same density and pair correlation function. The assertion thus
follows from the Henderson theorem.
\end{proof}

Georgii investigated the relative entropy for fixed interaction and chemical 
potentials $u$ and $\mu$, and varied the target model $\P_*$.
In connection with the inverse Henderson problem our interest is sort of dual 
to this: we assume that $\P_*$ is fixed, and consider the relative entropy 
as a function of $\mu$ and $u$. 

\section{Strict convexity of the pressure and the relative entropy}
\label{Sec:convex}
It is well-known that the pressure is a convex function of
the chemical potential, cf.~\cite[Theorem~3.4.6]{Ruel69}. 
This convexity is strict, whenever a Gibbs variational principle is valid,
cf., e.g., Hughes~\cite[Section 4.3]{Hugh21}.
In the sequel we show that under our assumptions the pressure is also strictly
convex in $\mu$ \emph{and} $u$.

\begin{theorem}
\label{Thm:convexpressure}
The pressure $p=p(\mu,u)$ of \req{pressure} is a strictly convex function
of $\mu\in\R$ and $u\in\U$. Moreover, for $u\in\U_0$ there holds
\be{Thm:convexpressure}
   \frac{p(\mu,u)}{\mu} \,\to\, +\infty \qquad \text{as $\mu\to+\infty$}\,.
\ee
\end{theorem}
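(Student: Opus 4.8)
The plan is to prove convexity and strict convexity by separate arguments, and to establish the growth condition~\req{Thm:convexpressure} through an explicit lower bound for the partition function.

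For convexity I would work at fixed $\Lambda$ and exploit that both $\mu$ and $u$ enter the Boltzmann weight $e^{\beta\mu N(\gamma)-\beta U(\gamma)}$ linearly; indeed $U(\gamma)=\sum_{i<j}u(x_i-x_j)$ is linear in $u$, so $(\mu,u)\mapsto\beta\mu N(\gamma)-\beta U(\gamma)$ is a linear functional. Writing $(\mu_\theta,u_\theta)=(1-\theta)(\mu_0,u_0)+\theta(\mu_1,u_1)$, the weight at $(\mu_\theta,u_\theta)$ factors as the product of the two endpoint weights raised to the powers $1-\theta$ and $\theta$. Applying H\"older's inequality with conjugate exponents $1/(1-\theta)$ and $1/\theta$ under the sum-integral defining $\Xi_\Lambda$ then gives $\log\Xi_\Lambda(\mu_\theta,u_\theta)\le(1-\theta)\log\Xi_\Lambda(\mu_0,u_0)+\theta\log\Xi_\Lambda(\mu_1,u_1)$, i.e.\ $\frac{1}{\beta|\Lambda|}\log\Xi_\Lambda$ is convex in $(\mu,u)$. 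Convexity survives the pointwise limit $\ell\to\infty$ in~\req{pressure}, so $p$ is convex on the convex set $\R\times\U$ (Remark~\ref{Rem:convex}).

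H\"older does not by itself give strictness in the limit, so for strict convexity I would switch to a dual, variational description of $p$. Rearranging~\req{Georgii} for an arbitrary translation invariant target $\P$ with finite locally second moments and using $\Srel\ge 0$ from Theorem~\ref{Thm:Georgii} yields $p(\mu,u)\ge G_\P(\mu,u):=\mu\rho(\P)-E(u,\P)-\frac{1}{\beta}S(\P)$, with equality precisely when $\P$ is a $(\mu,u)$-Gibbs measure. Suppose $p$ fails to be strictly convex, so that $p(\mu_\theta,u_\theta)=(1-\theta)p(\mu_0,u_0)+\theta p(\mu_1,u_1)$ for some distinct endpoints and some $\theta\in(0,1)$. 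Choosing $\P$ to be a $(\mu_\theta,u_\theta)$-Gibbs measure, identity~\req{EuP} shows that $E(\cdot,\P)$ is linear, hence $G_\P$ is affine in $(\mu,u)$; since $G_\P\le p$ everywhere with equality at the interior point $(\mu_\theta,u_\theta)$, affineness forces equality at both endpoints as well. Thus $\P$ is simultaneously a $(\mu_0,u_0)$- and a $(\mu_1,u_1)$-Gibbs measure, and being one measure it trivially shares its density and pair correlation function with itself, so the Henderson theorem (Theorem~\ref{Thm:Henderson}) forces $(\mu_0,u_0)=(\mu_1,u_1)$, a contradiction.

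For the growth~\req{Thm:convexpressure} I would bound $\Xi_\Lambda$ from below by a single $N$-particle term, restricting the configuration integral to the set $D_R$ of configurations whose pairwise distances all exceed a fixed $R\in(0,r_0)$. On $D_R$ the summability of $\psi$ from~\req{varphi,psi} together with the local boundedness of $u$ on $\{R\le|x|<r_0\}$—available exactly because $u\in\U_0$ carries no hard core—gives a stability-type bound $U(\gamma)\le\frac{N}{2}W(R)$ with $W(R)<\infty$, while a greedy packing argument bounds the Lebesgue measure of $D_R$ below by $(|\Lambda|/2)^N$ provided $N\le|\Lambda|/(2v_dR^d)$, where $v_d$ is the volume of the unit ball. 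Inserting Stirling's formula and setting $N=\lfloor\rho|\Lambda|\rfloor$ with $\rho=1/(2v_dR^d)$, the $\log|\Lambda|$ contributions cancel, and passing to the limit gives $p(\mu,u)\ge\rho\mu-\frac{\rho}{2}W(R)+\frac{\rho}{\beta}(1-\log 2-\log\rho)$. For each fixed $R$ this yields $\liminf_{\mu\to+\infty}p(\mu,u)/\mu\ge\rho=1/(2v_dR^d)$, and letting $R\to0$ makes the bound arbitrarily large, so $p(\mu,u)/\mu\to+\infty$.

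I expect this last estimate to be the main obstacle. The strict convexity is a soft consequence of the already available Gibbs variational principle and Henderson uniqueness, and the H\"older convexity is routine; the growth statement, by contrast, needs the stability-type energy bound on well-separated configurations, careful tracking of the combinatorial and entropic factors through the thermodynamic limit, and above all the fact that for $\U_0$ potentials the separation $R$ may be sent to zero. This last point is precisely what fails for hard-core potentials, where the admissible density is capped and $p/\mu$ stays bounded, and it explains why the growth is asserted only for $u\in\U_0$.
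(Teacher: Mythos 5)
Your proof is correct, and it splits naturally into a part that coincides with the paper's argument and a part that is genuinely different. For strict convexity you and the paper use the same mechanism: the lower bound $p(\mu,u)\ge \mu\rho(\P)-E(u,\P)-\tfrac{1}{\beta}S(\P)$ coming from $\Srel\ge 0$, with equality exactly for Gibbs measures, evaluated at a $(\mu_\theta,u_\theta)$-Gibbs measure, plus the Henderson theorem to rule out the endpoint potentials being Gibbs for the same measure --- the paper writes the three (in)equalities directly and subtracts, you phrase it contrapositively via the supporting affine functional $G_\P$; the content is identical, and this also makes your preliminary H\"older/finite-volume convexity step redundant (equality in your chain already follows from $G_\theta=p_\theta\ge(1-\theta)p_0+\theta p_1\ge(1-\theta)G_0+\theta G_1=G_\theta$ without assuming convexity beforehand). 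Where you genuinely diverge is the growth estimate \req{Thm:convexpressure}: the paper invokes \cite[Lemma~7.1]{HGEO94} to produce, for each $\rho>0$, a translation invariant measure $\P_\rho$ with density $\rho$, finite specific entropy and finite interaction energy, and then reads off $p(\mu,u)-\mu\rho\ge c_\rho$ from the Gibbs variational principle in two lines; you instead re-derive the needed input by hand, bounding $\Xi_\Lambda$ below by a single $N$-particle term supported on $R$-separated configurations, using local boundedness of $u\in\U_0$ away from the origin together with the integrability of $\psi$ for the energy bound, a greedy packing bound for the volume, and Stirling for the combinatorics, before sending $R\to0$. Both routes rest on the same physical fact --- that a non-hard-core potential admits test states of arbitrarily large density at finite free-energy cost, which is exactly what fails for hard cores --- but yours is self-contained at the price of the stability and packing estimates, while the paper's is shorter at the price of citing Georgii's lemma. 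Your construction is sound (the $\log|\Lambda|$ cancellation and the choice $\rho=1/(2v_dR^d)$ both check out), so this is an acceptable alternative proof of the second assertion.
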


\begin{proof}
To begin with we first observe that $\R\times\U$ is convex by virtue of 
Remark~\ref{Rem:convex}.

Now let $\mu_1,\mu_2\in\R$ and $u_1,u_2\in\U$ be arbitrarily chosen, 
not both being equal at the same time, and define
\[
   \mu \,=\, t\mu_1 + (1-t)\mu_2 \qquad \text{and} \qquad
   u \,=\, tu_1 + (1-t)u_2
\]
for some fixed $t\in(0,1)$. When choosing for $\P_*$ a corresponding 
$(\mu,u)$-Gibbs measure,
then it follows from \req{Georgii} 
and the Gibbs variational principle of Theorem~\ref{Thm:gvp} that
\be{convexpressure-help}
            p(\mu,u) 
            \,=\, -\frac{1}{\beta}\,S_*
                  \,+\,\mu\rho_*
                  \,-\, E(u,\P_*)\,.
\ee
In particular, the specific entropy $S_*$ is finite because $\P_*$ is a 
Gibbs measure, and hence, so is the specific interaction energy. 
Likewise we obtain
\begin{align*}
            p(\mu_1,u_1) 
            &\,>\, -\frac{1}{\beta}\,S_*
                  \,+\,\mu_1\rho_*
                  \,-\, E(u_1,\P_*)
\intertext{and}
            p(\mu_2,u_2) 
            &\,>\, -\frac{1}{\beta}\,S_*
                  \,+\,\mu_2\rho_*
                  \,-\, E(u_2,\P_*)\,,
        \end{align*}
which gives
\[
   t\,p(\mu_1,u_1) \,+\, (1-t)\,p(\mu_2,u_2)
   \,>\, -\frac{1}{\beta}\,S_* \,+\, \mu\rho_* \,-\, E(u,P_*)
\]
because of the linearity of the specific interaction energy.
A comparison with \req{convexpressure-help} thus shows that the
pressure is strictly convex.

Consider now a fixed pair potential $u\in\U_0$.
It has been shown in the proof of \cite[Lemma~7.1]{HGEO94} 
that for any $\rho>0$ there exists a
translation invariant probability measure $\P_\rho$ on $\Gamma$ 
with density $\rho$, such that 
$S(\P_\rho)<\infty$, and $E(u,\P_\rho) < \infty$. 
Choosing $\P_*=\P_\rho$ in \req{Georgii}, Georgii's Gibbs variational principle 
(Theorem~\ref{Thm:Georgii}) yields the inequality
\be{Theta}
   p(\mu,u) - \mu\rho\,\geq\, -\frac{1}{\beta}S(\P_\rho)-E(u,\P_\rho) 
   \,=:\, c_\rho \,>\, -\infty
\ee
for every $\mu\in\R$. In other words,
\[
   \frac{p(\mu,u)}{\mu} \,\geq\, \rho \,+\, \frac{c_\rho}{\mu}\,,
\]
and hence,
\[
   \liminf_{\mu\to+\infty} \frac{p(\mu,u)}{\mu}\,\geq\, \rho\,.
\]
Since $\rho>0$ has been arbitrary, this implies \req{Thm:convexpressure}.
\end{proof}

We thus have shown that the relative entropy~\req{Georgii} is 
the sum of a strictly convex funtional and an affine function of $(\mu,u)$. 
Accordingly, the relative entropy is also a strictly convex functional
of $\mu$ and $u$, as long as it is finite. 
%

Another immediate consequence of Theorem~\ref{Thm:gvp}
is the following inequality for the pressure.

\begin{corollary}{\label{corr:subd}}
     For $\mu_1,\mu_2 \in \R$ and $u_1,u_2\in\U$ there holds  
            \be{pressurebounds}
            \begin{aligned}
                &(\mu_2-\mu_1) \rho_1-\frac{1}{2}\int_{\R^d} 
                (u_2-u_1)(x)\rho^{(2)}_{1}(x)\dx \\ 
                &\qquad \,\leq\, p(\mu_2,u_2)-p(\mu_1,u_1)
                 \,\leq\,  (\mu_2-\mu_1)\rho_{2}-\frac{1}{2}\int_{\R^d} 
                 (u_2-u_1)(x)\rho^{(2)}_{2}(x)\dx. 
            \end{aligned}
            \ee
            whenever $\rho_i$ and $\rho^{(2)}_i$ are the density and pair 
            correlation function of a $(\mu_i,u_i)$-Gibbs measure, respectively.
            Both inequalities are strict, unless $\mu_1=\mu_2$ and $u_1=u_2$.
            In particular, if $u_1=u_2$ and $\mu_1<\mu_2$, then
            $\rho_1<\rho_2$.
\end{corollary}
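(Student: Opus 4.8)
The plan is to read both inequalities in \req{pressurebounds} as supporting (subgradient) inequalities for the convex function $p$, using the Gibbs variational principle to supply the supporting affine functionals. The key ingredient has essentially already been isolated in the proof of Theorem~\ref{Thm:convexpressure}: whenever $\P_i$ is a $(\mu_i,u_i)$-Gibbs measure with density $\rho_i$ and pair correlation function $\rho^{(2)}_i$, the identity~\req{convexpressure-help} reads
\[
   p(\mu_i,u_i) \,=\, -\frac{1}{\beta}\,S(\P_i) \,+\, \mu_i\rho_i \,-\, E(u_i,\P_i),
   \qquad i=1,2,
\]
with $S(\P_i)$ and $E(u_i,\P_i)$ finite because $\P_i$ is a Gibbs measure, and with the representation $E(u,\P_i)=\frac{1}{2}\int_{\R^d}u(x)\rho^{(2)}_i(x)\dx$ from \req{EuP} available for every $u\in\U$.

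First I would establish the lower bound. Taking $\P_*=\P_1$ in \req{Georgii} and invoking the Gibbs variational principle (Theorem~\ref{Thm:gvp}, equivalently Theorem~\ref{Thm:Georgii}) in the form $\Srel(\mu_2,u_2)\geq 0$ gives
\[
   \frac{1}{\beta}\,S(\P_1) \,+\, p(\mu_2,u_2) \,-\, \mu_2\rho_1 \,+\, E(u_2,\P_1) \,\geq\, 0.
\]
Substituting the displayed identity for $\frac{1}{\beta}S(\P_1)$, rearranging, and using $E(u_1,\P_1)-E(u_2,\P_1)=-\frac{1}{2}\int_{\R^d}(u_2-u_1)(x)\rho^{(2)}_1(x)\dx$ yields the left-hand inequality in \req{pressurebounds}. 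The upper bound follows by symmetry: repeating the argument with the roles of the two indices interchanged, i.e.\ taking $\P_*=\P_2$ and using $\Srel(\mu_1,u_1)\geq 0$, produces exactly the right-hand inequality.

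For the strictness I would appeal to the sharp form of Theorem~\ref{Thm:gvp}: since $\P_1$ is a $(\mu_1,u_1)$-Gibbs measure, $\Srel(\mu_2,u_2)=0$ holds if and only if $(\mu_2,u_2)=(\mu_1,u_1)$, so the inequality $\Srel(\mu_2,u_2)\geq 0$ is strict precisely when $(\mu_1,u_1)\neq(\mu_2,u_2)$, and analogously for the upper bound. The concluding assertion is then immediate by specializing to $u_1=u_2$ and $\mu_1<\mu_2$: the potential integrals vanish, and chaining the two strict inequalities gives $(\mu_2-\mu_1)\rho_1<(\mu_2-\mu_1)\rho_2$, whence $\rho_1<\rho_2$ after dividing by $\mu_2-\mu_1>0$.

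The only genuine subtlety I anticipate is that the cross energy $E(u_2,\P_1)$ (and, symmetrically, $E(u_1,\P_2)$) need not be finite, so the rearrangement above must be read with the convention that the relative entropy, and hence these terms, may be $+\infty$. This causes no harm: if $E(u_2,\P_1)=+\infty$ then the corresponding right-hand side of \req{pressurebounds} equals $-\infty$ and the (strict) inequality holds trivially, since the pressures are finite by \req{pressure}; only the finite case requires the actual substitution. By contrast, the finiteness of $S(\P_1)$ and $E(u_1,\P_1)$ is essential, and it is exactly this point that is secured by $\P_1$ being a Gibbs measure, just as in the proof of Theorem~\ref{Thm:convexpressure}.
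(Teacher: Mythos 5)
Your proposal is correct and follows essentially the same route as the paper: the paper likewise obtains both bounds by choosing $\P_*=\P_1$ and $\P_*=\P_2$ in the Gibbs variational principle (Theorem~\ref{Thm:gvp}) and deduces strictness from the equality case, merely leaving implicit the algebraic rearrangement via \req{convexpressure-help} and \req{EuP} that you spell out. Your remark on the possible infiniteness of the cross energies is a sensible added precaution that does not change the argument.
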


\begin{proof}
The two inequalities~\req{pressurebounds} follow readily from 
Theorem~\ref{Thm:gvp} by choosing for $\P_*$ the corresponding 
$(\mu_i,u_i)$-Gibbs measures, respectively. They are strict, 
unless $\mu_1=\mu_2$ and $u_1=u_2$.
\end{proof}

\section{The relative entropy functional for fixed density}
\label{Sec:rhofixed}
Returning to the inverse Henderson problem formulated in the introduction 
we now constrain our model ensembles to have the same density
$\rho_*$ as the target ensemble. 
Since the attainable densities for hard-core potentials are bounded we
need to distinguish the case whether 
$u$ is a hard-core potential or not. We focus our analysis
on the latter case and mention the necessary modifications for 
hard-core potentials in Remark~\ref{Rem:hardcore} later in this section.

The first fundamental problem to settle concerns the question whether and how
the prescribed density $\rho_*$ can be attained by some 
$(\mu,u)$-Gibbs measure for a given $u\in\U_0$.
When the chemical potential is sufficiently small,
i.e., when the system is in the \emph{gas phase}
(see Section~\ref{Sec:derivatives} for a specification of this term), 
then it is known that there
is a one-to-one relation between the corresponding chemical potentials 
and the associated densities, and in this case the inverse map 
$\rho\mapsto\mu$ can even be computed by means of 
cluster expansions, cf., e.g., Jansen, Kuna, and Tsagkarogiannis~\cite{JKT19}.
Outside the gas phase, when phase transitions may occur,
the problem becomes more difficult. 
Adopting a method from Chayes and Chayes~\cite{JTC83}
we can establish the following result, which holds for the full range of
possible densities.


\begin{theorem}
\label{Thm:Chayes}
Let $u\in \U_0$ and $\rho_*>0$ be fixed. Then there is a
unique chemical potential $\mu_*=\mu_*(u)\in\R$, for which there exists 
a $(\mu_*,u)$-Gibbs measure with density $\rho_*$.
\end{theorem}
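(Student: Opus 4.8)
The plan is to separate uniqueness from existence: uniqueness will follow from the strict monotonicity of the density in the chemical potential encoded in Corollary~\ref{corr:subd}, while existence will come from the convexity and the asymptotic behaviour of the pressure $p(\cdot,u)$, realizing the prescribed density by a genuine Gibbs measure even at a possible phase transition, following Chayes and Chayes.

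Uniqueness is immediate. If $\rho_*$ were the density of both a $(\mu_1,u)$- and a $(\mu_2,u)$-Gibbs measure with, say, $\mu_1<\mu_2$, then the final assertion of Corollary~\ref{corr:subd} (applied with $u_1=u_2=u$) would give $\rho_*<\rho_*$, a contradiction. Hence at most one admissible $\mu_*$ can exist.

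For existence I would first pin down the shape of the convex function $\mu\mapsto p(\mu,u)$. It is nonnegative by the discussion after \req{pressure}. Since the assumptions \req{potentials} make $u\in\U_0$ stable in the sense of Ruelle, there is $B\geq 0$ with $U(x_1,\dots,x_N)\geq -BN$, so the finite-volume partition function obeys $\Xi_\Lambda\leq\exp\!\bigl(|\Lambda|\,e^{\beta(\mu+B)}\bigr)$ and therefore $p(\mu,u)\leq\tfrac{1}{\beta}e^{\beta(\mu+B)}$. In particular $p(\cdot,u)$ is finite, hence continuous and convex on all of $\R$, and tends to $0$ as $\mu\to-\infty$; a nonnegative convex function with this property cannot have strictly positive slope in the limit, so its asymptotic slope at $-\infty$ is $0$. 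Combined with the superlinear growth \req{Thm:convexpressure} at $+\infty$, this shows that the asymptotic slopes of $p(\cdot,u)$ at $-\infty$ and $+\infty$ are $0$ and $+\infty$, respectively. Fixing $\rho_*>0$ and setting $g(\mu)=p(\mu,u)-\mu\rho_*$, these two asymptotics force $g(\mu)\to+\infty$ as $\mu\to\pm\infty$; being convex and continuous, $g$ attains a global minimum at some $\mu_*\in\R$, at which $\rho_*\in\partial_\mu p(\mu_*,u)=[\rho_-(\mu_*),\rho_+(\mu_*)]$, where $\rho_\pm$ denote the one-sided derivatives. It then remains to realize the value $\rho_*$ by an actual $(\mu_*,u)$-Gibbs measure: I would argue that the extreme densities $\rho_-(\mu_*)$ and $\rho_+(\mu_*)$ are each attained, as local weak limits of $(\mu_k,u)$-Gibbs measures along sequences $\mu_k\uparrow\mu_*$ and $\mu_k\downarrow\mu_*$, and then invoke that the translation invariant $(\mu_*,u)$-Gibbs measures form a convex set on which the density depends affinely, so that every intermediate value, and in particular $\rho_*$, is attained.

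The main obstacle is precisely this last realization step at a possible phase-transition point $\mu_*$, where $p(\cdot,u)$ may fail to be differentiable and $\rho_*$ lies strictly between the two coexisting densities $\rho_-(\mu_*)<\rho_+(\mu_*)$. The delicate points are that the Gibbs property must survive the local limit and, above all, that the density must be \emph{continuous} along this limit; since the particle number is unbounded, continuity of the density is not automatic under local convergence and one must rule out loss of mass by invoking Ruelle-type bounds that are uniform in the approximating chemical potentials $\mu_k$ (which stay in a neighbourhood of $\mu_*$). Away from such transition points $p(\cdot,u)$ is differentiable, $\partial_\mu p(\mu_*,u)=\{\rho_*\}$ is a singleton, and $\rho_*$ is realized directly by any thermodynamic-limit $(\mu_*,u)$-Gibbs measure.
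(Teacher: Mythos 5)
Your proposal is correct and follows essentially the same route as the paper: minimize the convex, coercive function $\mu\mapsto p(\mu,u)-\mu\rho_*$ to locate $\mu_*$, realize the two extreme densities at $\mu_*$ as local limits of $(\mu_k,u)$-Gibbs measures along monotone sequences (which is exactly the role of Lemma~\ref{Lem:gibbscontinuity} and its uniform Ruelle bounds), and then take a convex combination to hit $\rho_*$. The only cosmetic differences are that you phrase the enclosure $\rho(\P^-)\leq\rho_*\leq\rho(\P^+)$ via subdifferentials of the convex pressure and obtain coercivity at $+\infty$ by citing \req{Thm:convexpressure} rather than re-deriving the bound \req{Theta}, and you get uniqueness directly from the last assertion of Corollary~\ref{corr:subd}.
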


\begin{proof}
For every $\mu\in\R$ and the given $u\in\U_0$
let $\P_{\mu,u}$ be a $(\mu,u)$-Gibbs measure,
and denote by $p(\mu)$ and $\rho(\mu)$ the associated pressure and density, 
respectively.

It is well-known, cf.~\cite[Theorem 4.3.1]{Ruel69}, that for small chemical
potentials
the pressure is a differentiable function with
\be{pprime}
   p'(\mu) \,=\, \rho(\mu)\,.
\ee
In other words, if $\rho_*$ is sufficiently small then the corresponding 
chemical potential $\mu_*$ from the formulation of the theorem is given 
as the unique minimum of the function
\be{theta}
   \Theta(\mu) \,=\, p(\mu)-\mu\rho_*\,,
\ee
the latter being strictly convex by virtue of Theorem~\ref{Thm:convexpressure}.
We will proceed by showing that the minimizer of $\Theta$ is also the 
appropriate chemical potential to choose for larger values of $\rho_*$.

To see this we first observe that
\[
   \Theta(\mu) \,\geq\, -\mu\rho_*\,,
\]
because the pressure is nonnegative, whereas
\be{rhostarplus}
   \Theta(\mu) \,\geq\, c_{(1+\eps)\rho_*} + \,\eps\mu\rho_*
\ee
for any suitable $\eps>0$ and corresponding constant
$c_{(1+\eps)\rho_*}$ by virtue of \req{Theta} (with $\rho=(1+\eps)\rho_*$).
This shows that $\Theta$ is bounded from below and that 
\[
   \Theta(\mu) \,\to\, +\infty\,, \qquad \text{whenever $|\mu|\to\infty$}\,.
\]
Therefore $\Theta$ attain its minimum for a uniquely defined value
$\mu=\mu_*$.

For any $\mu\in\R$, $\mu\neq\mu_*$, we now conclude from \req{pressurebounds} 
and \req{theta} that
\[
   (\mu_*-\mu)\rho(\mu) \,<\, p(\mu_*)-p(\mu)
   \,=\, \Theta(\mu_*) + \mu_*\rho_* - \Theta(\mu) - \mu\rho_*\,,
\]
i.e., 
\[
   (\mu_*-\mu)\bigl(\rho(\mu)-\rho_*\bigr) 
   \,<\, \Theta(\mu_*)-\Theta(\mu) \,<\,0\,.
\]
This means that
\be{enclosure}
   \begin{array}{c}
   \rho(\mu) \,<\, \rho_* \qquad \text{for } \ \mu<\mu_*\,, \\[1ex]
   \rho(\mu) \,>\, \rho_* \qquad \text{for } \ \mu>\mu_*\,.
   \end{array}
\ee

Now let $(\mu_k^+)_k$ be a strictly decreasing sequence and 
$(\mu_k^-)_k$ a strictly increasing sequence of chemical potentials,
both of which converge to $\mu_*$
By virtue of Lemma~\ref{Lem:gibbscontinuity} from the appendix there
exist $(\mu_*,u)$-Gibbs measures $\P^-$ and $\P^+$ with densities
\[
   \rho(\P^-) \,=\, \lim_{k\to\infty} \rho(\mu_k^-) \qquad \text{and} \qquad
   \rho(\P^+) \,=\, \lim_{k\to\infty} \rho(\mu_k^+)\,.
\]
The inequalities~\req{enclosure} imply that
\[
   \rho(\P^-) \,\leq\, \rho_* \,\leq\, \rho(\P^+)\,,
\]
and hence, there is some $t\in[0,1]$, for which
$\P=t\P^-+(1-t)\P^+$ has density
\[
   \rho(\P) \,=\, t\rho(\P^-) + (1-t)\rho(\P^+) \,=\, \rho_*\,.
\]
Since the set of $(\mu_*,u)$-Gibbs measures is convex,
$\P$ has all the desired properties from the statement of this
theorem, and the proof is done.
\end{proof}

In the light of the above theorem we can now restrict our attention to
$(\mu,u)$-Gibbs measures with density $\rho_*$, i.e., with
chemical potential $\mu=\mu_*(u)$, when looking at the relative entropy
functional. Further, we drop the constant offset $S_*$ in \req{Georgii},
as it is independent of $u$. This leads to the functional
\be{Phi}
    \Phi(u) = p_*(u) - \mu_*(u) \rho_* + E(u,\P_*)
\ee
for $u\in\U_0$, where we have set
\be{pstar}
   p_*(u) \,=\, p(\mu_*(u),u)
\ee
for brevity. By a slight abuse of wording we will keep calling $\Phi$
the relative entropy functional.
As we will see next, although $p_*$ may fail to be convex,
in general, the functional $\Phi$ is strictly convex, again.

\begin{theorem}
\label{Thm:stillstrictlyconvex}
        Let $\P_*$ be as in Theorem~\ref{Thm:Georgii}.
	    Then the functional $\Phi :\U_0 \to \R$ of \req{Phi} is  strictly convex
	    as long as the specific interaction enery $E(u,\P_*)$ is finite.
\end{theorem}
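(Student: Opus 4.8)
The plan is to exploit the variational representation of the first two terms of $\Phi$ that is already implicit in the proof of Theorem~\ref{Thm:Chayes}. There, for fixed $u\in\U_0$, the chemical potential $\mu_*(u)$ was characterized as the \emph{unique minimizer} of the strictly convex function $\mu\mapsto p(\mu,u)-\mu\rho_*$. Consequently
\[
   p_*(u)-\mu_*(u)\rho_* \,=\, \min_{\mu\in\R}\bigl(p(\mu,u)-\mu\rho_*\bigr)\,=:\,g(u)\,.
\]
Since the pressure is strictly convex in $(\mu,u)$ by Theorem~\ref{Thm:convexpressure} and the offset $-\mu\rho_*$ is affine, the function $h(\mu,u)=p(\mu,u)-\mu\rho_*$ is strictly jointly convex on $\R\times\U_0$ (the latter set being convex by Remark~\ref{Rem:convex}). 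The heart of the argument is therefore the claim that partial minimization of a strictly jointly convex function, \emph{with the minimum attained}, produces a strictly convex function of the remaining variable $u$.

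I would prove this claim directly. Fix $u_1\neq u_2$ in $\U_0$ and $t\in(0,1)$, put $u=tu_1+(1-t)u_2$, and write $\mu_i=\mu_*(u_i)$, so that $g(u_i)=h(\mu_i,u_i)$. Using $t\mu_1+(1-t)\mu_2$ as a trial chemical potential for $u$ and invoking strict joint convexity of $h$ at the two \emph{distinct} points $(\mu_1,u_1)$ and $(\mu_2,u_2)$ — distinct because $u_1\neq u_2$ — gives
\[
   g(u)\,\leq\, h\bigl(t\mu_1+(1-t)\mu_2,\,u\bigr)
   \,<\, t\,h(\mu_1,u_1)+(1-t)\,h(\mu_2,u_2)\,=\,t\,g(u_1)+(1-t)\,g(u_2)\,.
\]
Here Theorem~\ref{Thm:Chayes} is exactly what guarantees that the minimizers $\mu_i$ exist and are finite, and that $p_*(u_i)$ is finite, so every quantity above is real. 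This establishes the strict convexity of $g$.

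Finally I would assemble $\Phi=g+E(\cdot,\P_*)$. The specific interaction energy is linear in $u$ — the same linearity already used in the proof of Theorem~\ref{Thm:convexpressure} — so on the convex set of potentials where $E(\cdot,\P_*)$ is finite one has $E(u,\P_*)=t\,E(u_1,\P_*)+(1-t)\,E(u_2,\P_*)$. Adding this affine term to the strictly convex $g$ preserves strict convexity, and $\Phi(u)<t\,\Phi(u_1)+(1-t)\,\Phi(u_2)$ follows. I expect the partial-minimization step to be the only real obstacle: one must ensure that the inner minimum over $\mu$ is genuinely \emph{attained}, so that the optimal $\mu_i$ can be substituted into the strict convexity inequality, rather than merely approached. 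This is precisely what Theorem~\ref{Thm:Chayes} delivers; without attainment the standard partial-minimization lemma would yield only convexity, not its strict form, which also explains why $p_*$ alone — obtained by composing $p$ with the nonlinear map $\mu_*(\cdot)$ rather than by minimization — need not be convex.
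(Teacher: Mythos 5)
Your argument is correct, but it takes a genuinely different route from the paper's. The paper proves strict convexity of $u\mapsto p_*(u)-\mu_*(u)\rho_*$ by applying the two-sided subgradient bounds of Corollary~\ref{corr:subd} at the intermediate potential $u=tu_1+(1-t)u_2$, using the pair correlation function of the $(\mu_*(u),u)$-Gibbs measure from Theorem~\ref{Thm:Chayes}; when the two resulting strict inequalities are combined with weights $t$ and $1-t$, the integrals against $\rho^{(2)}$ cancel exactly and only the chemical-potential terms survive, giving \req{phiconvex}. You instead observe that $p_*(u)-\mu_*(u)\rho_*$ is the partial minimum over $\mu$ of $h(\mu,u)=p(\mu,u)-\mu\rho_*$ --- this identification is indeed available, since the proof of Theorem~\ref{Thm:Chayes} characterizes $\mu_*(u)$ as the unique minimizer of \req{theta} --- and then invoke the strict joint convexity of $h$ from Theorem~\ref{Thm:convexpressure} together with the standard fact that attained partial minimization of a strictly jointly convex function is strictly convex in the remaining variable. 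Your key observation that the points $(\mu_1,u_1)$ and $(\mu_2,u_2)$ are distinct as soon as $u_1\neq u_2$, so that strictness survives even if $\mu_*(u_1)=\mu_*(u_2)$, is exactly right, as is your remark that attainment of the inner minimum is what upgrades plain convexity to strict convexity. The two proofs rest on the same foundation (both Theorem~\ref{Thm:convexpressure} and Corollary~\ref{corr:subd} are consequences of the Gibbs variational principle), but yours is more conceptual and shorter, while the paper's explicit computation with \req{pressurebounds} makes visible the cancellation of the pair-correlation terms and stays closer to the objects ($\rho^{(2)}$, $\mu_*$) used in the later differentiability analysis. The final step --- adding the linear, finite term $E(\cdot,\P_*)$ --- is handled identically in both.
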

\begin{proof}
Let $u_1$ and $u_2$ be two different pair potentials from $\U_0$. 
For any fixed $0<t<1$ define
\[
   u \,=\, tu_1 \,+\, (1-t)u_2\,,
\]
and, as in Theorem~\ref{Thm:Chayes}, denote by $\mu=\mu_*(u)$ 
the chemical potential associated with $u$ and density $\rho_*$. 
Finally, let $\rho^{(2)}$  be
the pair correlation function of the associated $(\mu,u)$-Gibbs measure 
constructed in Theorem~\ref{Thm:Chayes}.

Then we obtain from \req{pstar} and \req{pressurebounds} 
-- with $\mu_i=\mu_*(u_i)$ for $i=1,2$ -- that
\begin{align*}
   &p_*(u) - t p_*(u_1)- (1-t)p_*(u_2) \\[1ex]
   &\quad =\, t \bigl(p_*(u)-p_*(u_1)\bigr)
               \,+\, (1-t)\bigl(p_*(u)-p_*(u_2)\bigr)
               \\[1ex]
   &\quad < \,t\left((\mu-\mu_1)\rho_*
                 \,-\, \frac{1-t}{2}\int_{\mathbb{R}^d} 
                           (u_2-u_1)(x)\rho^{(2)}(x) \dx\right) \\
   &\quad\phantom{=}\,
      + (1-t)\left((\mu -\mu_2)\rho_*
        \,-\, \frac{t}{2}\int_{\mathbb{R}^d} (u_1-u_2)(x)\rho^{(2)}(x)\dx
        \right)  \\[1ex]
   &\quad =\,\mu\rho_* \,-\ t\mu_1\rho_* \,-\, (1-t)\mu_2\rho_*\,.
	    \end{align*}
Note that this inequality is strict because $u$ is different from $u_1$ and
$u_2$ by construction. Reordering terms we thus arrive at
\be{phiconvex}
   p_*(u)- \mu\rho_*
   \,<\, t \bigl(p_*(u_1)-\mu_1\rho_*\bigr)
            \,+\,(1-t)\bigl(p_*(u_2)-\mu_2\rho_*\bigr)\,.
\ee
It thus follows from \req{Phi} and \req{phiconvex} that if the specific
interaction energy $E(\,\cdot\,,\P_*)$ stays finite then $\Phi$ is strictly
convex, because $E(\,\cdot\,,\P_*)$ is linear in the first argument.
\end{proof}

Theorem~\ref{Thm:stillstrictlyconvex} implies that the relative entropy
functional has at most one local minimizer, which is then also a global one. 
Concerning this minimizer we have the following result.

\begin{theorem}
\label{Thm:Phimin}
Let $\P_*$ be a $(\mu,u_*)$-Gibbs measure for some $u_*\in\U_0$ 
and $\mu\in\R$, and let $\rho_*$ be its density. 
Then the relative entropy function~\req{Phi} attains its minimum for $u=u_*$.
\end{theorem}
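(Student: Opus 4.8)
The plan is to recognize the functional $\Phi$ of \req{Phi} as nothing but the full relative entropy $\Srel(\mu,u)$ of \req{Georgii} restricted to the density-one-parameter curve $\mu=\mu_*(u)$, and then to read off the conclusion from Georgii's Gibbs variational principle (Theorem~\ref{Thm:Georgii}). Concretely, comparing \req{Phi} and \req{pstar} with \req{Georgii} I would record the identity
\[
   \Phi(u) \,=\, \frac{1}{\beta}\,\Srel(\mu_*(u),u) \,-\, \frac{1}{\beta}\,S_*\,,
\]
valid for every $u\in\U_0$. Since $S_*=S(\P_*)$ does not depend on $u$, and is finite because $\P_*$ is a Gibbs measure (as noted after \req{Georgii}), minimizing $\Phi$ over $\U_0$ is the same as minimizing the map $u\mapsto\Srel(\mu_*(u),u)$.

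Next I would evaluate at the candidate minimizer $u=u_*$. By hypothesis $\P_*$ is a $(\mu,u_*)$-Gibbs measure with density $\rho_*$, so $\mu$ is a chemical potential for which a $(\mu,u_*)$-Gibbs measure has the prescribed density $\rho_*$; the uniqueness statement of Theorem~\ref{Thm:Chayes} then forces $\mu_*(u_*)=\mu$. Consequently, Theorem~\ref{Thm:Georgii} gives $\Srel(\mu_*(u_*),u_*)=\Srel(\mu,u_*)=0$, since $\P_*$ is a $(\mu,u_*)$-Gibbs measure, and therefore $\Phi(u_*)=-\tfrac{1}{\beta}S_*$.

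Finally, for an arbitrary $u\in\U_0$, the nonnegativity part of Theorem~\ref{Thm:Georgii} yields $\Srel(\mu_*(u),u)\geq 0$, whence
\[
   \Phi(u) \,=\, \frac{1}{\beta}\,\Srel(\mu_*(u),u) \,-\, \frac{1}{\beta}\,S_*
   \,\geq\, -\frac{1}{\beta}\,S_* \,=\, \Phi(u_*)\,,
\]
which shows that $u_*$ is a minimizer of $\Phi$ (and, combined with the strict convexity of Theorem~\ref{Thm:stillstrictlyconvex}, the unique one whenever the specific interaction energy stays finite). I do not expect a genuine obstacle here: the argument is essentially bookkeeping built on the identification of $\Phi$ with $\Srel$ along $\mu=\mu_*(u)$. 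The only two points requiring care are the identification $\mu_*(u_*)=\mu$ through the uniqueness in Theorem~\ref{Thm:Chayes}, and the observation that $S_*$ is finite so that the constant offset $-\tfrac{1}{\beta}S_*$ is a legitimate finite value; both are immediate from the standing assumption that $\P_*$ is a Gibbs measure.
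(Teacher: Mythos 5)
Your proposal is correct and is essentially the paper's own argument: both identify $\Phi(u)=\tfrac{1}{\beta}\Srel(\mu_*(u),u)-\tfrac{1}{\beta}S_*$, use the uniqueness in Theorem~\ref{Thm:Chayes} to conclude $\mu_*(u_*)=\mu$, and then invoke the Gibbs variational principle. The only cosmetic difference is that the paper obtains the strict inequality $\Phi(u)>\Phi(u_*)$ for $u\neq u_*$ directly from Theorem~\ref{Thm:gvp}, whereas you settle for $\geq$ and recover uniqueness of the minimizer from the strict convexity of Theorem~\ref{Thm:stillstrictlyconvex}; both suffice for the stated claim.
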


\begin{proof}
The given Gibbs measure $\P_*$ has finite specific entropy $S_*$, and hence,
\req{Georgii} implies that
\be{Thm:Phimin-help}
   \Phi(u) \,=\, \frac{1}{\beta}\,\Srel\bigl(\mu_*(u),u\bigr)
                 \,-\, \frac{1}{\beta}\,S_*
\ee
for every $u\in\U_0$. Since $\P_*$ has density $\rho_*$ the chemical
potential $\mu$ associated with $\P_*$ must be given by $\mu=\mu_*(u_*)$
according to Theorem~\ref{Thm:Chayes}. From \req{Thm:Phimin-help} and the 
Gibbs variational principle 
(Theorem~\ref{Thm:Georgii} and Theorem~\ref{Thm:gvp}) therefore follows that
\be{Thm:Phimin}
   \Phi(u_*) \,=\, -\,\frac{1}{\beta}\,S_*
   \,<\, \frac{1}{\beta}\,\Srel\bigl(\mu_*(u),u\bigr) 
         \,-\, \frac{1}{\beta}\,S_* \,=\, \Phi(u)
\ee
for every $u\in\U_0\setminus\{u_*\}$, and this was to be shown.
\end{proof}

Note from \req{Thm:Phimin} that the minimal value of $\Phi$ depends on
the specific entropy of the target Gibbs measure, and is therefore unknown
in general.

\begin{remark}
\label{Rem:hardcore}
\rm
For a hard-core potential $u\in\U$ the density $\rho$ of any 
associated $(\mu,u)$-Gibbs measure is bounded from above by
a finite \emph{closest-packing density} $\rcp=\rcp(u)$: This bound only 
depends on the hard-core radius $r_0$ of \req{potentials} and is a decreasing 
function of $r_0$, cf.~\cite[Section 7]{HGEO94}.
We formally set $\rcp(u)=+\infty$ for $u\in\U_0$.

If the given density $\rho_*$ of the target happens to be below $\rcp(u)$
for a given $u\in\U$, then Theorem~\ref{Thm:Chayes} is still valid for this
pair potential; 
to establish \req{rhostarplus} in its proof, $\eps>0$ must 
be so small that $(1+\eps)\rho_*$ is still below $\rcp(u)$, because this
is needed for \cite[Lemma~7.1]{HGEO94}, and hence, for \req{Theta}.
The relative entropy functional~\req{Phi}, however, is only 
well-defined on the domain
\[
   \dom\Phi \,=\, \{\,u\in\U:\, \rcp(u) > \rho_*\,\}\,,
\]
but this is again a convex set, compare Remark~\ref{Rem:convex},
and $\Phi$ happens to be strictly convex on $\dom\Phi$. 
Theorem~\ref{Thm:Phimin} extends literally to every $u_*\in\U$ with
this understanding of the domain of $\Phi$.
\fin
\end{remark}

Theorem~\ref{Thm:Phimin} is the basis for a variational setting of the
inverse Henderson problem: If the density and the pair correlation function
of a $(\mu_*,u_*)$-Gibbs measure target 
$\P_*$ are given, then the unique minimizer of the strictly convex 
relative entropy functional~\req{Phi} with $E(u,\P_*)$ of \req{EuP}
yields the corresponding pair potential $u_*$, 
i.e., the solution of the inverse Henderson problem. 
Nevertheless, this approach also has some pitfalls
as discussed in the following remark.

\begin{remark}
\label{Rem:Phimin}
\rm
If the target $\P_*$ fails to be some $(\mu,u)$-Gibbs measure
then it is not clear to us whether the relative entropy functional $\Phi$ 
will still be bounded from below on $\U$,
and even if it may, its infimum need not be attained on $\U$.

Vice versa, if $u_*\in\U$ happens to be the miminizer of $\Phi$, 
then this does not imply that $\P_*$ is a $(\mu_*(u_*),u_*)$-Gibbs measure. 
To see the latter, consider the following example: 
Let $u_*$ be any hard-core potential in $\U$ and $\rho_*<\rcp(u_*)$;
compare Remark~\ref{Rem:hardcore}.
Then, provided $\rho_*$ is sufficiently small,
a result by Kuna, Lebowitz, and Speer~\cite[Corollary~4.3]{Kuna7} states
that there exist uncountably many distinct 
translation invariant probability measures with finite specific entropy
and density $\rho_*$, which share the pair correlation function $\pcf_*$
with the $(\mu_*(u_*),u_*)$-Gibbs measure $\P$, but $\P$ is the only 
$(\mu,u)$-Gibbs measure among them according to the Henderson theorem.
Because the specific interaction energy is given by \req{EuP} for all of these
probability measures, the relative entropy functional does not
differ, and hence, $u_*$ is its unique minimizer, regardless which of them
has been the target $\P_*$.

Finally we mention that even if the target $\P_*$ is a 
$(\mu,u_*)$-Gibbs measure,
and hence, the functional~\req{Phi} is minimized by $u_*$ according to
Theorem~\ref{Thm:Phimin}, then this does not seem to imply that the model 
and the target have the same pair correlation function. 
To illustrate this, imagine that a fluid corresponding to a pair potential
$u_*\in\U$ exhibits a so-called \emph{triple point} 
(compare, e.g., \cite{HaMcD13}), 
where three different phases coexist at the same thermodynamical state point,
i.e., for the same values of pressure (or chemical potential $\mu_*$, say) 
and temperature (or inverse temperature $\beta$). It can be expected
that the different phases have linearly independent pair correlation
functions; taking convex combinations of the corresponding Gibbs measures
one can thus determine two $(\mu_*,u_*)$-Gibbs measures which 
have the same density $\rho_*$, but have different pair correlation functions.
One of them could be the target and the other one the minimizing model. 
We will see in Section~\ref{Sec:derivatives} that if the density $\rho_*$
belongs to the gas phase of the minimizer of $\Phi$ then the pair
correlation function of the minimizing Gibbs measure always coincides
with the given $\pcf_*$;
see Remark~\ref{Rem:Phimin2}.
\fin
\end{remark}

\section{Differentiability of the relative entropy functional}
\label{Sec:derivatives}
In the remainder of this paper we further analyze the case when $\Phi$
is a differentiable function of $u$. For this we build upon our earlier
work~\cite{Hank18a,Hank18b}.\footnote{The results in \cite{Hank18a,Hank18b} 
are formulated in three-dimensional space (i.e., for $d=3$) and for 
pair potentials, which are rotation invariant;
also, hard-core potentials had not been considered.
All results extend verbatim to the setting considered here.}
A conceptual difficulty in this context is 
the fact that $\U$ lacks a universal topology. Therefore,
following \cite{Hank18a}, we consider a tailored neighborhood for any 
given $u_0\in\U$ by 
introducing a corresponding Banach space $\V$ of \emph{perturbations}, 
consisting of all even measurable functions $v\colon\R^d\to\R$,
for which the associated norm
\be{Vd}
   \norm{v}_\V \,=\, \sup_{x\in\R^d} \frac{|v(x)|}{\psi_0(|x|)} 
\ee
is finite. Here, $\psi_0$ is the majorant $\psi$ of \req{varphi,psi} 
associated with $u_0$. Note that the norm~\req{Vd} is somewhat stronger than 
the one employed in \cite{Hank18a} because it is more restrictive in the
core region $0<r< r_0$, and hence,
the resulting space of perturbations is smaller.
We believe that this restriction allows a more natural formulation 
of our results.

\begin{remark}
\rm
It can easily be seen that for any  $v\in\V$ the sum $u_0+v$ belongs to $\U$:
If $u_0$ is a hard-core potential, then $r_0$ is its hard-core radius and
the corresponding function $\varphi$ from \req{varphi,psi} equals $+\infty$,
and we have
\[
\begin{array}{ll}
    \phantom{\bigl|\bigr|}
    u_0(x)+v(x) \,\geq\, \varphi(|x|)
    &\quad\text{ for } 0 < |x| < r_0\,, \\[1ex]
    \bigl|u_0(x)+v(x)\bigr| \,\leq\, \bigl(1\,+\,\norm{v}_\V\bigr)\psi_0(|x|)
    &\quad\text{ for } |x| \geq r_0\,.
\end{array}
\]
Otherwise, there exists $r_1\in(0,r_0]$ such that 
$\varphi(r) \geq \norm{v}_\V\psi_0(0)$ for $0 < r \leq r_1$, and therefore
\[
\begin{array}{ll}
    \phantom{\bigl|\bigr|}
    u_0(x)+v(x) \,\geq\, \varphi(|x|) - \norm{v}_\V\psi_0(0) 
    &\quad\text{ for } 0 < |x| < r_1\,, \\[1ex]
    \bigl|u_0(x)+v(x)\bigr| \,\leq\, 
    {\displaystyle
    \Bigl(\sup_{|x'|\geq r_1}\frac{|u(x')|}{\psi_0(|x'|)}
          \,+\,\norm{v}_\V\Bigr)\psi_0(|x|)}
    &\quad\text{ for } |x| \geq r_1\,.
\end{array}
\]
In either case this shows that $u_0+v\in\U$.
\fin
\end{remark}

We now consider the ball
\be{BV}
   \BV(u_0) \,=\, \bigl\{ u=u_0+v\,:\, \norm{v}_\V<\delta_0\bigr\}
   \,\subset\,\U
\ee
around $u_0$ for a suitable radius $\delta_0$ with $0<\delta_0<1$;
we further specify $\delta_0$ in the context of \req{rho0} below. 
As has been verified in \cite[Proposition~2.1]{Hank18a}, 
there are positive constants $c_\beta$ and $B$,
such that every $u\in\BV(u_0)$ satisfies
\be{cbeta}
   \int_{\R^d} |e^{-\beta u(x)}-1|\dx \,\leq\, \cbeta\,,
\ee
and the interaction energy of every configuration $\gamma\subset\R^d$ with 
$N(\gamma)$ particles is bounded from below by
\be{B}
    U(\gamma) \,\geq\, -B\thsp N(\gamma)\,.
\ee
Moreover, for every $u\in\BV(u_0)$ and $\mu\in\R$ and every associated
$(\mu,u)$-Gibbs measure the sequence
\[
   \bfrho(\mu,u)\,=\,[\rho,\rho^{(2)},\dots]^T
\]
of its 
$m$-particle correlation functions is a solution of the so-called
\emph{Kirkwood-Salsburg equations} 
(cf., e.g., \cite[Corollary~5.3]{Ruel70}, or compare~\req{KS2} in the appendix),
which can be written in the form
\be{KS}
   \bigl(I-e^{\beta\mu} A(u)\bigr)\bfrho(\mu,u) \,=\, e^{\beta\mu} \bfe\,,
\ee
where $\bfe=[1,0,\dots]^T$, $I$ is the identity operator,
and $A(u)$ is a bounded operator in an associated 
Banach space $\X$ of sequences of $L^\infty$ functions. 
This operator $A(u)$ is given by a 
certain infinite-dimensional matrix of integral operators,
compare~\cite{Ruel69,Hank18a}, and is Fr\'echet differentiable with respect to
$u\in\BV(u_0)$.

We fix
\be{mu0}
   \mu_0 \,<\, \frac{1}{\beta}\,\log\frac{1}{c_\beta e^{2\beta B+1}}\,,
\ee
and call the range $\mu\leq\mu_0$ of chemical potentials the \emph{gas phase} 
associated with the pair potentials in $\BV(u_0)$. 
In this gas phase the Kirkwood-Salsburg equations~\req{KS} have a unique 
solution $\bfrho(\mu,u)$, which can be
developed in a converging Neumann series in $\X$; in particular, this means
that for $u\in\BV(u_0)$ there exists only one $(\mu,u)$-Gibbs measure for each 
chemical potential $\mu$ within the gas phase.
The individual components $\rho^{(m)}(\mu,u)$ of $\bfrho(\mu,u)$ are
Fr\'echet differentiable (in $L^\infty$) with respect to $u$, 
analytic with respect to $\mu<\mu_0$ and continuous in $\mu\leq\mu_0$.
It is also easy to see that $\bfrho$ is a $C^1$ function of $\mu\leq\mu_0$
and $u\in\BV(u_0)$.

Further, for fixed $u\in\BV(u_0)$ the density $\rho=\rho(\mu,u)$ 
of the associated Gibbs measure, i.e., the first entry of $\bfrho(\mu,u)$, 
is differentiable and strictly increasing in $\mu$ up to $\mu=\mu_0$ 
by virtue of Corollary~\ref{corr:subd}. 
Since it is also continuous in $u\in\BV(u_0)$ we can choose
$\delta_0$ in \req{BV} so small that
\be{rho0}
   \rho_0
   \,=\, \inf \bigl\{ \rho(\mu_0,u)\,:\, u\in\BV(u_0)\bigr\} 
   \,\geq\, (1-\eps)\,\rho(\mu_0,u_0)
\ee
for any given $\eps>0$, and then every density $\rho\in(0,\rho_0)$ belongs
to the gas phase of all pair potentials in $\BV(u_0)$.
Finally, for fixed $u\in\BV(u_0)$, the pressure $p(\mu,u)$ is also 
differentiable and strictly increasing in $\mu$ up to $\mu=\mu_0$, and  
its derivative is given by the density, cf.~\req{pprime}.


\begin{proposition}
\label{Prop:mustardiff}
Let $0<\rho_*<\rho_0$ with $\rho_0$ as in \req{rho0}. 
Then the chemical potential $\mu_*=\mu_*(u)$
defined in Theorem~\ref{Thm:Chayes} is differentiable with respect to 
$u\in\BV(u_0)$ with derivative $\mu_*'(u)\in\V'$ given by
\be{mustardiff}
    \mu_*'(u)\,v 
    \,=\, - \frac{\partial_u\rho(\mu_*(u),u)\,v}
                 {\partial_\mu\rho(\mu_*(u),u)}\qquad
    \text{for $v\in\V$},
\ee
where $\partial_\mu\rho$ and $\partial_u\rho$ denote the 
partial derivatives of $\rho(\mu,u)$, and $\V'$ is the dual space of $\V$.
\end{proposition}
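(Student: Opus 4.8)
The plan is to realize $\mu_*$ through the implicit function theorem in Banach spaces, applied to the density constraint that defines it. First I would set
\[
   F(\mu,u) \,=\, \rho(\mu,u) - \rho_*
\]
on the open set of pairs with $\mu<\mu_0$ and $u\in\BV(u_0)$, so that $\mu_*(u)$ is precisely the solution of $F(\mu_*(u),u)=0$. The preceding discussion already records that $\bfrho$, and hence its first component $\rho$, is a $C^1$ function of $(\mu,u)$ there; consequently $F$ is $C^1$, with partial derivatives $\partial_\mu F=\partial_\mu\rho\in\R$ and $\partial_u F=\partial_u\rho\in\V'$, the latter because the components of $\bfrho$ are Fr\'echet differentiable in $u$.

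Before invoking the theorem I would verify two points. The first is that $\mu_*(u)$ lies in the interior of the gas phase, that is $\mu_*(u)<\mu_0$; this is where the standing hypothesis $\rho_*<\rho_0$ is used, since \req{rho0} together with the strict monotonicity of $\rho$ in $\mu$ gives $\rho(\mu_0,u)\geq\rho_0>\rho_*=\rho(\mu_*(u),u)$, whence $\mu_*(u)<\mu_0$. The second, and the genuine obstacle, is the non-degeneracy of the linearization in the $\mu$-direction, namely $\partial_\mu\rho(\mu_*(u),u)\neq0$; here strict monotonicity (Corollary~\ref{corr:subd}) only yields $\partial_\mu\rho\geq0$, so I would extract strict positivity from the Neumann series $\bfrho(\mu,u)=e^{\beta\mu}\bigl(I-e^{\beta\mu}A(u)\bigr)^{-1}\bfe$. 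Writing $z=e^{\beta\mu}$ and $R=(I-zA)^{-1}$, a short computation gives $\partial_z\bfrho=R^2\bfe$, whose first component is $1+O(z)$; the smallness of the activity enforced by \req{mu0} keeps this component positive throughout the gas phase, so that $\partial_\mu\rho=\beta z\,[R^2\bfe]_1>0$.

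With both points secured, the implicit function theorem shows that $\mu_*$ is (Fr\'echet) differentiable on $\BV(u_0)$. Differentiating the identity $\rho(\mu_*(u),u)=\rho_*$ along a direction $v\in\V$ and applying the chain rule yields
\[
   \partial_\mu\rho(\mu_*(u),u)\,\mu_*'(u)v \,+\, \partial_u\rho(\mu_*(u),u)\,v \,=\, 0\,,
\]
which rearranges to the claimed formula \req{mustardiff}. Since $\partial_u\rho(\mu_*(u),u)$ is a bounded linear functional on $\V$ and division by the nonzero scalar $\partial_\mu\rho(\mu_*(u),u)$ preserves this boundedness, $\mu_*'(u)$ indeed lies in $\V'$. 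The hard part of the argument is thus the strict positivity of $\partial_\mu\rho$; everything else is the routine bookkeeping of the Banach-space implicit function theorem together with the regularity of $\bfrho$ already established.
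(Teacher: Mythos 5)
Your overall strategy---realizing $\mu_*$ via the implicit function theorem applied to the constraint $\rho(\mu_*(u),u)=\rho_*$, after checking that $\mu_*(u)<\mu_0$ and that $\partial_\mu\rho(\mu_*(u),u)\neq 0$---is exactly the paper's route, and you correctly identify the non-degeneracy of $\partial_\mu\rho$ as the one non-routine step. The problem is that your argument for that step does not work. From $\bfrho=e^{\beta\mu}(I-e^{\beta\mu}A(u))^{-1}\bfe$ you correctly compute $\partial_z\bfrho=R^2\bfe$ with $R=(I-zA)^{-1}$, but the claim that its first component ``$1+O(z)$'' stays positive \emph{throughout} the gas phase is unjustified. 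The condition \req{mu0} only guarantees $z\,\|A\|<1$, i.e.\ convergence of the Neumann series; it does not make $z\,\|A\|$ small. Since $[R^2\bfe]_1=1+\sum_{n\ge1}(n+1)z^n[A^n\bfe]_1$ and the Kirkwood--Salsburg operator is not positivity-preserving (the Mayer function changes sign), the tail is a signed series whose crude bound $\sum_{n\ge1}(n+1)(z\|A\|)^n$ blows up as $z\|A\|\to 1^-$. At a value $\mu_*(u)$ close to $\mu_0$ your estimate therefore gives no control, and only $\partial_\mu\rho\ge 0$ survives from monotonicity---which is not enough for the implicit function theorem.

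The paper closes this gap differently: in the gas phase the pressure is a differentiable function $\pi$ of the density (Ruelle), so the chain rule combined with $\partial_\mu p=\rho$ gives
\[
   \rho_* \,=\, \partial_\mu p\bigl(\mu_*(u),u\bigr)
   \,=\, \partial_\rho\pi(\rho_*,u)\,\partial_\mu\rho\bigl(\mu_*(u),u\bigr)\,,
\]
and since $\rho_*>0$ and $\partial_\rho\pi(\rho_*,u)$ is a finite number, this forces $\partial_\mu\rho(\mu_*(u),u)>0$ with no smallness assumption on the activity. If you replace your Neumann-series estimate by this argument (or by some other proof that the compressibility $\rho+\int_{\R^d}\omega^{(2)}\dx$ is strictly positive), the rest of your write-up---the $C^1$ regularity of $\rho$ in $(\mu,u)$, the verification that $\mu_*(u)<\mu_0$ from \req{rho0}, the chain-rule identity, and the conclusion $\mu_*'(u)\in\V'$ because $\partial_u\rho\in\V'$---is correct and matches the paper.
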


\begin{proof}
Since the pressure and the density of a fixed pair potential $u\in\BV(u_0)$
are strictly increasing and differentiable functions of the chemical potential
in the gas phase, we can rewrite each of these functions as a strictly 
increasing function of any of the other ones. The pressure, for example,
can be written as a function of density, which we call $\pi$ to avoid any
confusion, i.e.,
\[
   p(\mu,u) \,=\, \pi\bigl(\rho(\mu,u),u\bigr)\,.
\]
Then the chain rule can be applied to obtain
\be{chainrule}
   \partial_\mu p(\mu,u) \,=\, 
   \partial_\rho\pi\bigl(\rho(\mu,u),u\bigr)
   \partial_\mu \rho(\mu,u)\,,
\ee
because it has been shown in \cite[Theorem~4.3]{Ruel70} 
that $\pi$ is a differentiable function of the density in the gas phase.
For $\mu=\mu_*(u)$ we thus conclude from \req{pprime} and \req{chainrule} that
\be{alphaF}
   \rho_* \,=\, \rho\bigl(\mu_*(u),u\bigr) 
   \,=\, \partial_\mu p\bigl(\mu_*(u),u\bigr)
   \,=\, \alpha\,\partial_\mu \rho\bigl(\mu_*(u),u\bigr)
\ee
with 
\[
   \alpha \,=\, \partial_\rho \pi(\rho_*,u)\,>\, 0\,.
\]

It thus follows from \req{alphaF} that
$\partial_\mu\rho\bigl(\mu_*(u),u\bigr)>0$, so that  
the implicit function theorem is applicable to the equation
\[
   \rho(\mu_*(u+v),u+v) \,=\, \rho_* 
\]
near $v=0$.
From this we readily obtain \req{mustardiff};
moreover, $\mu_*'(u)$ belongs to $\V'$ because $\partial_u\rho\in\V'$.
\end{proof}

Now we return to our analysis of the relative entropy functional~\req{Phi}.

\begin{theorem}
\label{Thm:Phiprime}
Let $u_0\in\U$ be a fixed pair potential and $\BV(u_0)\subset\U$ 
be defined as in \req{BV} with $\delta_0$ so small that \req{rho0} holds.
Furthermore, let the target $\P_*$ of the relative entropy functional
have density $\rho_*<\rho_0$, and assume that the associated
specific interaction energy $E(\,\cdot\,,\P_*)$ 
is bounded on $\BV(u_0)$.
Then $\Phi$ is
Fr\'echet-differentiable in every $u\in\BV(u_0)$ with derivative 
$\Phi'(u)\in\V'$, given by
\be{Phiprime}
    \Phi'(u)\thsp v \,=\, E(v,\P_*) \,-\, \frac{1}{2}\int_{\R^d}v(x)\pcf(x)\dx
\ee
for $v\in\V$,
where $\rho^{(2)}$ is
the pair correlation function of the associated $(\mu_*(u),u)$-Gibbs measure.
\end{theorem}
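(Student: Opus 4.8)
The plan is to exploit the decomposition \req{Phi}, namely $\Phi(u)=p_*(u)-\mu_*(u)\rho_*+E(u,\P_*)$, and to differentiate it summand by summand. The last term is linear in $u$, and by the boundedness hypothesis on $E(\,\cdot\,,\P_*)$ together with its linearity in the first argument, the map $v\mapsto E(v,\P_*)$ is a bounded linear functional on $\V$; this is precisely the first contribution to \req{Phiprime} and needs no further argument. Everything therefore reduces to differentiating $F(u)=p\bigl(\mu_*(u),u\bigr)-\mu_*(u)\rho_*$, and I expect the entire $\mu$-dependence to cancel.

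The key observation is that, by the defining property of $\mu_*$ in Theorem~\ref{Thm:Chayes}, both $(\mu_*(u),u)$ and $(\mu_*(u+v),u+v)$ are gas-phase Gibbs measures of the \emph{same} density $\rho_*$. I would therefore insert $(\mu_1,u_1)=(\mu_*(u),u)$ and $(\mu_2,u_2)=(\mu_*(u+v),u+v)$ into the two-sided estimate \req{pressurebounds} of Corollary~\ref{corr:subd}. Since $\rho_1=\rho_2=\rho_*$ and $u_2-u_1=v$, both chemical-potential terms equal $\rho_*\bigl(\mu_*(u+v)-\mu_*(u)\bigr)$; subtracting this common quantity, which is exactly the increment of $-\mu_*(\,\cdot\,)\rho_*$, collapses the estimate to the clean enclosure
\[
   -\tfrac12\int_{\R^d} v(x)\,\rho^{(2)}_u(x)\dx
   \,\leq\, F(u+v)-F(u) \,\leq\,
   -\tfrac12\int_{\R^d} v(x)\,\rho^{(2)}_{u+v}(x)\dx\,,
\]
where $\rho^{(2)}_u$ and $\rho^{(2)}_{u+v}$ abbreviate the pair correlation functions of the $(\mu_*(u),u)$- and $(\mu_*(u+v),u+v)$-Gibbs measures. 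Notably this route sidesteps any explicit use of the formula for $\mu_*'$ from Proposition~\ref{Prop:mustardiff}.

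It remains to read off the derivative $L(v)=-\tfrac12\int_{\R^d} v\,\rho^{(2)}_u\dx$ from this sandwich. Subtracting $L(v)$ throughout, the left endpoint yields $F(u+v)-F(u)-L(v)\geq0$, while the right endpoint bounds the same quantity by $\tfrac12\int_{\R^d} v(x)\bigl(\rho^{(2)}_u(x)-\rho^{(2)}_{u+v}(x)\bigr)\dx$. Using $|v(x)|\leq\norm{v}_\V\psi_0(|x|)$ together with the integrability of $\psi_0$ from \req{varphi,psi}, this remainder is controlled by $\tfrac12\norm{v}_\V\,\norm{\psi_0}_{L^1}\,\norm{\rho^{(2)}_{u+v}-\rho^{(2)}_u}_{L^\infty}$. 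I would then invoke the $L^\infty$-continuity (indeed Fr\'echet differentiability) of the correlation functions in $u$ and the continuity of $\mu_*$, both available in the gas phase, to conclude that $\norm{\rho^{(2)}_{u+v}-\rho^{(2)}_u}_{L^\infty}\to0$ as $\norm{v}_\V\to0$, so that the remainder is $o(\norm{v}_\V)$. That $L\in\V'$ follows from the same $\psi_0$-integrability bound applied to the bounded function $\rho^{(2)}_u$, and combining $L$ with $E(\,\cdot\,,\P_*)$ gives exactly \req{Phiprime}.

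The main obstacle is this final step, namely upgrading the directional enclosure to genuine Fr\'echet differentiability: the two sides of the sandwich differ only through $\norm{\rho^{(2)}_{u+v}-\rho^{(2)}_u}_{L^\infty}$, which must be shown to vanish at a rate independent of the direction of $v$. This is precisely where the earlier regularity of the Kirkwood--Salsburg solution $\bfrho(\mu,u)$, its continuity in $\mu\leq\mu_0$ and Fr\'echet differentiability in $u$, is indispensable; without it one would recover only G\^ateaux differentiability.
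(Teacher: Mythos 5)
Your proposal is correct and follows essentially the same route as the paper's own proof: both apply Corollary~\ref{corr:subd} at $(\mu_*(u),u)$ and $(\mu_*(u+v),u+v)$, use that the two densities coincide with $\rho_*$ so the chemical-potential increments cancel, and then pass from the resulting two-sided enclosure to Fr\'echet differentiability via the bound $C_{\psi_0}\norm{v}_\V\norm{\trho^{(2)}-\rho^{(2)}}_{L^\infty}$ and the gas-phase continuity of $\mu_*$ and of $\pcf$ in $(\mu,u)$. The only cosmetic difference is that the paper cites Proposition~\ref{Prop:mustardiff} explicitly for the continuity of $\mu_*$, which you also rely on implicitly.
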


\begin{proof}
Let $u\in\BV(u_0)$ and $v\in\V$.
We apply Corollary~\ref{corr:subd} to $u_1=u$ and $u_2=u+v$ with
$\mu_1=\mu_*(u)$ and $\mu_2=\mu_*(u+v)$. 
This yields
\begin{align*}
   - \frac{1}{2}\int_{\R^d} v(x)\rho^{(2)}(x)\dx
   &\,\leq\, 
   p_*(u+v)-\mu_*(u+v)\rho_* \,-\, \bigl(p_*(u) -\mu_*(u)\rho_*\bigr)\\[1ex]
   &\,\leq\, -\frac{1}{2}\int_{\R^d} 
            v(x)\trho^{(2)}(x)\dx\,,
\end{align*}
where $\trho^{(2)}$ denotes the pair correlation function of the
corresponding $(\mu_*(u+v),u+v)$-Gibbs measure. 
It follows that
\begin{align*}
   E(v,\P_*) \,-\, \frac{1}{2}\int_{\R^d} v(x)\pcf(x)\dx
   &\,\leq\, 
   \Phi(u+v) \,-\, \Phi(u)\\[1ex]
   &\,\leq\, E(v,\P_*) \,-\, \frac{1}{2}\int_{\R^d} 
            v(x)\trho^{(2)}(x)\dx\,,
\end{align*}
i.e., that
\begin{align*}
    0 &\,\leq\, \Phi(u+v)-\Phi (u)
      \,-\,\Bigl(E(v,\P_*) - \frac{1}{2}\int_{\R^d}v(x)\pcf(x)\dx\Bigr) \\[1ex]
    &\,\leq\, \frac{1}{2} \int_{\R^d}
    v(x)\big(\rho^{(2)}-\trho^{(2)}\big)(x)\dx\,.
\end{align*}
This shows that
\begin{align*}
    &\left|\Phi(u+v)-\Phi (u)
      \,-\,\Bigl(E(v,\P_*) - \frac{1}{2}\int_{\R^d}v(x)\pcf(x)\dx\Bigr) 
      \right|\\[1ex]
    &\qquad\,\leq\,  \frac{C_{\psi_0}}{2}\,\norm{v}_{\V} 
            \norm{\trho^{(2)}-\rho^{(2)}}_{L^\infty(\R^d)}\,,
\end{align*}
where
\[
   C_{\psi_0} \,=\, |\S^{d-1}| \int_0^\infty r^{d-1}\psi_0(r)\dr
   \,<\, \infty
\]
by virtue of \req{varphi,psi}. We finally observe that
\[
   \trho^{(2)} \,=\, \rho^{(2)}(\mu_*(u+v),u+v)
\]
converges to $\rho^{(2)}=\rho^{(2)}(\mu_*(u),u)$ in $L^\infty(\R^d)$
as $\norm{v}_\V\to 0$
because of Proposition~\ref{Prop:mustardiff} and the smoothness of $\rho^{(2)}$
as a function of $\mu$ and $u$. We have therefore established that
\begin{align*}
   \left|\Phi(u+v)-\Phi (u)
      \,-\,\Bigl(E(v,\P_*) - \frac{1}{2}\int_{\R^d}v(x)\pcf(x)\dx\Bigr) 
      \right|
   \,=\, o(\norm{v}_\V)
\end{align*}
as $\norm{v}_\V\to 0$, which yields \req{Phiprime}. 
Since $E(\,\cdot\,,\P_*)$ is bounded on $\BV(u_0)$, it is easy to see that
$E(\,\cdot\,,\P_*)\in\V'$, and hence, it follows  
in the same way as above that $\Phi'(u)\in\V'$ with
\[
   \norm{\Phi'(u)}_{\V'} \,\leq\, \norm{E(\,\cdot\,,\P_*)}_{\V'}
   \,+\, \frac{C_{\psi_0}}{2}\,\norm{\pcf}_{L^\infty(\R^d)}\,.
\]
This concludes the proof.
\end{proof}

\begin{remark}
\label{Rem:nablaPhi}
\rm
We emphasize that the dual space $\V'$ of $\V$ is a complicated Banach space. 
However, the representation~\req{Phiprime}, in combination with \req{EuP}, 
reveals that if $\P_*$ has a bounded and measurable pair correlation function 
$\rho^{(2)}_*$,
then $\Phi'(u)$ can be identified with 
\be{nablaPhi}
   \nabla\Phi(u) \,=\, 
   \frac{1}{2}\bigl(\rho_*^{(2)}-\rho^{(2)}(\mu_*(u),u)\bigr)
   \,\in\, L^\infty(\R^d)\,,
\ee
when using the natural dual pairing
\be{scalp}
   \Phi'(u)v \,=\, \scalp{v,\nabla\Phi(u)}
   \,=\, \int_{\R^d} v(x)\bigl(\nabla\Phi(u)\bigr)(x)\dx
\ee
of $L^2(\R^d)$.

This is the case,
for example, if $\P_*$ is a $(\mu,u_*)$-Gibbs measure for some 
$u_*\in\U$ (with $\mu=\mu_*(u_*)$), and then 
\req{nablaPhi} can further be rewritten as
\[
   \nabla\Phi(u) \,=\, 
   \frac{1}{2}\bigl(\omega^{(2)}(\mu,u_*) -\omega^{(2)}(\mu_*(u),u)\bigr)\,,
\]
where 
\be{omega2}
   \omega^{(2)}(\mu,u) \,=\, \pcf(\mu,u) - \rho(\mu,u)^2\,.
\ee
It follows from \req{Thm448} below that $\nabla\Phi(u)$ then also belongs
to $L^1(\R^d)$, and hence, to $L^2(\R^d)$ as well.
Note that $\V$ is embedded in $L^2(\R)$ for the same reason.
We refer to Section~\ref{Sec:L2} for further arguments which 
support the choice of the $L^2$-pairing.
\fin
\end{remark}

\begin{remark}
\label{Rem:Phimin2}
\rm
Assume that the target $\P_*$ has a bounded and measurable 
pair correlation function $\pcf_*$, 
and that the relative entropy functional $\Phi$ is minimized by some 
$u_*\in\U$. Assume further that the given density $\rho_*$ belongs to 
the gas phase of $u_*$. Then it follows from Theorem~\ref{Thm:Phiprime}
that $\Phi$ is differentiable at $u_*$, and that $\nabla\Phi(u_*) = 0$. 
From \req{nablaPhi} we thus conclude 
that the target and the model share the same pair correlation
function in this case; compare Remark~\ref{Rem:Phimin}.
\fin
\end{remark}

\begin{remark}
\label{Rem:Ursell}
\rm
For a $(\mu,u)$-Gibbs measure
the \emph{cluster functions}, sometimes also called Ursell functions
(e.g., Stell~\cite{Stel64}, and \cite{Hank18b}), or truncated correlation 
functions (e.g. Jansen~\cite{Jansen19}),
are defined recursively by the constant function $\omega^{(1)} =\rho$ and,
for $m\geq 2$, by
\be{omegan}
   \omega^{(m)}(x_1,\dots,x_m)
   \,=\, \rho^{(m)}(x_1,\dots,x_m) \,-\, \sum_{k=2}^m\, \sum_{\pi \in\Pi_k^m}\,
   \prod_{i=1}^k \omega^{(|\pi_i|)}(\xx_{\pi_i})\,,
\ee
cf.~\cite[p.~87]{Ruel69},
where $\Pi_k^m$ denotes the set of partitions $\pi$ of $x_1,\dots,x_m$ into 
$k$ subsets $\xx_{\pi_1},\dots,\xx_{\pi_k}$. 
Note that it immediately follows from \req{omegan} that each cluster function 
inherits the translation and permutation invariance of the correlation 
functions. 

For $m=2$ we recover from \req{omegan} the definition~\req{omega2}, 
and for fixed $\mu$ and $u$ we adopt the short-hand notation
\[
   \omega^{(2)}(x) \,=\, \rho^{(2)}(x) \,-\, \rho^2
\]
for $\omega^{(2)}(x,0)$ from the pair correlation function.
Note that 
\be{om2even}
   \omega^{(2)}(x) \,=\, \omega^{(2)}(-x) \qquad \text{for every $x\in\R^d$}
\ee
because of the translation invariance.

We will make repeatedly use of \cite[Theorem~4.4.8]{Ruel69} which states that
\be{Thm448}
   \int_{(\R^d)^{(m-2)}} \bigl|\omega^{(m)}(\,\cdot\,,0,x_3,\dots,x_m)\bigr|
   \dx_3\cdots\dx_m \,\in\, L^1(\R^d)\,,
\ee
provided that $u\in\U$ and $\mu\leq\mu_0$; see also Lemma~\ref{Lem:Frommer}
in the appendix. 
In particular, \req{Thm448} shows that $\omega^{(2)}\in L^1(\R^d)$.
\fin
\end{remark}

Based on Theorem~\ref{Thm:Phiprime} we now
compute the Hessian of $\Phi$.

\begin{theorem}
\label{Thm:Hessian}
Under the assumptions of Theorem~\ref{Thm:Phiprime} the relative
entropy functional~\req{Phi} is twice differentiable, and its Hessian 
at $u\in\BV(u_0)$ is given by
\be{Phipp}
   \Phi''(u)(v,\vt)
   \,=\, - \frac{1}{\partial_\mu\rho}
           \bigl(\partial_u \rho\,\vt\bigr)\bigl(\partial_u \rho\,v\bigr) 
         \,-\, \frac{1}{2}\thsp\scalp{v,\partial_u\pcf\,\vt}
\ee
in terms of the dual pairing~\req{scalp} of $L^2(\R^d)$,
where $v,\vt\in\V$, and the partial derivatives $\partial_u\rho$ and $\partial_u\pcf$
are evaluated at the pair potential $u$ and the chemical potential $\mu_*(u)$.
\end{theorem}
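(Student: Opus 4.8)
The plan is to differentiate the formula~\req{Phiprime} for $\Phi'(u)$ with respect to $u$, exploiting the fact that all the ingredients have already been shown to be differentiable. Recall that
\[
   \Phi'(u)\thsp v \,=\, E(v,\P_*) \,-\, \frac{1}{2}\int_{\R^d}v(x)\pcf(\mu_*(u),u)(x)\dx\,,
\]
where only the second term depends on $u$, since $E(v,\P_*)$ is linear in $v$ and independent of $u$. Therefore the Hessian $\Phi''(u)(v,\vt)$ is simply the directional derivative of the map $u\mapsto -\frac{1}{2}\scalp{v,\pcf(\mu_*(u),u)}$ in the direction $\vt$. The first step is to apply the chain rule to the composite dependence on $u$ which enters $\pcf$ both explicitly and implicitly through $\mu_*(u)$:
\[
   \partial_u\bigl[\pcf(\mu_*(u),u)\bigr]\vt
   \,=\, \partial_\mu\pcf\cdot\bigl(\mu_*'(u)\vt\bigr) \,+\, \partial_u\pcf\,\vt\,.
\]
Here $\partial_\mu\pcf$ and $\partial_u\pcf$ are the partial derivatives of $\pcf(\mu,u)$ evaluated at $\mu=\mu_*(u)$, and both exist because the correlation functions are $C^1$ in $(\mu,u)$ throughout the gas phase, as established in Section~\ref{Sec:derivatives}.

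Next I would substitute the explicit formula~\req{mustardiff} for $\mu_*'(u)\vt$ from Proposition~\ref{Prop:mustardiff}, namely
\[
   \mu_*'(u)\vt \,=\, -\frac{\partial_u\rho\,\vt}{\partial_\mu\rho}\,,
\]
which yields
\[
   \Phi''(u)(v,\vt)
   \,=\, \frac{1}{2}\thsp\frac{\partial_u\rho\,\vt}{\partial_\mu\rho}\,\scalp{v,\partial_\mu\pcf}
         \,-\, \frac{1}{2}\thsp\scalp{v,\partial_u\pcf\,\vt}\,.
\]
To match the claimed expression~\req{Phipp}, the remaining task is to identify $\scalp{v,\partial_\mu\pcf}$ with $-2\thsp\partial_u\rho\,v$. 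This should follow by differentiating the relation~\req{pprime} (that $\partial_\mu p=\rho$) together with the first-derivative formula: since $\Phi'(u)v$ is, up to the $u$-independent term $E(v,\P_*)$, the negative of $\tfrac12\scalp{v,\pcf}$, and since the density arises as an integral of the pair correlation function against the constant $\tfrac12$ via the compressibility-type identity, the pairing $\scalp{v,\partial_\mu\pcf}$ collapses to a multiple of $\partial_u\rho\,v$. Concretely, I expect to use that $\partial_\mu\rho = \tfrac12\scalp{\eins,\partial_\mu\omega^{(2)}}$-type reasoning, or more directly a symmetry/reciprocity relation between the mixed partials of $\rho$ and $\pcf$ that forces $\scalp{v,\partial_\mu\pcf}=-2\thsp\partial_u\rho\,v$.

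The main obstacle I anticipate is precisely this last identification, i.e.\ justifying that the term involving $\partial_\mu\pcf$ rearranges into $(\partial_u\rho\,\vt)(\partial_u\rho\,v)/\partial_\mu\rho$ with the correct sign and the claimed product structure. This is where the symmetry of the Hessian must emerge: the expression~\req{Phipp} is manifestly symmetric in $v$ and $\vt$ in its first summand (a rank-one form $(\partial_u\rho\,\vt)(\partial_u\rho\,v)$) but the naive chain-rule output places $\vt$ inside $\mu_*'$ and $v$ inside the pairing asymmetrically, so one must verify that $\scalp{v,\partial_\mu\pcf}$ indeed equals $-2\thsp\partial_u\rho\,v$ and hence restores symmetry. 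All the technical integrability needed to make the dual pairings well defined as $L^2$-pairings is supplied by~\req{Thm448} and the boundedness of $\pcf$ in $L^\infty$; the differentiability of $\mu_*$ and of the correlation functions is supplied by Proposition~\ref{Prop:mustardiff} and the analyticity statements in Section~\ref{Sec:derivatives}. Thus, modulo the bookkeeping of the chain rule and the key reciprocity identity, the proof reduces to a direct computation.
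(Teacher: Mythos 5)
Your overall strategy coincides with the paper's: differentiate \req{Phiprime} in the direction $\vt$, apply the chain rule to $u\mapsto\pcf(\mu_*(u),u)$, and insert \req{mustardiff}. Up to the point where you obtain
\[
   \Phi''(u)(v,\vt)
   \,=\, \frac{1}{2}\,\frac{\partial_u\rho\,\vt}{\partial_\mu\rho}\,
         \scalp{v,\partial_\mu\pcf}
         \,-\, \frac{1}{2}\,\scalp{v,\partial_u\pcf\,\vt}
\]
the computation is correct. The step you leave open, namely the identification $\scalp{v,\partial_\mu\pcf}=-2\,\partial_u\rho\,v$, is however exactly where the proof has content, and your sketch of it does not close the gap: you only assert that it ``should follow'' from a reciprocity relation, and the two mechanisms you propose are not the right ones. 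The ``$\partial_\mu\rho=\frac12\scalp{\eins,\partial_\mu\omega^{(2)}}$-type'' compressibility identity is the content of Lemma~\ref{Lem:dmurho} and relates $\partial_\mu\rho$ to an integral of $\partial_\mu\omega^{(2)}$; it says nothing about $\partial_u\rho$ and cannot produce the required formula. Likewise, your observation that $\Phi'(u)v$ is, up to a $u$-independent term, $-\frac12\scalp{v,\pcf}$ concerns the \emph{constrained} functional in which $\mu=\mu_*(u)$ varies with $u$, whereas the identity you need lives at \emph{fixed} chemical potential.

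The paper closes this gap by using the pressure $p(\mu,u)$ as the generating function for the mixed-partial symmetry. Applying Corollary~\ref{corr:subd} with $u_1=u$, $u_2=u+w$ and $\mu_1=\mu_2=\mu$, and running the same squeeze argument as in the proof of Theorem~\ref{Thm:Phiprime}, one shows that the pressure at fixed $\mu$ is differentiable in $u$ with
\[
   \partial_u p(\mu,u)\,w \,=\, -\frac{1}{2}\,\scalp{w,\pcf}\,.
\]
Schwarz's theorem together with \req{pprime} then gives
\[
   \scalp{w,\partial_\mu\pcf}
   \,=\, -2\,\partial_\mu\bigl((\partial_u p)\,w\bigr)
   \,=\, -2\,\partial_u(\partial_\mu p)\,w
   \,=\, -2\,\partial_u\rho\,w\,,
\]
which is \req{durho}. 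Substituting this into your expression yields \req{Phipp} and at the same time restores the symmetry of the Hessian that you correctly flagged as the consistency check. So the architecture of your argument is right, but without an actual proof of this identity the proposal is incomplete at its decisive step.
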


\begin{proof}
We differentiate \req{Phiprime} with respect to $u$ in the direction 
$\vt\in\V$. Keeping in mind that $\rho^{(2)}$ in \req{Phiprime} stands for 
$\rho^{(2)}(\mu_*(u),u)$ the chain rule gives
\be{Hessian}
\begin{aligned}
   \Phi''(u)(v,\vt)
   &\,=\, -\frac{1}{2}\int_{\R^d} v(x)
          \Bigl(\partial_\mu \pcf\mu_*'(u)\,\vt
                \,+\, \partial_u \pcf\,\vt\Bigr)(x)\dx\\[1ex]
   &\,=\, -\frac{\mu_*'(u)\,\vt}{2}\, \scalp{v,\partial_\mu\rho^{(2)}}
          \,-\, \frac{1}{2}\thsp \scalp{v,\partial_u\pcf\,\vt}\,,
\end{aligned}
\ee
where we have adopted the notation~\req{scalp},
and the partial derivatives of $\rho^{(2)}$ 
-- which belong to $L^\infty(\R^d)$ --
are evaluated at the pair potential $u$ and the chemical potential $\mu_*(u)$.

To determine $\partial_\mu\pcf(\mu,u)$ for an arbitrary
chemical potential $\mu<\mu_0$ we apply Corollary~\ref{corr:subd}
with $u_1=u$ and $u_2=u+w$ for any $w\in\V$, and with 
$\mu_1=\mu_2=\mu$: As in the proof of
Theorem~\ref{Thm:Phiprime} we conclude that the pressure is differentiable 
with respect to $u$ with derivative $\partial_u p(\mu,u)\in\V'$ given by
\[
    \partial_u p(\mu,u)\,w \,=\, -\frac{1}{2}\int_{\R^d}w(x)\pcf(x)\dx
    \,=\, -\frac{1}{2} \scalp{w,\pcf}\,, 
\]
where $\pcf$ is the pair correlation function of the $(\mu,u)$-Gibbs measure.
Schwarz's theorem and \req{pprime} therefore yield
\be{durho}
   \scalp{w,\partial_\mu \pcf} \,=\, \partial_\mu \scalp{w,\pcf}
   \,=\, -2\,\partial_\mu \bigl((\partial_u \,p)w\bigr)
   \,=\, -2\,\partial_u (\partial_\mu \,p)w
   \,=\, -2\,\partial_u \rho\,w\,.
\ee
Inserting this into \req{Hessian} we obtain
\begin{align*}
   \Phi''(u)(v,\vt)
   \,=\, \bigl(\mu_*'(u)\,\vt\bigr)\bigl(\partial_u\rho\, v)
          \,-\, \frac{1}{2}\scalp{v,\partial_u\pcf\,\vt}\,,
\end{align*}
and the assertion~\req{Phipp} thus follows from 
Proposition~\ref{Prop:mustardiff}.
\end{proof}

Note that $\Phi''(u)$ is a continuous bilinear form: This follows readily
from the result in \cite{Hank18a} that $\partial_u\rho\in\V'$ and 
$\partial_u\pcf\in\L\bigl(\V,L^\infty(\R^d)\bigr)$. We further observe
that $\Phi''$ is continuous in $u$ 
because $\rho$ and $\pcf$ are $C^1$ functions of $\mu$ and $u$.
This implies that $\Phi''(u)$ is a symmetric bilinear form. 
Finally, this bilinear form is positive semidefinite, because $\Phi$
is strictly convex.

\section{The connection to the IMC iterative scheme}
\label{Sec:IMC}
In the preceding section we have seen that the relative entropy functional
is twice differentiable at any $u_0\in\U$, provided that the target density
$\rho_*$ belongs to its gas phase. The convex quadratic approximation
\be{Phi2}
   \Phi_2(u_0+v) \,:=\, 
   \Phi(u_0) \,+\, \Phi'(u_0)v \,+\, \frac{1}{2}\,\Phi''(u_0)(v,v)
   \,\approx\, \Phi(u_0+v)\,,
\ee
valid for small enough $v\in\V$,
attains its minimum for every solution $v_0\in\V$ of the variational problem
\be{Newtonupdate}
   \Phi''(u_0)(v_0,w) \,=\, - \Phi'(u_0)w \qquad 
   \text{for all $w\in\V$},
\ee
and the well-known Newton scheme for minimizing $\Phi$ consists in
updating $u_0$ via
\be{Newtonstep}
   u_1 \,=\, u_0 \,+\, v_0
\ee
to obtain a (hopefully) better approximation of the global unique
minimizer of $\Phi$.
Unfortunately, however, we only know that $\Phi''(u_0)$ is semidefinite;
it is therefore not clear whether \req{Newtonupdate} has a solution, and 
if it does, whether it is unique.

The bilinear form $\Phi''(u_0)$ defines a linear operator $A:\V\to\V'$
via the identity 
\[
   \Phi''(u_0)(v,\vt) \,=\, \scalp{\vt,Av}_{\V\times\V'} \qquad 
   \text{for all $v,\vt\in\V$}\,.
\]
As in Remark~\ref{Rem:nablaPhi} the dual form on the right-hand side 
can be replaced by the dual pairing induced by $L^2(\R^d)$, 
when $A$ is identified with the (bounded) operator
\be{A}
   A:\V\to L^\infty(\R^d)\,, \qquad 
   A:v\,\mapsto\,
   -\frac{1}{2}\,\partial_u\rho^{(2)}v \,-\, 
   \frac{1}{\partial_\mu\rho}(\partial_u\rho\,v)\,\nabla_u\rho\,,
\ee
where $\nabla_u\rho\in L^\infty(\R^d)$ denotes
the representative of the functional $\partial_u\rho\in\V'$ for this pairing.
This is an immediate consequence of Theorem~\ref{Thm:Hessian}.
According to \cite[Section~6]{Hank18b} there holds
\be{nablarho-tmp}
   \nabla_u\rho(x) \,=\, -\beta \pcf(x) 
        \,-\, \frac{\beta}{2} \int_{\R^d} \chi^{(3)}(0,x+x',x')\dx'\,,
\ee
where
\[
   \chi^{(3)}(0,x+x',x') \,=\, \omega^{(3)}(0,x+x',x')
   \,+\, \rho\,\omega^{(2)}(x') \,+\, \rho\,\omega^{(2)}(x+x')
\]
is given in terms of the second and third cluster functions 
associated with $\mu$ and $u$; compare Remark~\ref{Rem:Ursell}.
Using the translation and permutation invariance of $\omega^{(3)}$ this can
be rewritten as
\be{chi3}
   \chi^{(3)}(0,x+x',x') \,=\, \omega^{(3)}(x,0,-x')
   \,+\, \rho\,\omega^{(2)}(x') \,+\, \rho\,\omega^{(2)}(x+x')\,.
\ee
Note that it has been shown in \cite[Proposition~4.2]{Hank18b} that 
the integral in \req{nablarho-tmp} is absolutely convergent and that it defines 
a bounded function of $x\in\R^d$. Since $\omega^{(2)}\in L^1(\R^d)$ it thus
follows from \req{chi3} that 
\be{om3int}
   \int_{\R^d} \bigl|\omega^{(3)}(\,\cdot\,,0,-x')\bigr| \dx'
   \,\in\, L^\infty(\R^d)\,,
\ee
and inserting~\req{chi3} into \req{nablarho-tmp}, we finally arrive at
the representation
\be{nablarho}
   \nabla_u\rho(x) \,=\, -\beta \pcf(x) 
        \,-\, \frac{\beta}{2} \int_{\R^d} \omega^{(3)}(x,0,x')\dx'
        \,-\, \beta\rho \int_{\R^d} \omega^{(2)}(x')\dx'\,,
\ee
which will be used later on.

Now we introduce the (nonlinear) ''Henderson operator''
\be{Henderson-Op}
   F:u \,\mapsto\, \pcf\bigl(\mu_*(u),u\bigr)\,,
\ee
which maps $u\in\BV(u_0)$ to its associated pair correlation function
at the given density $\rho_*$, and which is an injective operator
according to the Henderson theorem.
Since $F(u)+2\nabla\Phi(u)$ is independent of $u$ according to \req{Phiprime},
see also \req{nablaPhi}, it follows that $F:\BV(u_0)\to L^\infty(\R^d)$ is 
differentiable for $\rho_*<\rho_0$ with
\be{FprimeHessian}
   \scalp{\vt,F'(u_0)v} \,=\, -2\,\Phi''(u_0)(v,\vt)\,.
\ee
This shows that if the target $\P_*$ has a pair correlation function
$\pcf_*\in L^\infty(\R^d)$ then the variational problem~\req{Newtonupdate} 
is equivalent to the linear operator equation
\be{IMC}
   F'(u_0)v_0 \,=\, \pcf_* \,-\, F(u_0)\,.
\ee
The step~\req{Newtonstep} for minimizing $\Phi$ with Newton's method is
thus equivalent to one iteration of the IMC method, where the update $v_0$ is
determined by solving~\req{IMC}.
This is the thermodynamic limit analog of the observation
in \cite{MKV09,RSSV19}, referred to in the introduction.

\begin{remark}
\label{Rem:IMC}
\rm
As follows from Theorem~\ref{Thm:Phiprime} and \req{EuP} the Henderson operator
satisfies
\[
   \scalp{v,F(u_0)} \,=\, 2\,E(v,\P_0)\,,
\]
where $\P_0$ denotes the corresponding $(\mu_*(u_0),u_0)$-Gibbs measure.
In the IMC scheme, as introduced in \cite{LyLa95}, this
interaction energy is approximated by the corresponding expectation value
\[
   E(v,\P_0) \,\approx\, \frac{1}{|\Lambda|}\,\scalp{V}_{\rho_*,\Lambda}
\]
of the \emph{canonical ensemble} corresponding to the pair potential $u_0$ at 
density $\rho_*$ in the bounded box $\Lambda\subset\R^d$, where $V$ is the 
observable
\[
   V(\gamma) \,=\, \sum_{1\leq i<j\leq N} v(x_i-x_j)
\]
associated with the perturbation $v\in\V$, compare~\req{U}.
Likewise, the derivative $F'(u_0)$ is approximated in the IMC scheme via the
cross correlation matrix
\[
   -\frac{2\beta}{|\Lambda|}\Bigl(\scalp{V\Vt}_{\rho_*,\Lambda}
   - \scalp{V}_{\rho_*,\Lambda}\scalp{\Vt}_{\rho_*,\Lambda}\Bigr)\
   \,\approx\,\scalp{\vt,F'(u_0)v}\,.
\]

It has to be emphasized that the density constraint $\rho(u)=\rho_*$
is inherently built into the canonical ensemble. Accordingly, this
approximation of $F'(u_0)$ does respect this constraint. 
When working in the grand canonical ensemble instead
(or in the thermodynamic limit) this constraint necessitates a correction term
to project the unconstrained derivative 
into the tangent plane of the manifold of pair correlation functions 
corresponding to particle systems with the correct density.
This is the reason for the first term on the right-hand side of \req{Phipp} 
which does not (and need not) occur in the 
original IMC scheme.
\fin
\end{remark}


\section{Extension of the Hessian to a semidefinite bilinear form on  
\boldmath{$L^2(\R^d)$}}
\label{Sec:L2}
In the remainder of this paper 
we study the Hessian of the relative entropy functional somewhat further. 
More precisely, we examine the mapping properties of the 
Jacobian of the Henderson map, which is connected to the Hessian $\Phi''$
via \req{FprimeHessian}.

To begin with, recall from Remark~\ref{Rem:nablaPhi} that the right-hand 
side of \req{IMC} belongs to $L^\infty(\R^d)\cap L^1(\R^d)$,
when the target $\P_*$ is a $(\mu,u_*)$-Gibbs measure. As our first
result of this section we show that $F'(u_0)$ has matching mapping properties.

\begin{theorem}
\label{Thm:Fprime1}
Under the assumptions of Theorem~\ref{Thm:Phiprime}
the Jacobian $F'(u_0)$ of the Henderson operator~\req{Henderson-Op}
belongs to $\L(\V,L^\infty(\R^d))\cap\L(\V,L^1(\R^d))$.
\end{theorem}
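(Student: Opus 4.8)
The plan is to exploit that along the Henderson map the density is pinned to $\rho_*$, which turns $F$ into the (integrable) pair cluster function up to an additive constant. Since $\rho(\mu_*(u),u)=\rho_*$ for every $u\in\BV(u_0)$, the definition~\req{omega2} gives $F(u)=\pcf(\mu_*(u),u)=\omega^{(2)}(\mu_*(u),u)+\rho_*^2$, so the chain rule (with $\mu_*$ differentiable by Proposition~\ref{Prop:mustardiff}) yields
\[
   F'(u_0)v \,=\, \partial_u\omega^{(2)}\,v \,+\, \bigl(\mu_*'(u_0)v\bigr)\,\partial_\mu\omega^{(2)}\,,
\]
all partial derivatives being evaluated at $(\mu_*(u_0),u_0)$. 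The point of passing to $\omega^{(2)}$ is that the additive constant $\rho_*^2$ removes the nonzero limit at infinity carried by $\pcf$, by $\partial_u\pcf\,v$, and by $\partial_\mu\pcf$; that $F'(u_0)\in\L(\V,L^\infty(\R^d))$ is already known from $\partial_u\pcf\in\L(\V,L^\infty(\R^d))$, $\mu_*'(u_0)\in\V'$ and $\partial_\mu\pcf\in L^\infty(\R^d)$, so only the two $L^1$ bounds remain.

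For the $\mu$-derivative I would first read off from \req{durho} that $\partial_\mu\pcf=-2\nabla_u\rho$ (the two even bounded functions induce the same functional~\req{scalp} on $\V$ and hence agree almost everywhere), and then insert \req{nablarho}. Writing $\partial_\mu\omega^{(2)}=\partial_\mu\pcf-2\rho\,\partial_\mu\rho=-2\nabla_u\rho-2\rho\,\partial_\mu\rho$ and using the compressibility identity $\partial_\mu\rho=\beta\bigl(\rho+\int_{\R^d}\omega^{(2)}(x')\dx'\bigr)$ (cf.~\cite{HaMcD13,Ruel69}), the two contributions that are constant in $x$ cancel, and with $\pcf=\omega^{(2)}+\rho^2$ one is left with
\[
   \partial_\mu\omega^{(2)}(x) \,=\, 2\beta\,\omega^{(2)}(x)
      \,+\, \beta\int_{\R^d}\omega^{(3)}(x,0,x')\dx'\,.
\]
Both summands lie in $L^1(\R^d)$ by \req{Thm448} (with $m=2$ and $m=3$), so $\partial_\mu\omega^{(2)}\in L^1(\R^d)$; since $\mu_*'(u_0)\in\V'$, the second term of $F'(u_0)v$ is bounded in $L^1$ by $\norm{\mu_*'(u_0)}_{\V'}\norm{\partial_\mu\omega^{(2)}}_{L^1(\R^d)}\norm{v}_\V$.

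For the $u$-derivative I would use a cluster representation of $\partial_u\omega^{(2)}\,v$ in the spirit of \req{nablarho}, as furnished by the computations in \cite[Section~6]{Hank18b}: once $\omega^{(2)}=\pcf-\rho^2$ is formed and the density constraint is used, the disconnected (constant-at-infinity) contributions cancel and $\partial_u\omega^{(2)}\,v$ becomes a finite sum of a pointwise term proportional to $v(x)\,\omega^{(2)}(x)$ and of integral terms of the form $\int_{\R^d}v(\,\cdot\,)\,\omega^{(3)}(x,0,x')\dx'$ and $\int_{(\R^d)^2}v(\,\cdot\,)\,\omega^{(4)}(x,0,x',x'')\dx'\,\dx''$. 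Each of these is estimated in $L^1(\R^d)$ by means of the pointwise bound $|v(y)|\le\norm{v}_\V\,\psi_0(|y|)\le\norm{v}_\V\,\norm{\psi_0}_{L^\infty(\R^d)}$ together with \req{Thm448}; for instance $\int_{\R^d}|v(x)\,\omega^{(2)}(x)|\dx\le\norm{v}_\V\norm{\psi_0}_{L^\infty(\R^d)}\norm{\omega^{(2)}}_{L^1(\R^d)}$, while the double-integral term is dominated by $\norm{v}_\V\norm{\psi_0}_{L^\infty(\R^d)}\int_{(\R^d)^3}|\omega^{(4)}(x,0,x',x'')|\dx\,\dx'\,\dx''<\infty$. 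Hence $\partial_u\omega^{(2)}\in\L(\V,L^1(\R^d))$, and combining with the previous paragraph gives $\norm{F'(u_0)v}_{L^1(\R^d)}\le C\norm{v}_\V$, i.e.\ $F'(u_0)\in\L(\V,L^1(\R^d))$.

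The main obstacle is the third step: deriving the precise cluster representation of $\partial_u\omega^{(2)}$ and verifying that, after replacing the correlation functions by their truncated (cluster) counterparts and invoking the density constraint, every disconnected product — each of which would otherwise leave a nonintegrable constant at infinity — indeed cancels, so that only the absolutely convergent cluster integrals controlled by \req{Thm448} survive. The compressibility identity plays exactly this cancelling role in the simpler $\mu$-derivative; the same bookkeeping carried out for $\omega^{(3)}$ and $\omega^{(4)}$, as in \cite{Hank18b}, is what forces the $u$-derivative into $L^1(\R^d)$.
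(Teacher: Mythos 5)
Your proposal is correct and follows essentially the same route as the paper: rewrite $F$ via $\omega^{(2)}(\mu_*(u),u)+\rho_*^2$, derive $\partial_\mu\omega^{(2)}=2\beta\,\omega^{(2)}+\beta\int_{\R^d}\omega^{(3)}(\,\cdot\,,0,x')\dx'$ from \req{durho}, \req{nablarho} and the compressibility identity (Lemma~\ref{Lem:dmurho}), and then control the $\mu$-part and the $u$-part of the chain rule separately using \req{Thm448} and $\mu_*'(u_0)\in\V'$. The step you flag as the main obstacle is exactly where the paper simply cites \cite{Hank18b} for $\partial_u\omega^{(2)}\in\L(\V,L^1(\R^d))$ rather than redoing the cluster bookkeeping; note only that the full cluster representation also contains convolution terms $\omega^{(2)}*v$ and $\omega^{(2)}*\omega^{(2)}*v$ missing from your sketched list, which are harmless here because $\V\hookrightarrow L^1(\R^d)$.
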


\begin{proof}
We already know from Section~\ref{Sec:IMC} that 
$F'(u_0)=-2A\in\L(\V,L^\infty(\R^d))$.
It thus remains to establish that $F'(u_0)\in\L(\V,L^1(\R^d))$. 
To this end we rewrite the Henderson operator in terms of the 
cluster function~\req{omega2}, i.e.,
\[
   F(u_0) \,=\, \omega^{(2)}(\mu_*(u_0),u_0)\,+\, \rho_*^2\,,
\]
which yields
\be{Fprime-omega}
   F'(u_0)\thsp v 
   \,=\, \partial_\mu \omega^{(2)}(\mu_*,u_0)\,\mu_*'(u_0)\,v
         \,+\, \partial_u\omega^{(2)}(\mu_*,u_0)\,v\,,
\ee
where $\mu_*=\mu_*(u_0)$. The partial derivative of the cluster function
with respect to the chemical potential is given by
\be{dmuom2}
\begin{aligned}
   \partial_\mu \omega^{(2)}(\mu,u)
   &\,=\, \partial_\mu\bigl(\pcf(\mu,u)\,-\, \rho(\mu,u)^2\bigr) \\[1ex]
   &\,=\, \partial_\mu \pcf(\mu,u) \,-\, 2\rho(\mu,u)\partial_\mu\rho(\mu,u)\,.
\end{aligned}
\ee
From \req{durho} and \req{nablarho} we conclude that 
\begin{align*}
   \partial_\mu\rho^{(2)}(x) &\,=\, -2\,\nabla_u\rho\thsp(x)\\[1ex]
   &\,=\, 2\beta\, \pcf(x) \,+\, \beta
         \int_{\R^d} \omega^{(3)}(x,0,x') \dx' 
         \,+\, 2\beta\rho\int_{\R^d}\omega^{(2)}(x')\dx'\,,
\end{align*}
and hence, together with Lemma~\ref{Lem:dmurho} it follows that
\be{dmuomega2}
   \partial_\mu \omega^{(2)}(x) 
   \,=\, 2\beta \,\omega^{(2)}(x) \,+\, 
         \beta\int_{\R^d} \omega^{(3)}(x,0,x')\dx' \,.
\ee
In particular, this shows that $\partial_\mu\omega^{(2)}\in L^1(\R^d)$
by virtue of \req{Thm448}.
Together with Proposition~\ref{Prop:mustardiff} and \req{Fprime-omega} 
we thus obtain
\begin{align*}
   &\norm{F'(u_0)\,v}_{L^1(\R^d)}\\[1ex]
   &\qquad
    \,\leq\, \Bigl(
            \norm{\partial_\mu\omega^{(2)}(\mu_*,u_0)}_{L^1(\R^d)}
            \norm{\mu_*'(u_0)}_{\V'}
            \,+\, \norm{\partial_u\omega^{(2)}(\mu_*,u_0)}_{\L(\V,L^1(\R^d))}
            \Bigr)
            \norm{v}_\V\,,
\end{align*}
because it has been proved in \cite{Hank18b} that
$\partial_u\omega^{(2)}\in\L(\V,L^1(\R^d))$. This shows that
$F'(u_0)\in \L(\V,L^1(\R^d))$.
\end{proof}

Since $L^\infty(\R^d)\cap L^1(\R^d)\subset L^2(\R^d)$,
Theorem~\ref{Thm:Fprime1} shows that $F'(u_0)$ is a bounded operator 
from $\V$ to $L^2(\R^d)$.
As $\V$ is a dense subspace of $L^2(\R^d)$, and because we have repeatedly
identified the dual pairing $\V\times\V'$ with the dual pairing
of $L^2(\R^d)$, this raises the question whether $F'(u_0)$ extends to
a bounded operator in $L^2(\R^d)$, or, in terms of the relative entropy
functional, whether $\Phi''(u_0)$ extends to a quadratic form on $L^2(\R^d)$.
This will be answered to the affirmative in the remainder of this section.

\begin{lemma}
\label{Lem:Fprime}
Under the assumptions of Theorem~\ref{Thm:Phiprime} the 
Jacobian of the Henderson map is given by
\begin{subequations}
\label{eq:finalform}
\begin{align}
\label{eq:finalform-a}
   \bigl(F'(u_0)\,v\bigr)(x) &\,=\, -\beta \rho^{(2)}(x)v(x) \\[1ex]
\label{eq:finalform-b}
   &\phantom{\,=\ } 
          \,-\, 2\beta\rho\,(\omega^{(2)}*v)(x)
          \,-\, \beta\,(\omega^{(2)}*\omega^{(2)}*v)(x)\\[1ex]
\label{eq:finalform-c}
   &\phantom{\,=\ }
          \,+\, \frac{1}{2\partial_\mu\rho}\,\partial_\mu\omega^{(2)}(x)
                \int_{\R^d}\partial_\mu\omega^{(2)}(x')\,v(x')\dx'\\[1ex]
\label{eq:finalform-d}
   &\phantom{\,=\ }
          \,-\, 2\beta\int_{\R^d}\omega^{(3)}(x,0,x')\,v(x')\dx'\\[1ex]
\label{eq:finalform-e}
   &\phantom{\,=\ }
          \,-\, \frac{\beta}{2}\int_{\R^d}
                   \int_{\R^d} \omega^{(4)}(x,0,x'',x'+x'')\dx''\,v(x')\dx'
\end{align}
\end{subequations}
for $v\in \V$ and $x\in\R^d$.
\end{lemma}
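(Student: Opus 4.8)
The plan is to assemble the claimed formula from the decomposition \req{Fprime-omega},
\[
   F'(u_0)\,v \,=\, \partial_\mu\omega^{(2)}(\mu_*,u_0)\,\mu_*'(u_0)\,v
         \,+\, \partial_u\omega^{(2)}(\mu_*,u_0)\,v\,, \qquad \mu_*=\mu_*(u_0)\,,
\]
and to treat the two summands separately. First I would dispose of the \emph{projection} term arising from the variation of the chemical potential. By Proposition~\ref{Prop:mustardiff} we have $\mu_*'(u_0)v = -(\partial_u\rho\,v)/\partial_\mu\rho$, and the functional $\partial_u\rho$ is represented in the $L^2$-pairing~\req{scalp} by $\nabla_u\rho$ of \req{nablarho}, so that $\partial_u\rho\,v = \scalp{v,\nabla_u\rho}$. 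Using the relation $\partial_\mu\pcf = -2\nabla_u\rho$ recorded via \req{durho} in the proof of Theorem~\ref{Thm:Fprime1}, together with $\pcf=\omega^{(2)}+\rho^2$, one obtains the clean identity
\[
   \nabla_u\rho(x) \,=\, -\frac{1}{2}\,\partial_\mu\pcf(x)
   \,=\, -\frac{1}{2}\,\partial_\mu\omega^{(2)}(x) \,-\, \rho\,\partial_\mu\rho\,.
\]
Substituting this into $\partial_\mu\omega^{(2)}(x)\,\mu_*'(u_0)v$ produces exactly the term~\req{finalform-c}, together with a single extra contribution $\rho\,\partial_\mu\omega^{(2)}(x)\int_{\R^d}v(x')\dx'$.

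It remains to identify the second summand $\partial_u\omega^{(2)}(\mu_*,u_0)v$ and to verify that the extra contribution combines with it to give the remaining lines~\req{finalform-a}, \req{finalform-b}, \req{finalform-d}, and \req{finalform-e}. For the explicit form of $\partial_u\omega^{(2)}$ I would invoke the representation established in \cite{Hank18b}, which expresses this Fr\'echet derivative through the local term $-\beta\pcf(x)v(x)$, the convolution corrections built from $\omega^{(2)}$, and the integral operators with kernels $\omega^{(3)}(x,0,x')$ and $\int_{\R^d}\omega^{(4)}(x,0,x'',x'+x'')\dx''$. Expanding the extra term through \req{dmuomega2} (whose derivation rests on Lemma~\ref{Lem:dmurho}) cancels precisely the $\int v$-weighted pieces carried by that representation, leaving the five lines of \req{finalform}. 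Throughout, the translation and permutation invariance of the cluster functions (cf.~Remark~\ref{Rem:Ursell}) is used to bring the arguments of $\omega^{(3)}$ and $\omega^{(4)}$ into the symmetric form displayed in \req{finalform-d} and \req{finalform-e}, and the evenness~\req{om2even} of $\omega^{(2)}$ is needed to recognise the convolutions in \req{finalform-b}.

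The integrability required to make sense of every term -- in particular the absolute convergence of the inner $x''$-integral in \req{finalform-e} and the membership of the various kernels in $L^1(\R^d)$ -- follows from \req{Thm448} and the bounds already recorded in the proof of Theorem~\ref{Thm:Fprime1}. The main obstacle I anticipate is not analytic but organisational: one must track the numerous $\pcf$-versus-$\omega^{(2)}$ conversions and the several $\int v$-weighted terms, checking that the identity for $\nabla_u\rho$ above forces their exact cancellation. Getting the constant in front of each convolution and each cluster integral right, while consistently applying the symmetries of the higher cluster functions, is where the care is needed.
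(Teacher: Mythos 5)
Your overall strategy coincides with the paper's: the paper likewise isolates the rank-one ``density-constraint'' contribution by writing $\nabla_u\rho=-\tfrac12\partial_\mu\omega^{(2)}-\rho\,\partial_\mu\rho$ (a consequence of \req{durho} and \req{dmuom2}) and feeding it through Proposition~\ref{Prop:mustardiff}; this produces \req{finalform-c} together with a leftover term $\rho\,\partial_\mu\omega^{(2)}(x)\int_{\R^d}v(x')\dx'$, exactly as you describe. Your first step is therefore correct as far as it goes, and your anticipated cancellation of that leftover is indeed what happens.

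The gap is in the second half. You dispose of $\partial_u\omega^{(2)}(\mu_*,u_0)\,v$ by invoking a representation ``established in \cite{Hank18b}'' that already displays the local term, the two convolutions built from $\omega^{(2)}$, and the integral operators with kernels $\omega^{(3)}(x,0,x')$ and $\int_{\R^d}\omega^{(4)}(x,0,x'',x'+x'')\dx''$. No such ready-made cluster-function formula is used in the paper: what \cite{Hank18b} supplies, and what the proof actually quotes as \req{durho2}, is $\partial_u\rho^{(2)}\,v$ expressed through the \emph{correlation} functions $\rho^{(3)}$ and $\rho^{(4)}$. The substance of the lemma is precisely the conversion of that expression into cluster functions: expanding $\rho^{(3)}(x,0,x')$ via \req{omegan} into five terms (\req{Lem:Fprime2}), expanding $\rho^{(4)}(x,0,x'',x''+x')-\pcf(x)\pcf(x')$ into eleven cluster-function terms and integrating over $x''$ to produce, among other things, the two autoconvolutions $(\omega^{(2)}*\omega^{(2)})(x\pm x')$ (\req{Lem:Fprime3}), and then checking that all $x$- or $x'$-independent pieces and all $\int v$-weighted pieces cancel against the projection leftover via \req{dmuomega2} and Lemma~\ref{Lem:dmurho}. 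You correctly flag this bookkeeping as ``where the care is needed,'' but you do not carry it out, and it cannot be outsourced to the cited reference in the form you need. One further point your sketch must make explicit: the two autoconvolution kernels $(\omega^{(2)}*\omega^{(2)})(x+x')$ and $(\omega^{(2)}*\omega^{(2)})(x-x')$ define the \emph{same} operator only because every $v\in\V$ is even; without that observation the single term in \req{finalform-b} with coefficient $\beta$ (rather than two terms with coefficient $\beta/2$) does not emerge.
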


\begin{proof}
According to \req{A} we have
\[
   F'(u_0)\thsp v \,=\, -2Av 
            \,=\, \partial_u\rho^{(2)}\,v 
                  \,+\, \frac{2}{\partial_\mu\rho}
                        (\partial_u\rho\,v)\nabla_u\rho\,,
\]
where
\be{durho2}
\begin{aligned}
   &(\partial_u\rho^{(2)}\,v)(x) 
    \,=\, -\beta\pcf(x)v(x) \,-\, 2\beta \int_{\R^d} \rho^{(3)}(x,0,x')v(x')\dx'
         \\[1ex]
   &\qquad 
          \,-\, \frac{\beta}{2}\int_{\R^d} v(x')
                \int_{\R^d} \bigl(\rho^{(4)}(x,0,x'',x''+x')
                            \,-\, \rho^{(2)}(x)\rho^{(2)}(x')\bigr)\dx''\dx'
\end{aligned}
\ee
by virtue of \cite[Eq.~(6.9)]{Hank18b}. Concerning the nested integral
in the second line of \req{durho2} it has 
further been shown in \cite[Proposition~4.3]{Hank18b} that the
inner integral converges absolutely
and defines an $L^\infty$ function of $x$ and $x'$.
It follows that
\be{Lem:Fprime0}
   \bigl(F'(u_0)v\bigr)(x) 
   \,=\, -\beta\rho^{(2)}(x)\,v(x) \,+\, \int_{\R^d} k(x,x')v(x')\dx'
\ee
with integral kernel
\be{k}
\begin{aligned}
   k(x,x') 
   &\,=\, \frac{2}{\partial_\mu\rho} \nabla_u\rho(x)\nabla_u\rho(x')
        \,-\,2\beta\,\rho^{(3)}(x,0,x')\\[1ex]
   &\phantom{\,=\, }
        \,-\, \frac{\beta}{2}
        \int_{\R^d} \bigl(\rho^{(4)}(x,0,x'',x''+x')
                            \,-\, \rho^{(2)}(x)\rho^{(2)}(x')\bigr)\dx''\,.
\end{aligned}
\ee

We now rewrite the individual terms of this kernel function.
To begin with, we recall from \req{durho} and \req{dmuom2} that
\[
   \nabla_u\rho \,=\, -\frac{1}{2}\,\partial_\mu\rho^{(2)}
   \,=\, -\frac{1}{2}\,\partial_\mu\omega^{(2)} \,-\, \rho\,\partial_\mu\rho\,,
\]
so that
\begin{align*}
   \frac{2}{\partial_\mu\rho}\nabla_u\rho(x)\nabla_u\rho(x')
   &\,=\, \frac{1}{2\partial_\mu\rho}\,\partial_\mu\omega^{(2)}(x)\,
          \partial_\mu\omega^{(2)}(x')\\[1ex]
   &\phantom{\,=\ }
         \,+\, \rho\,\partial_\mu\omega^{(2)}(x) 
         \,+\, \rho\,\partial_\mu\omega^{(2)}(x') 
         \,+\, 2\rho^2\partial_\mu\rho\,.
\end{align*}
The partial derivatives with respect to $\mu$ in the final three terms
of the right-hand side can be eliminated with the help of \req{dmuomega2} 
and Lemma~\ref{Lem:dmurho}:
This yields
\be{Lem:Fprime1}
\begin{aligned}
   &\frac{2}{\partial_\mu\rho}\nabla_u\rho(x)\nabla_u\rho(x')
    \,=\, \frac{1}{2\partial_\mu\rho}\,\partial_\mu\omega^{(2)}(x)\,
          \partial_\mu\omega^{(2)}(x')\\[1ex]
   &\qquad
          \,+\, 2\beta\rho\,\omega^{(2)}(x)
          \,+\, \beta\rho\int_{\R^d}\omega^{(3)}(x,0,x'')\dx''
          \,+\, 2\beta\rho\,\omega^{(2)}(x')\\[1ex]
   &\qquad 
          \,+\, \beta\rho\int_{\R^d}\omega^{(3)}(x',0,x'')\dx'' 
          \,+\, 2\beta\rho^3 
          \,+\, 2\beta\rho^2\int_{\R^d} \omega^{(2)}(x'')\dx''\,.
\end{aligned}
\ee

Concerning the second term of the integral kernel~\req{k} we apply
the definition~\req{omegan} of the cluster functions to rewrite
\be{Lem:Fprime2}
    \rho^{(3)}(x,0,x')
    \,=\, \omega^{(3)}(x,0,x') \,+\, \rho\,\omega^{(2)}(x)
          \,+\, \rho\,\omega^{(2)}(x-x') \,+\, \rho\,\omega^{(2)}(x')+\rho^3\,,
\ee
where we have also used \req{om2even}.

In the same way we can rewrite the integrand of the 
integral in \req{k} in terms of the cluster functions:
\begin{align*}
    &\rho^{(4)}(x,0,x'',x''+x') \,-\, \pcf(x)\pcf(x')  \\
    &\qquad\,=\, \omega^{(4)}(x,0,x'',x''+x') 
     \,+\, \omega^{(3)}(x,0,x'')\, \rho
     \,+\, \omega^{(3)}(x,0,x''+x')\, \rho \\
    &\qquad\phantom{\,=\, }
     \,+\,\omega^{(3)}(x,x'',x''+x')\, \rho
     \,+\, \omega^{(3)}(0,x'',x''+x')\,\rho  \\
    &\qquad\phantom{\,=\, }
     \,+\, \omega^{(2)}(x-x'')\,\omega^{(2)}(x''+x') 
     \,+\, \omega^{(2)}(x-x'-x'')\,\omega^{(2)}(x'') \\
    &\qquad\phantom{\,=\, }
     \,+\, \omega^{(2)}(x''-x)\,\rho^2 
     \,+\, \omega^{(2)}(x''+x'-x)\,\rho^2  \\
    &\qquad\phantom{\,=\, }
     \,+\, \omega^{(2)}(x'')\,\rho^2 
     \,+\, \omega^{(2)}(x''+x')\,\rho^2 \\
    &\qquad\,=\, \omega^{(4)}(x,0,x'',x''+x') 
     \,+\, \omega^{(3)}(x,0,x'')\, \rho
     \,+\, \omega^{(3)}(x,0,x''+x')\, \rho \\
    &\qquad\phantom{\,=\, }
     \,+\,\omega^{(3)}(x',0,x-x'')\, \rho
     \,+\, \omega^{(3)}(x',0,-x'')\,\rho  \\
    &\qquad\phantom{\,=\, }
     \,+\, \omega^{(2)}(x-x'')\,\omega^{(2)}(x''+x') 
     \,+\, \omega^{(2)}(x-x'-x'')\,\omega^{(2)}(x'') \\
    &\qquad\phantom{\,=\, }
     \,+\, \omega^{(2)}(x''-x)\,\rho^2 
     \,+\, \omega^{(2)}(x''+x'-x)\,\rho^2  \\
    &\qquad\phantom{\,=\, }
     \,+\, \omega^{(2)}(x'')\,\rho^2 
     \,+\, \omega^{(2)}(x''+x')\,\rho^2 ,
\end{align*}
where we have used in the second step that the cluster functions are 
translation and permutation invariant.
Integrating over $x''\in\R^d$ we thus obtain
\be{Lem:Fprime3}
\begin{aligned}
    &\int_{\R^d}
     \bigl(\rho^{(4)}(x,0,x'',x''+x') \,-\, \pcf(x)\pcf(x')\bigr)\dx''\\[1ex]
    &\qquad\qquad
     \,=\, \int_{\R^d} \omega^{(4)}(x,0,x'',x''+x') \dx''
           \,+\, 4\rho^2\int_{\R^d}\omega^{(2)}(x'')\dx''\\[1ex]
    &\qquad\qquad\phantom{\,=\, }
     \,+\, 2\rho\int_{\R^d}\omega^{(3)}(x,0,x'')\dx'' 
     \,+\, 2\rho\int_{\R^d}\omega^{(3)}(x',0,x'')\dx'' \\[1.5ex]
    &\qquad\qquad\phantom{\,=\, }
     \,+\, \bigl(\omega^{(2)}*\omega^{(2)}\bigr)(x+x') 
     \,+\, \bigl(\omega^{(2)}*\omega^{(2)}\bigr)(x-x') \,,
\end{aligned}
\ee
where all the terms on the right-hand side are bounded functions of $x$ and
$x'$: This follows from \req{om3int} and the fact that 
$\omega^{(2)}\in L^1(\R^d)\cap L^\infty(\R^d)$.

Inserting \req{Lem:Fprime1}, \req{Lem:Fprime2}, and \req{Lem:Fprime3} into
\req{k} we conclude that
\begin{align*}
   k(x,x')
   &\,=\, \frac{1}{2\partial_\mu\rho}\,\partial_\mu\omega^{(2)}(x)\,
          \partial_\mu\omega^{(2)}(x')
          \,-\, 2\beta\,\omega^{(3)}(x,0,x') 
          \,-\, 2\beta\rho\,\omega^{(2)}(x-x')\\[1ex]
   &\phantom{\,=\ }
          \,-\,\frac{\beta}{2}\int_{\R^d}\omega^{(4)}(x,0,x'',x''+x')\dx''\\[1ex]
   &\phantom{\,=\ }
          \,-\,\frac{\beta}{2}\,\bigl(\omega^{(2)}*\omega^{(2)}\bigr)(x+x') 
          \,-\,\frac{\beta}{2}\,\bigl(\omega^{(2)}*\omega^{(2)}\bigr)(x-x')\,.
\end{align*}
Note that the final two terms of this integral kernel representation 
define the same convolution operator
\begin{align*}
    v \, \mapsto\, 
    -\frac{\beta}{2}
    \int_{\R^d}\bigl(\omega^{(2)}* \omega^{(2)}\bigr)(\,\cdot\, -x')\,v(x')\dx'\,,
\end{align*}
because every $v \in \V$ is an even function.
The assertion~\req{finalform} thus follows readily from \req{Lem:Fprime0}.
\end{proof}

Now we can prove the main result of this section.

\begin{theorem}
\label{Thm:Fprime2}
Under the assumptions of Theorem~\ref{Thm:Phiprime}
the operator $F'(u_0)$ extends to a selfadjoint negative semidefinite operator
on $L^2(\R^d)$.
\end{theorem}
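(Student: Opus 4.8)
The plan is to establish the $L^2(\R^d)$-boundedness of $F'(u_0)$ on the dense subspace $\V$ and then to extend by continuity, letting selfadjointness and definiteness descend from the identity \req{FprimeHessian}. Concretely, suppose I can produce a constant $C$ with $\norm{F'(u_0)v}_{L^2(\R^d)}\leq C\norm{v}_{L^2(\R^d)}$ for every $v\in\V$. Since $\V$ is dense in $L^2(\R^d)$, this yields a unique bounded extension of $F'(u_0)$ to all of $L^2(\R^d)$. The remaining two properties then require no further work: for $v,\vt\in\V$ equation \req{FprimeHessian} gives $\scalp{\vt,F'(u_0)v}=-2\,\Phi''(u_0)(v,\vt)$, and because $\Phi''(u_0)$ is symmetric and positive semidefinite (as noted following Theorem~\ref{Thm:Hessian}), the operator $F'(u_0)$ is symmetric and negative semidefinite on $\V$; passing to limits in the continuous $L^2$-pairing transfers both to the extension.

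Hence the whole issue is the $L^2\!\to\!L^2$ estimate; note that Theorem~\ref{Thm:Fprime1} is not enough, since it only bounds $F'(u_0)$ in the stronger $\V$-norm. I would obtain the estimate term by term from the explicit representation \req{finalform} of Lemma~\ref{Lem:Fprime}. The multiplication operator \req{finalform-a} is bounded by $\norm{\pcf}_{L^\infty(\R^d)}$; the convolution operators in \req{finalform-b} are handled by Young's inequality, using $\omega^{(2)}\in L^1(\R^d)$ (and hence $\omega^{(2)}*\omega^{(2)}\in L^1(\R^d)$); and the rank-one operator \req{finalform-c} is controlled once I note that $\partial_\mu\omega^{(2)}\in L^1(\R^d)\cap L^\infty(\R^d)\subset L^2(\R^d)$, which follows from \req{dmuomega2} together with $\omega^{(2)}\in L^1(\R^d)\cap L^\infty(\R^d)$ and the bounds \req{om3int} and \req{Thm448} on $\int_{\R^d}\omega^{(3)}(\cdot,0,x')\dx'$ (recall that $\partial_\mu\rho>0$ is a positive scalar).

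The main obstacle is the $L^2$-boundedness of the two genuine integral operators \req{finalform-d} and \req{finalform-e}, whose kernels are $\omega^{(3)}(x,0,x')$ and $\int_{\R^d}\omega^{(4)}(x,0,x'',x'+x'')\dx''$. For these I would invoke the Schur test, which demands that both $\sup_{x}\int_{\R^d}|K(x,x')|\dx'$ and $\sup_{x'}\int_{\R^d}|K(x,x')|\dx$ be finite. For \req{finalform-d} the first of these is exactly \req{om3int}; the second follows from it after using the permutation and translation invariance of $\omega^{(3)}$ to swap the roles of $x$ and $x'$, reducing both quantities to $\sup_{x}\int_{\R^d}|\omega^{(3)}(x,0,x')|\dx'$. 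For \req{finalform-e} the substitution $w=x'+x''$ collapses both Schur integrals to $\sup_{x}\int_{(\R^d)^2}|\omega^{(4)}(x,0,x'',w)|\dx''\,\rmd w$. The delicate point, and the reason I expect this to be the hard step, is that \req{Thm448} only provides an $L^1$ statement, whereas the Schur test needs a uniform-in-$x$ (that is, $L^\infty$) bound on the integrated fourth cluster function; securing this is precisely where I would appeal to Lemma~\ref{Lem:Frommer} in the appendix. Summing the five bounds gives the required $L^2$-estimate and finishes the proof.
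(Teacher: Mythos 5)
Your overall architecture coincides with the paper's: prove an $L^2$-to-$L^2$ bound on the dense subspace $\V$ term by term from the representation~\req{finalform}, extend by continuity, and let selfadjointness and negative semidefiniteness descend from \req{FprimeHessian} together with the symmetry and positive semidefiniteness of $\Phi''(u_0)$. Your treatment of \req{finalform-a}--\req{finalform-c} is exactly the paper's. The divergence is in the two integral operators: you use the Schur test where the paper uses a Hilbert--Schmidt argument. For \req{finalform-d} your route is fine, since \req{om3int} supplies the required uniform bound and the kernel $\omega^{(3)}(x,0,x')$ is symmetric in $x$ and $x'$.

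For \req{finalform-e}, however, you have correctly located the obstruction but not resolved it within the hypotheses of the theorem. The Schur test needs $\sup_x\int_{(\R^d)^2}\bigl|\omega^{(4)}(x,0,x'',w)\bigr|\dx''\rmd w<\infty$, and you propose to obtain this from Lemma~\ref{Lem:Frommer}. But that lemma is stated (and proved) only for pair potentials whose majorant has the Lennard-Jones form~\req{LJtype}; it is precisely the strengthening of \req{Thm448} that Section~\ref{Sec:LJtype} reserves for that subclass. Theorem~\ref{Thm:Fprime2} is asserted under the assumptions of Theorem~\ref{Thm:Phiprime}, i.e., for a general $u_0\in\U$ with $\psi_0$ subject only to \req{varphi,psi}, so your argument proves a strictly weaker statement. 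The repair is to abandon the Schur test for this term (and, if you like, for \req{finalform-d} as well) in favor of the observation the paper uses: the kernel $\int_{\R^d}\omega^{(4)}(\,\cdot\,,0,x'',\,\cdot\,+x'')\dx''$ is bounded on $\R^d\times\R^d$ (as established after \req{Lem:Fprime3}) and lies in $L^1(\R^d\times\R^d)$ by \req{Thm448}; hence it lies in $L^2(\R^d\times\R^d)$ and the operator is Hilbert--Schmidt, in particular bounded (indeed compact) on $L^2(\R^d)$. This uses only facts already available for all $u_0\in\U$ in the gas phase and closes the gap.
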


\begin{proof}
We already know from Theorem~\ref{Thm:Fprime1}
that $F'(u_0)\in\L(\V,L^2(\R^d))$. Since $\V$ is a dense subset of $L^2(\R^d)$, 
it remains to discuss the continuity from $L^2(\R^d)$ to $L^2(\R^d)$
of each of the terms defined in lines \req{finalform-a}-\req{finalform-e} 
of Lemma~\ref{Lem:Fprime}.

For the multiplication operator in \req{finalform-a} this is true because
the pair correlation function $\pcf$ is bounded. The autoconvolution
$\omega^{(2)}*\omega^{(2)}$ of the cluster function $\omega^{(2)}\in L^1(\R^d)$
belongs to $L^1(\R^d)$ again, hence the two convolution operators in 
\req{finalform-b} are bounded operators in $L^2(\R^d)$. 
From \req{dmuomega2} it follows that $\partial_\mu\omega^{(2)}$ 
belongs to $L^1(\R^d)\cap L^\infty(\R^d)$:
the $L^1$ property has been verified in the argument following \req{dmuomega2};
the $L^\infty$ property of the second term in \req{dmuomega2} has been 
established in \req{om3int}. 
This shows that $\partial_\mu\omega^{(2)}\in L^2(\R^d)$, and the continuity
of the operator in \req{finalform-c} is therefore a consequence of the
Cauchy-Schwarz inequality. By virtue of \req{Thm448}
the cluster function 
$\omega^{(3)}(\,\cdot\,,0,\,\cdot\,)$ belongs to
$L^1(\R^d\times\R^d)$. Since it is bounded, it also belongs to
$L^2(\R^d\times\R^d)$. Therefore, \req{finalform-d} defines a compact operator
in $\L(L^2(\R^d))$. Concerning \req{finalform-e} we have already pointed out
after \req{Lem:Fprime3} that the inner integral of $\omega^{(4)}$
is a bounded function of $x$ and $x'$. It also belongs to $L^1(\R^d\times\R^d)$
by virtue of \req{Thm448}, again. The continuity of the
operator defined in \req{finalform-e} therefore follows in the same way as
in the previous case. 

In summary this shows that $F'(u_0)$ extends to an operator in $\L(L^2(\R^d))$.
By virtue of \req{FprimeHessian} this extension is selfadjoint and negative
semidefinite.
\end{proof}

Concerning the Hessian of the relative entropy functional we readily
conclude from \req{FprimeHessian} the following corollary.

\begin{corollary}
\label{Kor:HessianL2}
Under the assumptions of Theorem~\ref{Thm:Phiprime} the Hessian $\Phi''(u_0)$
of the relative entropy functional~\req{Phi} defines a symmetric and 
positive semidefinite bilinear form on $L^2(\R^d)$.
\end{corollary}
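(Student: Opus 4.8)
The plan is to read the corollary off directly from Theorem~\ref{Thm:Fprime2} via the identity~\req{FprimeHessian}, which ties the Hessian of $\Phi$ to the Jacobian of the Henderson operator. First I would recall~\req{FprimeHessian}, namely $\scalp{\vt,F'(u_0)v}=-2\,\Phi''(u_0)(v,\vt)$ for $v,\vt\in\V$, and use it to \emph{define} the sought extension of the Hessian to $L^2(\R^d)\times L^2(\R^d)$ by
\[
   \Phi''(u_0)(v,\vt) \,:=\, -\tfrac{1}{2}\,\scalp{\vt,F'(u_0)v}\,,
\]
where now $F'(u_0)$ denotes the bounded $L^2$-extension supplied by Theorem~\ref{Thm:Fprime2}. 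Since that extension agrees with the original operator on the dense subspace $\V$, the right-hand side reproduces the original bilinear form on $\V\times\V$, so this is genuinely an extension of $\Phi''(u_0)$. Continuity of the extended form on $L^2(\R^d)$ is then immediate from the boundedness of $F'(u_0)$ in $\L(L^2(\R^d))$ together with the Cauchy--Schwarz inequality.

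It then remains to transfer the two algebraic properties of $F'(u_0)$ to $\Phi''(u_0)$. For symmetry I would invoke the selfadjointness of $F'(u_0)$ from Theorem~\ref{Thm:Fprime2}: since the pairing $\scalp{\,\cdot\,,\,\cdot\,}$ is the (real, symmetric) $L^2(\R^d)$ inner product, I obtain $\scalp{\vt,F'(u_0)v}=\scalp{F'(u_0)\vt,v}=\scalp{v,F'(u_0)\vt}$, which is exactly $\Phi''(u_0)(v,\vt)=\Phi''(u_0)(\vt,v)$. For positive semidefiniteness I would set $\vt=v$ and use the negative semidefiniteness of $F'(u_0)$: the factor $-\tfrac{1}{2}$ flips the sign, so $\Phi''(u_0)(v,v)=-\tfrac{1}{2}\,\scalp{v,F'(u_0)v}\geq 0$ for every $v\in L^2(\R^d)$.

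Since every step is a one-line transcription, there is no real obstacle here: all the analytic work --- establishing that the kernel terms \req{finalform-a}--\req{finalform-e} are $L^2$-bounded and that the resulting operator is selfadjoint and negative semidefinite --- has already been carried out in Lemma~\ref{Lem:Fprime} and Theorem~\ref{Thm:Fprime2}. The only point warranting a moment's care is the compatibility of the $L^2$-extension with the original $\V'$-valued form, i.e.\ checking that the pairing appearing in~\req{FprimeHessian} is consistently the $L^2$ pairing; but this is precisely the identification already made in Remark~\ref{Rem:nablaPhi} and in the definition~\req{A} of the operator $A$, so no new argument is needed and the corollary follows.
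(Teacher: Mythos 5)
Your proposal is correct and follows exactly the route the paper intends: the paper states the corollary as an immediate consequence of \req{FprimeHessian} together with Theorem~\ref{Thm:Fprime2}, and your write-up simply makes explicit the transfer of selfadjointness and negative semidefiniteness of the $L^2$-extension of $F'(u_0)$ to symmetry and positive semidefiniteness of $\Phi''(u_0)$ via the factor $-\tfrac{1}{2}$. No gaps.
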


As indicated before we leave it as an open problem whether $F'(u_0)$ is
injective, i.e., whether the local quadratic approximation~\req{Phi2} of the
relative entropy functional is strictly convex.

\section{Lennard-Jones type pair potentials}
\label{Sec:LJtype}
In this final section we establish a stronger regularity result for the
derivative of the Henderson map, respectively the Hessian of the
relative entropy functional, which is valid when the potential $u_0\in\U$
satisfies \req{potentials} with a majorant of the form
\be{LJtype}
   \psi_0(r) \,=\, C(1+r^2)^{-\alpha/2}
\ee
for some $C>0$ and $\alpha>d$. This class of potentials includes the so-called
potentials of Lennard-Jones type, cf.~\cite{Ruel69}.

It it known that in the gas phase the cluster function $\omega^{(2)}$
corresponding to such a pair potential satisfies the same rate of decay 
near infinity as $\psi_0$ does. This entails the following result.

\begin{theorem}
\label{Thm:LJtype}
Let $u_0$ satisfy \req{potentials} for some $\varphi$ as in 
\req{varphi,psi} and $\psi=\psi_0$ of \req{LJtype}. Let the space $\V$ of 
perturbations of $u_0$ be defined as before, cf.~\req{Vd}, 
and let $\rho_*$ belong to the gas phase of $u_0$. Then $F'(u_0)\in\L(\V)$.
\end{theorem}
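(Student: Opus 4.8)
The plan is to read $F'(u_0)$ off the explicit five-term representation \req{finalform} of Lemma~\ref{Lem:Fprime} and to check, term by term, the weighted bound
\[
   \bigl|(F'(u_0)v)(x)\bigr| \,\leq\, C\,\psi_0(|x|)\,\norm{v}_\V
   \qquad\text{for all }x\in\R^d,
\]
which, by the definition \req{Vd} of the norm, is exactly the assertion $F'(u_0)\in\L(\V)$. Two elementary ingredients will recur. First, multiplication by a bounded function preserves $\V$, so the term \req{finalform-a} is controlled by $\beta\norm{\pcf}_{L^\infty(\R^d)}\norm{v}_\V\,\psi_0(|x|)$ since $\pcf$ is bounded in the gas phase. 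Second, I will use a convolution lemma: if $|f|\le A\psi_0$ and $|g|\le B\psi_0$, then splitting $\int f(x-y)g(y)\dy$ over $\{|y|\le|x|/2\}$ and its complement, using that $\psi_0$ is decreasing with $\psi_0(r/2)\le 2^\alpha\psi_0(r)$ and $\psi_0\in L^1(\R^d)$ (this is where $\alpha>d$ enters), yields $|f*g|\le c_d AB\,\psi_0$. Since $\omega^{(2)}$ decays like $\psi_0$ by the quoted decay-of-correlations result, applying this lemma once and twice shows that $\omega^{(2)}$ and $\omega^{(2)}*\omega^{(2)}$ are both $\le C\psi_0$, and convolving them against $v$ (for which $|v|\le\norm{v}_\V\psi_0$) dispatches the two convolution terms in \req{finalform-b}.

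For the rank-one term \req{finalform-c} I would bound the scalar factor by $\bigl|\int_{\R^d}\partial_\mu\omega^{(2)}(x')v(x')\dx'\bigr|\le\norm{v}_\V\norm{\psi_0}_{L^\infty(\R^d)}\norm{\partial_\mu\omega^{(2)}}_{L^1(\R^d)}$, which is finite by the $L^1$ property established after \req{dmuomega2}, and then show that the prefactor $\partial_\mu\omega^{(2)}(x)/(2\partial_\mu\rho)$ decays like $\psi_0$: here $\partial_\mu\rho>0$ in the gas phase, while \req{dmuomega2} writes $\partial_\mu\omega^{(2)}$ in terms of $\omega^{(2)}$ and $\int_{\R^d}\omega^{(3)}(x,0,x')\dx'$, so the claim reduces to decay of the latter. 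The same decay input governs \req{finalform-d} and \req{finalform-e}, where one must bound $\int_{\R^d}|\omega^{(3)}(x,0,x')|\psi_0(|x'|)\dx'$ and the analogous integral of the inner $\omega^{(4)}$-integral; in the $\omega^{(4)}$ case the substitution $(x'',x')\mapsto(x'',x''+x')$ has unit Jacobian and recasts the double integral as $\int_{(\R^d)^2}|\omega^{(4)}(x,0,x_3,x_4)|\dx_3\dx_4$.

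Thus everything hinges on upgrading the soft $L^1$ cluster-function bound \req{Thm448} to the $\psi_0$-weighted decay estimate
\[
   \int_{(\R^d)^{m-2}}\bigl|\omega^{(m)}(x,0,x_3,\dots,x_m)\bigr|
   \dx_3\cdots\dx_m \,\leq\, C_m\,\psi_0(|x|)
   \qquad(m=3,4),
\]
which says that the integrated truncated correlation functions inherit near infinity the decay rate of the potential. This is the step I expect to be the main obstacle: it is precisely here that the Lennard-Jones form \req{LJtype}, with its matching power-law majorant, is indispensable, and the estimate must come from the gas-phase cluster expansion — the same machinery behind the quoted decay of $\omega^{(2)}$, e.g.\ tree-graph bounds on the Kirkwood--Salsburg expansion of $\omega^{(m)}$ — rather than from the $L^1$ estimates used earlier. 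Granting it, one gets $\int_{\R^d}|\omega^{(3)}(x,0,x')|\psi_0(|x'|)\dx'\le\norm{\psi_0}_{L^\infty(\R^d)}\int_{\R^d}|\omega^{(3)}(x,0,x')|\dx'\le C\psi_0(|x|)$, and likewise for the $\omega^{(4)}$ integral, so all five terms of \req{finalform} obey the weighted bound and $F'(u_0)\in\L(\V)$ follows.
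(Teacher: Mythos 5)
Your proposal is correct and follows essentially the same route as the paper: a term-by-term analysis of the five-term representation \req{finalform}, with everything reduced to the weighted decay estimate for the integrated cluster functions $\omega^{(m)}$, $m\le 4$ (your "main obstacle" is precisely the paper's Lemma~\ref{Lem:Frommer}, proved in the appendix from the gas-phase cluster expansion and a tree-graph-type bound on $\phi^{(n)}$, exactly as you anticipate). The only cosmetic difference is that you prove the convolution stability of $\V$ directly by splitting the integral, where the paper cites an earlier result that convolution maps $\V\times\V$ continuously to $\V$.
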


\begin{proof}
The aforementioned result about the cluster function states that
$\omega^{(2)}\in\V$ under the given assumptions;
compare Lemma~\ref{Lem:Frommer} in the appendix.

Now we discuss the continuity of each of the operators corresponding
to the individual terms in \req{finalform}.
Continuity obviously holds for the multiplication operator 
in \req{finalform-a}, because $\pcf$ is bounded. 
Concerning the convolutions in \req{finalform-b} we refer to 
\cite[Proposition~4.1]{Hank18c} for the fact that the convolution is a 
continuous bilinear mapping from $\V\times\V$ to $\V$. Accordingly, the two
terms in \req{finalform-b} also define operators in $\L(\V)$.

For the remaining terms we make use of Lemma~\ref{Lem:Frommer} again. 
For $m=3$ this auxiliary result gives
\be{intomega3}
   \int_{\R^d} \bigl|\omega^{(3)}(\,\cdot\,,0,x')\bigr| \dx' \ \in \ \V,
\ee
and since
\[
   \left|\int_{\R^d} \omega^{(3)}(x,0,x')\,v(x')\dx'\right|
   \,\leq\, \norm{v}_{L^\infty(\R^d)}
            \int_{\R^d} \bigl|\omega^{(3)}(x,0,x')\bigr| \dx'
\]
for every $x\in\R^d$, it follows that
\[
   \left\| \int_{\R^d}\omega^{(3)}(\,\cdot\,,0,x')\,v(x')\dx'\right\|_\V
   \,\leq\, \norm{v}_{L^\infty(\R^d)}
            \left\| \int_{\R^d}
               \bigl|\omega^{(3)}(\,\cdot\,,0,x')\bigr| \dx' \right\|_\V .
\]
Much the same argument applies to the term in \req{finalform-e}:
\begin{align*}
   &\left|\int_{\R^d}\int_{\R^d}     
             \omega^{(4)}(x,0,x'',x'+x'')\dx''\,v(x')\dx'\right|\\[1ex]
   &\qquad
    \,\leq\, \norm{v}_{L^\infty(\R^d)}
             \int_{\R^d}\int_{\R^d} \bigl|\omega^{(4)}(x,0,x'',x')\bigr|
             \dx'\dx''
\end{align*}
for every $x\in\R^d$,
and the right-hand side again belongs to $\V$ as a function of $x$ 
by virtue of Lemma~\ref{Lem:Frommer} for $m=4$. 
Since $\V$ is continuously embedded in $L^\infty(\R^d)$
this shows that the two terms in \req{finalform-d} and \req{finalform-e}
correspond to operators in $\L(\V)$.

Finally, concerning the term in \req{finalform-c} we conclude from
\req{dmuomega2}, \req{intomega3}, and the fact that $\omega^{(2)}\in\V$ 
that $\partial_\mu\omega^{(2)}$ belongs to $\V$ as well.
Accordingly, this term also represents a bounded operator from $\V$ to $\V$,
and the proof is done.
\end{proof}

Note that if the target $\P_*$ and the model $\P_0$ are Gibbs measures
corresponding to pair potentials $u_*$ and $u_0$ in $\U$ and their associated
chemical potentials, where $u_*$ and $u$ satisfy
\req{potentials} for the same majorant $\psi=\psi_0$ given by \req{LJtype},
and if the density $\rho_*$ belongs to the gas phase of both pair potentials,
then the cluster functions $\omega^{(2)}_*$ and $\omega^{(2)}_0$ of
$\P_*$ and $\P_0$, respectively, both belong to $\V$. It thus follows from
Remark~\ref{Rem:nablaPhi} that the gradient $\nabla\Phi(u_0)$ of the
relative entropy functional belongs to $\V$ as well, and therefore
the Newton equation~\req{IMC} of the IMC iterative scheme is an operator
equation in $\V$ by virtue of Theorem~\ref{Thm:LJtype}.
As shown in \cite[Remark~6.5]{Hank18c}, a valid choice $u_0\in\U$, 
which satisfies~\req{potentials} for the same majorant \req{LJtype}, 
is given by the potential of mean force,
\[
   u_0 \,=\, -\frac{1}{\beta}\,\log\bigl(\pcf_*/\rho_*^2\bigr)
   \,=\, -\frac{1}{\beta}\,\log\bigl(1+\omega^{(2)}_*/\rho_*^2\bigr)\,,
\]
provided the density $\rho_*$ is sufficiently small. 
Moreover, $u_*-u_0$ belongs to $\V$ for this initial guess.

\renewcommand{\thesection}{\Alph{section}}
\setcounter{section}{1}
\setcounter{equation}{0}
\setcounter{theorem}{0}
\section*{Appendix}
In this appendix we provide some auxiliary results. In doing so we will 
repeatedly make use of the short-hand notation
\[
   \xx_n \,=\, (x_1,\dots,x_n) \,\in\,(\R^d)^n\,.
\]

The first result is needed for the proof of Theorem~\ref{Thm:Chayes},
where we construct a $(\mu,u)$-Gibbs measure for a pair potential $u\in\U$
with prescribed density $\rho_*<\rcp(u)$.

\begin{lemma}
\label{Lem:gibbscontinuity}
Let $u\in\U$ and assume that the sequence $(\mu_k)_k\subset \R$ satisfies
$\mu_k\to\mu\in\R$ as $k\to\infty$. Furthermore, for every $k\in\N$ let 
$\P_k$ be a $(\mu_k,u)$-Gibbs measure.
Then $(\P_k)_k$ has an accumulation point with respect to the local topology,
and every such accumulation point is a $(\mu,u)$-Gibbs measure. 
\end{lemma}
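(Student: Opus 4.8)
The plan is to combine a compactness argument, based on a Ruelle bound that is uniform in $k$, with a passage to the limit in the equilibrium equations that characterise Gibbs measures.

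First I would establish relative compactness. Since $\mu_k\to\mu$, there is $\bar\mu<\infty$ with $\mu_k\le\bar\mu$ for all $k$. For the fixed potential $u\in\U$ the stability bound \req{B} and the integrability \req{cbeta} hold with constants that are independent of the chemical potential, so every $(\mu_k,u)$-Gibbs measure $\P_k$ obeys a Ruelle bound $\rho_k^{(m)}\le\xi^m$ (compare \req{Ruellebound}) with a single activity constant $\xi=\xi(\bar\mu)$ uniform in $k$; cf.~\cite[Corollary~5.3]{Ruel70}. A uniform bound of this form forces the family $(\P_k)_k$ to be relatively compact in the local topology, see \cite{Ruel70} and \cite[Section~4]{HGEO94} (equivalently Georgii~\cite{Geo11}). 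This yields the existence of an accumulation point and shows that every accumulation point inherits the same Ruelle bound, hence is tempered with finite locally second moments.

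Next I would show that an arbitrary accumulation point $\P$ is a $(\mu,u)$-Gibbs measure. Passing to a subsequence I may assume $\P_{k_j}\to\P$ in the local topology. I would use the characterisation of Gibbs measures by the GNZ (equilibrium) equation: $\P_k$ is $(\mu_k,u)$-Gibbs precisely when
\[
   \int_\Gamma\sum_{x\in\gamma}f(x,\gamma\setminus\{x\})\dP_k(\gamma)
   \,=\, e^{\beta\mu_k}\int_\Gamma\int_{\R^d}
      f(x,\gamma)\,e^{-\beta\sum_{y\in\gamma}u(x-y)}\dx\dP_k(\gamma)
\]
for all suitable local test functions $f$; equivalently, one could propagate the Kirkwood--Salsburg equations \req{KS2} through the convergent subsequence, the uniform Ruelle bound keeping the correlation functions convergent and the identity meaningful. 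Because $f$ is local, the left-hand side converges under $\P_{k_j}\to\P$, and on the right-hand side $e^{\beta\mu_{k_j}}\to e^{\beta\mu}$, so it remains only to pass the limit through the integral of the Boltzmann factor.

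The main obstacle is precisely this last limit passage, since $\gamma\mapsto e^{-\beta\sum_{y\in\gamma}u(x-y)}$ is neither bounded nor continuous on $\Gamma$: for a hard-core $u$ it degenerates to the (only upper semicontinuous) indicator of non-overlap, while the attractive tail makes the local energy unbounded below along configurations of high local density. I would resolve this by exploiting the stability and integrability of $u$: by \req{cbeta} the far-field contribution $\sum_{y\in\gamma}|u(x-y)|$ is controlled on configurations of bounded density, and the superstability estimates of Ruelle~\cite{Ruel70}, together with the \emph{uniform} Ruelle bound from the first step, suppress high-density configurations strongly enough to dominate the Boltzmann factor uniformly in $j$; the hard-core singularity is handled by excising a shrinking neighbourhood of the core and using upper semicontinuity, the excluded contribution being bounded by the Ruelle bound. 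The dominated-convergence interchange then turns the identity into the GNZ equation with activity $e^{\beta\mu}$, so $\P$ is a $(\mu,u)$-Gibbs measure. Since the argument applies verbatim to every convergent subsequence, every accumulation point of $(\P_k)_k$ is a $(\mu,u)$-Gibbs measure.
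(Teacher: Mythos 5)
Your first half coincides with the paper's: the boundedness of $(\mu_k)_k$ gives a Ruelle bound \req{Ruellebound} uniform in $k$, and compactness follows. The paper makes the compactness concrete by extracting, for each $m$, a weak$^*$ convergent subsequence of $(\rho_k^{(m)})_k$ in $L^\infty((\R^d)^m)$ and diagonalising, which is equivalent to convergence of a subsequence of $(\P_k)_k$ in the local topology; your appeal to relative compactness of the measures themselves is the same mechanism stated at the level of measures.

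For the identification of the limit the two arguments diverge, and this is where your primary route has a genuine gap. You characterise Gibbs measures by the GNZ equation and must pass to the limit in $\int\!\int f(x,\gamma)\,e^{-\beta\sum_{y\in\gamma}u(x-y)}\dx\dP_{k_j}(\gamma)$. The integrand is neither bounded (stability only gives $W(x;\gamma)\gtrsim -B\thsp N$, so the Boltzmann factor can grow exponentially in the local particle number) nor local (the energy sum runs over the whole configuration), so local convergence of $\P_{k_j}$ does not by itself yield convergence of these integrals; one needs a truncation of the interaction range plus uniform exponential moment bounds on local particle numbers, i.e.\ the full superstability probability estimates of Ruelle, applied uniformly along the subsequence. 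Your sketch names these ingredients but does not assemble them, and the assertion that they ``dominate the Boltzmann factor uniformly in $j$'' is precisely the step that needs proof. The paper sidesteps all of this by working with the Kirkwood--Salsburg equations \req{KS2} --- the alternative you mention only in passing. There the kernel is the Mayer function $e^{-\beta u}-1$, which lies in $L^1(\R^d)$ by \req{cbeta}, and the correlation functions are uniformly bounded by \req{Ruellebound}; hence the series on the right-hand side of \req{KS2} converges uniformly in $k$, weak$^*$ convergence of the $\rho_k^{(m)}$ together with $\mu_k\to\mu$ passes the identity to the limit, and \cite[Corollary~5.3]{Ruel70} then certifies that the limit is a tempered $(\mu,u)$-Gibbs measure. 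If you flesh out your argument, I recommend either carrying out the superstability estimates in detail or switching to the Kirkwood--Salsburg route, which requires no new estimates beyond what you already have. You should also record, as the paper does, that translation invariance survives the weak$^*$ limit.
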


\begin{proof}
Let $(\rho_k^{(m)})_m$ be the sequence of correlation functions of $\P_k$.
Since the sequence $(\mu_k)_k $ is bounded, 
the measures $\P_k$ satisfy
a uniform Ruelle condition by virtue of \cite[Corollary 5.3]{Ruel70},
i.e., there exists $\xi>0$, such that
\be{Ruellebound}
        \bigl\|\rho_k^{(m)}\bigr\|_{L^\infty((\R^d)^m)} \,\leq\, \xi^m \qquad 
        \text{ for every }\ k,m \in\N\,.
\ee
This implies that for fixed $m\in\N$ the sequence $(\rho_k^{(m)})_k$ of 
correlation functions has a weak$^*$ convergent subsequence in
$L^\infty((\R^d)^m)$ with limit 
$\rho^{(m)}$, say.
With a diagonalisation argument we can thus find a subsequence of indices 
$k\in\N$, for which
\be{weakstar}
    \rho^{(m)}_k \rightharpoonup^* \rho^{(m)}, \qquad k \to \infty\,,
\ee
holds for every $m\in\N$. This is equivalent to saying that
$(\P_k)_k$ has a subsequence which converges to some probability measure $\P$
in the local topology.
To simplify the notation we assume in the sequel that the entire sequence
$(\P_k)_k$ is convergent.

Since every $\P_k$ is a Gibbs measure, the respective correlation functions 
satisfy the Kirkwood-Salsburg equations, 
cf., e.g., \cite[Corollary 5.3]{Ruel70}: 
For each $k\in\N$ and every $m\in\N_0$ there holds
\be{KS2}
\begin{aligned}
        &\rho^{(m+1)}_k(x_0,\xx_m) 
         \,=\, e^{ \beta\mu_k } e^{-\beta W(x_0; \xx_m)}\,\cdot \\[1ex]
        &\qquad\qquad 
         \left(\rho_k^{(m)}(\xx_m)
         \,+\,\sum_{n=1}^\infty \frac{1}{n!}\int_{ (\R^d)^n}
         \prod_{j=1}^n
         f(x_0-y_j)\rho^{(m+n)}_k(\xx_m,\yy_n)\dyy_n\right),
\end{aligned}
\ee
where $\rho_k^{(0)}=1$, $\yy_n=(y_1,\dots,y_n)\in(\R^d)^n$,
\be{Mayer}
   f(x) \,=\,  e^{- \beta u(x)}-1
\ee
is the well-known Mayer-function, and  
\[
   W(x_0;\xx_m) \,=\, \sum_{i=1}^m u(x_0-x_i)\,,
\]
the latter being taken to be zero for $m=0$. 
Note that the right-hand side of 
\req{KS2} converges in $L^\infty((\R^d)^{m+1})$
by virtue of the Ruelle condition~\req{Ruellebound} and the fact that
the Mayer function~\req{Mayer} belongs to $L^1(\R^d)$; compare~\req{cbeta}.
Moreover, since the Ruelle condition is uniform
in $k$, this convergence is also uniform with respect to $k$, and hence,
the weak$^*$ convergence~\req{weakstar} and the convergence $\mu_k\to\mu$
imply that the Kirkwood-Salsburg equations~\req{KS2}
also hold true for the limiting correlation functions $(\rho^{(m)})_m$.
It thus follows from \cite[Corollary 5.3]{Ruel70} again that $\P$ is a 
tempered $(\mu,u)$-Gibbs measure.
Finally, since $\rho_k^{(m)}$ is translation invariant for every $k\in\N$
and every $m\in\N$,
its weak$^*$ limits $\rho^{(m)}$ must also be translation invariant, 
and hence $\P$ is translation invariant.
\end{proof}

The remaining two results of the appendix concern properties of the
cluster functions (see Remark~\ref{Rem:Ursell}) within the gas phase.
The proofs make use of their \emph{cluster expansions}
(compare \cite[Section~4.4]{Ruel69} and \cite{Stel64})
\be{OmegaCluster}
    \omega^{(m)}(\xx_m) \,=\, 
    \sum_{n=m}^\infty \frac{e^{n\beta\mu}}{(n-m)!}
    \int_{(\R^d)^{n-m}}\phi^{(n)}(\xx_n) \dx_{m+1}\cdots\dx_n
\ee
for $\xx_m\in(\R^d)^m$ with
\be{cluster}
    \phi^{(n)}(\xx_n) \,=\, 
    \sum_{{\cal C}\in\mathfrak{C}_{n}} \prod_{(i,j)\in {\cal C}} f(x_i-x_j) \,,
\ee
where $f$ is, again, the Mayer function~\req{Mayer}, 
and $\mathfrak{C}_n$ denotes the set of undirected connected graphs  
with vertices $\{1,\dots,n\}$ and edges $(i,j)\in{\cal C}$, which connect
vertices $i$ and $j$;
for $n=m$ the integral in \req{OmegaCluster} has to be replaced by 
the value $\phi^{(m)}(\xx_m)$ of its integrand. 
These cluster expansions are absolutely convergent
for $u_0\in\U$ and $\mu\leq\mu_0$ with $\mu_0$ of \req{mu0}.

%

\begin{lemma}
\label{Lem:dmurho}
Let $u_0\in\U$ and $\mu<\mu_0$ with $\mu_0$ as in \req{mu0}.
Then the density $\rho$ of the corresponding Gibbs measure
is differentiable with respect to $\mu$ and the derivative is given by
\be{dmurho}
   \partial_\mu \rho \,=\, 
   \beta \rho \,+\, \beta \int_{\R^d}\omega^{(2)}(x)\dx\,.
\ee
\end{lemma}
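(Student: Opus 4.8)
The plan is to read the density off the cluster expansion~\req{OmegaCluster} and to differentiate it termwise in $\mu$. Setting $m=1$ and recalling that $\omega^{(1)}=\rho$, the translation invariance of $\phi^{(n)}$ allows me to fix the first argument at the origin, so that
\[
   \rho \,=\, \sum_{n=1}^\infty \frac{e^{n\beta\mu}}{(n-1)!}\,I_n\,,
   \qquad
   I_n \,=\, \int_{(\R^d)^{n-1}}
             \phi^{(n)}(0,x_2,\dots,x_n)\,\dx_2\cdots\dx_n\,.
\]
Differentiating each summand with respect to $\mu$ produces the candidate series
\[
   \partial_\mu\rho \,=\, \sum_{n=1}^\infty \frac{n\beta\,e^{n\beta\mu}}{(n-1)!}\,I_n\,,
\]
and the remaining task is to justify this interchange and then to identify the right-hand side with $\beta\rho+\beta\int_{\R^d}\omega^{(2)}$.

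I expect the interchange of differentiation and summation to be the main (though essentially routine) obstacle. Since $\mu<\mu_0$ I may fix $\mu'$ with $\mu<\mu'<\mu_0$; using $e^{n\beta(\mu'-\mu)}\geq n\beta(\mu'-\mu)$ yields $n\,e^{n\beta\mu}\leq C\,e^{n\beta\mu'}$ with $C=1/(\beta(\mu'-\mu))$. Because the cluster expansion~\req{OmegaCluster} is absolutely convergent at $\mu'$, the series $\sum_n \frac{e^{n\beta\mu'}}{(n-1)!}\int_{(\R^d)^{n-1}}|\phi^{(n)}(0,x_2,\dots,x_n)|\,\dx_2\cdots\dx_n$ is finite; hence the differentiated series is dominated uniformly on a neighbourhood of $\mu$ inside $(-\infty,\mu_0)$, which legitimises termwise differentiation and shows that $\rho$ is differentiable there.

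It then remains to match coefficients. Taking $m=2$ in~\req{OmegaCluster}, using the short-hand $\omega^{(2)}(x)=\omega^{(2)}(x,0)$, integrating over $x\in\R^d$, and invoking the translation invariance of $\phi^{(n)}$ once more to reduce the resulting $(n-1)$-fold spatial integral to the very same constant $I_n$, I obtain
\[
   \int_{\R^d}\omega^{(2)}(x)\,\dx
   \,=\, \sum_{n=2}^\infty \frac{e^{n\beta\mu}}{(n-2)!}\,I_n\,,
\]
the finiteness of the left-hand side being guaranteed by~\req{Thm448}. Adding $\beta\rho$ and combining the two series termwise finishes the proof: the $n=1$ contribution comes solely from $\beta\rho$ and already equals $\frac{n\beta e^{n\beta\mu}}{(n-1)!}I_n$ at $n=1$, while for $n\geq 2$ the identity $\frac{1}{(n-1)!}+\frac{1}{(n-2)!}=\frac{n}{(n-1)!}$ gives $\frac{\beta e^{n\beta\mu}}{(n-1)!}I_n+\frac{\beta e^{n\beta\mu}}{(n-2)!}I_n=\frac{n\beta e^{n\beta\mu}}{(n-1)!}I_n$. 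Thus $\beta\rho+\beta\int_{\R^d}\omega^{(2)}(x)\,\dx$ coincides termwise with the differentiated series for $\partial_\mu\rho$, which is exactly~\req{dmurho}.
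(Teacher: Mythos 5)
Your proposal is correct and follows essentially the same route as the paper: both read $\rho=\omega^{(1)}$ off the cluster expansion~\req{OmegaCluster}, differentiate termwise in $\mu$, and recognise the result as $\beta\rho+\beta\int\omega^{(2)}$ via the elementary identity $\tfrac{n}{(n-1)!}=\tfrac{1}{(n-1)!}+\tfrac{1}{(n-2)!}$ (the paper splits the differentiated series, you assemble the right-hand side and match coefficients, which is the same computation). Your explicit domination argument for interchanging $\partial_\mu$ with the sum is a welcome addition that the paper leaves implicit in its appeal to absolute convergence for $\mu\leq\mu_0$.
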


\begin{proof}
By \req{OmegaCluster} and \req{omegan} there holds (with $x_1=0$)
\begin{align*}
    \rho \,=\, \omega^{(1)}(0)
    \,=\, \sum_{n=1}^\infty \frac{e^{n\beta\mu}}{(n-1)!} 
          \sum_{{\cal C}\in\mathfrak{C}_n} 
          \int_{(\R^d)^{n-1}} \prod_{(i,j)\in \cal{C}} f(x_i-x_j) 
          \dx_2\cdots\dx_n\,.
\end{align*}
Differentiating with respect to $\mu $ we obtain
\begin{align*}
   \partial_\mu \rho 
   &\,=\, \sum_{n=1}^\infty \frac{n\beta\,e^{n\beta\mu}}{(n-1)!}              
          \sum_{{\cal C}\in\mathfrak{C}_n} 
          \int_{(\R^d)^{n-1}} \prod_{(i,j)\in \cal{C}} f(x_i-x_j)
          \dx_2\cdots\dx_n\\[1ex]
   &\,=\, \beta \sum_{n=1}^\infty \frac{e^{n\beta\mu}}{(n-1)!}           
          \sum_{{\cal C}\in\mathfrak{C}_n} 
          \int_{(\R^d)^{n-1}} \prod_{(i,j)\in \cal{C}} f(x_i-x_j)
          \dx_2\cdots\dx_n\\[1ex]
   &\phantom{\,=\ }
          \,+\, \beta\int_{\R^d}\left(
                 \sum_{n=2}^\infty \frac{e^{n\beta\mu}}{(n-2)!}           
                 \sum_{{\cal C}\in\mathfrak{C}_n} 
                 \int_{(\R^d)^{n-2}} 
                    \prod_{(i,j)\in \cal{C}} f(x_i-x_j) 
          \dx_3\cdots\dx_n\right)\!\dx_2\\[1ex]
   &\,=\, \beta\rho \,+\, \beta\int_{\R^d}\omega^{(2)}(0,x_2)\dx_2\,.
\end{align*}
Using the short-hand notation for $\omega^{(2)}$ and \req{om2even},
the assertion follows.
\end{proof}
\begin{remark}
\rm
Lemma~\ref{Lem:dmurho} is general folklore, at least its 
``grand canonical version''
\be{dmurho-grandcanonical}
   \partial_\mu \rho_\Lambda(0)
   \,=\, \beta \rho_\Lambda(0) \,+\, 
         \beta \int_\Lambda 
             \Bigl(\pcf_\Lambda(0,x) - 
             \rho_\Lambda(0)\rho_\Lambda(x)\Bigr)\dx
\ee
for a bounded domain $\Lambda\subset\R^d$ is well-known, 
where $\rho_\Lambda$ and $\pcf_\Lambda$ denote the corresponding 
grand canonical correlation functions;
compare, e.g., \cite[Eq.~(2-9)]{Stel64}, or Ben-Naim~\cite[Eq.~(1.53)]{BenN06}.
An alternative proof of \req{dmurho} is thus obtained by passing in
\req{dmurho-grandcanonical} to the thermodynamic limit.
\fin
\end{remark}

Our final lemma strenghtens Ruelle's result~\req{Thm448} 
for Lennard-Jones type pair potentials, and extends 
\cite[Corollary~5.2]{Hank18c} to higher order 
cluster functions.\footnote{
The results in \cite{Hank18c}, like those in \cite{Hank18a,Hank18b}, extend to
the present setting.}

\begin{lemma}
\label{Lem:Frommer}
    Let $u \in\U$ be a pair potential which satisfies \req{varphi,psi},
    \req{potentials} with majorant $\psi=\psi_0$ as in \req{LJtype},
    and let $\omega^{(m)}$, $m\in\N$, be the associated cluster functions.
    Further, let the chemical potential satisfy $\mu\leq\mu_0$, cf.~\req{mu0}.
    Then $\omega^{(2)}\in\V$, and for $m\geq 3$ there holds
    \begin{align*}
        \int_{(\R^d)^{m-2}} 
           \bigl|\omega^{(m)}(\,\cdot\,,0,x_3,\dots,x_m)\bigr| \dx_{2}\cdots
        \dx_m  \in \V.
    \end{align*}
\end{lemma}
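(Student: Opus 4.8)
The plan is to work from the cluster expansion \req{OmegaCluster}--\req{cluster} and to reduce every statement of the lemma to an estimate on the single, $m$-independent quantity
\[
   I_n(x) \,=\, \int_{(\R^d)^{n-2}}
      \bigl|\phi^{(n)}(x,0,x_3,\dots,x_n)\bigr|\,\dx_3\cdots\dx_n\,,
\]
in which all vertices except the first (at position $x$) and the second (at the origin) are integrated out. Inserting \req{OmegaCluster}, taking absolute values, and integrating over $x_3,\dots,x_m$ yields
\[
   \int_{(\R^d)^{m-2}}
      \bigl|\omega^{(m)}(x,0,x_3,\dots,x_m)\bigr|\,\dx_3\cdots\dx_m
   \,\leq\, \sum_{n\geq m} \frac{e^{n\beta\mu}}{(n-m)!}\,I_n(x)\,,
\]
while the same manipulation for $m=2$ gives $|\omega^{(2)}(x)|\leq\sum_{n\geq2} e^{n\beta\mu}I_n(x)/(n-2)!$; the case $m=2$ is precisely the adaptation of \cite[Corollary~5.2]{Hank18c}. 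First I would record two structural facts tied to the majorant: the Mayer function $f$ of \req{Mayer} is even and satisfies $|f|\leq C\psi_0$ (for $|x|\geq r_0$ this follows from $|u|\leq\psi_0$ together with $|e^{-\beta u}-1|\leq\beta|u|e^{\beta\|\psi_0\|_\infty}$, and for $|x|<r_0$ from the boundedness of $f$ and $\psi_0(|x|)\geq\psi_0(r_0)>0$), so that $f\in\V$; and, by \cite[Proposition~4.1]{Hank18c}, $\V$ is a convolution algebra, with $\norm{g*h}_\V\leq C_\V\norm{g}_\V\norm{h}_\V$, which moreover embeds continuously into $L^1(\R^d)$ because $\alpha>d$.

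The heart of the matter is the pointwise bound $I_n(x) \leq D_n\,\psi_0(|x|)$ with constants $D_n$ that are summable against $e^{n\beta\mu}/(n-m)!$. To obtain it I would invoke a tree-graph bound, valid in the gas phase by the stability estimate \req{B}, to dominate the connected-graph sum $|\phi^{(n)}|$ by a sum over spanning trees $T\in\mathfrak{T}_n$ of $\prod_{(i,j)\in T}|f(x_i-x_j)|$ (the accompanying $e^{Bn}$-type factor being harmless, see below). In a spanning tree there is a unique path from vertex $1$ to vertex $2$; if it has length $\ell\geq1$, then integrating out its $\ell-1$ interior vertices produces the $\ell$-fold convolution $(|f|*\cdots*|f|)(x)$, which by the convolution-algebra property is bounded by $C_\V^{\ell-1}\norm{|f|}_{\V}^{\ell}\,\psi_0(|x|)$. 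Integrating out the remaining $n-\ell-1$ off-path vertices from the leaves inward contributes one factor $\norm{f}_{L^1(\R^d)}$ per vertex. Collecting these estimates and summing over the trees on $n$ labelled vertices produces the desired $D_n$, each term carrying exactly one factor $\psi_0(|x|)$ extracted from the $1$--$2$ path.

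It then remains to verify $\sum_{n\geq m} e^{n\beta\mu}D_n/(n-m)!<\infty$. This is the same combinatorial sum that governs the absolute convergence of the cluster expansion, and is finite for $\mu\leq\mu_0$: the stability factor and the per-edge convolution constant $C_\V$ merely rescale the effective activity, which is accommodated by the strict margin (the factor $e^{2\beta B+1}$) built into the definition \req{mu0} of $\mu_0$. The only difference from the $m=2$ sum is the replacement of $(n-2)!$ by $(n-m)!$, i.e.\ the extra polynomial weight $(n-2)!/(n-m)!\leq n^{m-2}$; this is absorbed by the geometric room in \req{mu0}, or, for $\mu<\mu_0$, simply by comparing with the $m=2$ bound of \cite[Corollary~5.2]{Hank18c} at an intermediate chemical potential $\mu'\in(\mu,\mu_0]$ and using $n^{m-2}e^{n\beta\mu}\leq C_{m}\,e^{n\beta\mu'}$. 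Either way one concludes that the left-hand side is bounded by a constant multiple of $\psi_0(|x|)$, which is the claimed membership in $\V$.

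I expect the main obstacle to be the decay-tracking tree-graph step of the second paragraph: one must extract a full factor $\psi_0(|x|)$ along the $1$--$2$ path while keeping the residual sum summable. The structural fact that makes this possible is the subconvolutivity of the Lennard-Jones majorant, i.e.\ that $\V$ is a convolution algebra for $\alpha>d$; this is exactly what fails for a general majorant $\psi$ and explains why the stronger conclusion (membership in $\V$, rather than merely the $L^1$ bound \req{Thm448}) is special to the Lennard-Jones type case.
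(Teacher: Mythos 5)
Your proposal is correct and follows essentially the same route as the paper: reduce everything to a pointwise bound $\int_{(\R^d)^{n-2}}|\phi^{(n)}(x,0,x_3,\dots,x_n)|\dx_3\cdots\dx_n\leq C^n n^{n-2}\psi_0(|x|)$ and then sum the cluster expansion using $n^{n-m}/(n-m)!\leq e^n$ together with the margin built into \req{mu0}. The only difference is that the paper simply cites \cite[Lemma~5.1]{Hank18c} for that key estimate, whereas you sketch its derivation via a tree-graph bound and the convolution-algebra property of $\V$ -- which is a faithful unpacking of the cited result rather than a different argument.
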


\begin{proof}
For $\mu<\mu_0$ it follows from \cite[Lemma 5.1]{Hank18c} that
$\phi^{(n)}$ of \req{cluster} satisfies
\begin{align*}
   \int_{(\R^d)^{n-2}}\bigl| \phi^{(n)}(\xx_n)\bigr| \dx_3\cdots\dx_n
   &\,\leq\, C_\Sigma\, \bigl(c_\beta e^{\beta B}\bigr)^n 
             n^{n-2}\,\psi_0(|x_1-x_2|)
\end{align*}
for some constant $C_\Sigma>0$,  
every $x_1,x_2\in\R^d$ and every $n\geq 2$,
where $c_\beta$ and $B$ are the constants defined in \req{cbeta}, \req{B};
for $n=2$ the left-hand side of this inequality has to be replaced by
\[
   \bigl| \phi^{(2)}(x_1,x_2)\bigr| \,=\, \bigl|f(x_1-x_2)\bigr|\,.
\]
Plugging this into the cluster expansion~\req{OmegaCluster} for $m\geq 3$
it follows that
\begin{align*}
   \int_{(\R^d)^{m-2}}
       \bigl|\omega^{(m)}(\xx_m)\bigr| \dx_{3}\cdots \dx_m 
   &\,\leq\, C_\Sigma\,\psi_0(|x_1-x_2|)
             \sum_{n=m}^\infty
             \frac{n^{n-2}}{(n-m)!}\,
             \bigl(c_\beta e^{\beta (B+\mu)}\bigr)^n\\[1ex]
   &\,\leq\, C_\Sigma\,\psi_0(|x_1-x_2|)
             \sum_{n=m}^\infty
             n^{m-2}
             \bigl(c_\beta e^{\beta (B+\mu)+1}\bigr)^n\,,
\end{align*}
where we have used the inequality $n^{n-m}/(n-m)!\leq e^n$ in the final step.
Since $\mu\leq\mu_0$, the positive number $q=c_\beta e^{\beta(B+\mu)+1}$ is 
below one, and hence the series on the right-hand side converges, 
its value being $C_{\mu,m}$, say. 
This estimate extends to $m=2$ when the left-hand side is replaced by
$\bigl|\omega^{(2)}(\xx_2)\bigr|$.
The assertion therefore follows from \req{Vd}.
\end{proof}

We refer to \cite{Dorl19,Dun75,Min77} for further results on the
decay of the higher order cluster functions.



\begin{thebibliography}{99}
\bibitem{BenN06}
{\sc Ben-Naim, A.}:
   Molecular Theory of Solutions.
   Oxford University Press, New York (2006)
\bibitem{JTC83}
{\sc Chayes, J.T., Chayes, L.}:
   On the validity of the inverse conjecture in classical 
   density functional theory.
   J. Stat. Phys.~\textbf{36}, 471--488 (1984)
\bibitem{Dorl19}
{\sc Dorlas, T.C., Rebenko, A.L., Savoie, B.}:
   Correlation of clusters: Partially truncated correlation functions
   and their decay.
   J. Math. Phys.~{\bf 61}, 033303 (2020)
\bibitem{Dun75}
{\sc Duneau, M., Iagolnitzer, D., Souillard, B.}:
   Decay of correlations for infinite-range interactions.
   J. Math. Phys.~{\bf 16}, 1662--1666 (1975)
\bibitem{ErAd94}
{\sc Ercolessi, F., Adams, J.B.}:
   Interatomic potentials from first-principles calculations: 
   the force-matching method.
   Europhys. Lett.~{\bf 26}, 583--588 (1994)
\bibitem{FJK12}
{\sc Fritsch, S., Junghans, C., Kremer, K.}:
   Structure formation of {T}oluene around {C}60: Implementation of the
   Adaptive Resolution Scheme ({A}d{R}es{S}) into {G}{R}{O}{M}{A}{C}{S}.
   J. Chem. Theor. Comput.~{\bf 8}, 398--403 (2012)
\bibitem{FHJ19}
{\sc Frommer, F., Hanke, M., Jansen, S.}:
   A note on the uniqueness result for the inverse Henderson problem.
   J. Math. Phys.~{\bf 60}, 093303 (2019)
\bibitem{Gall68}
{\sc Gallavotti, G. Miracle-Sole, S.}:
   A variational principle for the equilibrium of hard sphere systems.
   Ann. IHP, Phys. théor.~\textbf{8}, 287--299 (1968)
\bibitem{HGEO94}
{\sc Georgii, H.-O.}:
   Large deviations and the equivalence of ensembles for
   {G}ibbsian particle systems with superstable interaction.
   Probab. Theory Related Fields~\textbf{99}, 171--195 (1994)
\bibitem{HGEO95}
{\sc Georgii, H.-O.}:
   The equivalence of ensembles for classical systems of particles.
   J. Stat. Phys.~\textbf{80}, 1341--1378 (1995)
\bibitem{Geo11}
{\sc Georgii, H.-O.}:
   Gibbs Measures and Phase Transitions.
   de Gruyter, Berlin, New York (2011) 
\bibitem{Geo13}
{\sc Georgii, H.-O.}:
   Stochastics: Introduction to Probability and Statistics, 2nd edn.
   de Gruyter, Berlin, Boston (2013) 
\bibitem{HGEO93}
{\sc Georgii, H.-O., Zessin, H.}: 
   Large deviations and the maximum entropy principle for marked point 
   random fields. 
   Probab. Theory Related Fields~\textbf{96}, 177--204 (1993)
\bibitem{Hank18a}
{\sc Hanke, M.}:
   Fr{\'e}chet differentiability of molecular distribution functions I:
   $L^\infty$ analysis.
   Lett. Math. Phys.~\textbf{108}, 285--306 (2018)
\bibitem{Hank18b}
{\sc Hanke, M.}:
   Fr{\'e}chet differentiability of molecular distribution functions II: 
   the Ursell function.
   Lett. Math. Phys.~\textbf{108}, 307--329 (2018)
\bibitem{Hank18c}
{\sc Hanke, M.}:
   Well-Posedness of the iterative Boltzmann inversion. 
   J. Stat. Phys.~{\bf 170}, 536--553 (2018)
\bibitem{HaMcD13}
{\sc Hansen, J.-P., McDonald, I.R.}:
   Theory of Simple Liquids: With Applications to Soft Matter, 4th edn.
   Academic Press, Oxford (2013)
\bibitem{Hend74}
{\sc Henderson, R.L.}:
   A uniqueness theorem for fluid pair correlation functions.
   Phys. Lett. A~{\bf 49}, 197--198 (1974)
\bibitem{Hugh21}
{\sc Hughes, A.M.}:
   Entropy minimization, convergence and Gibbs ensembles (local and global).
   PhD-Thesis, University of Missouri (2021)
\bibitem{IzVo05}
{\sc Izvekov, S., Voth, G.A.}:
   Multiscale coarse graining of liquid-state systems.
   J. Chem. Phys.~{\bf 123}, 134105 (2005)
\bibitem{Jansen19}
{\sc Jansen, S.}:
   Cluster expansions for Gibbs point processes.
   Adv. Appl. Probab.~{\bf 51}, 1129--1178 (2019)
\bibitem{JKT19}
{\sc Jansen, S., Kuna, T., Tsagkarogiannis, D.}:
   Virial inversion and density functionals,
   arXiv:1906.02322 [math-ph] (2019)
\bibitem{Koral07}
{\sc Koralov, L.}:
   An inverse problem for Gibbs fields with hard core potential.
   J. Math. Phys.~{\bf 48}, 053301 (2007)
\bibitem{Kuna7}
{\sc Kuna, T., Lebowitz, J., Speer, E.}:
   Realizability of point processes.
   J. Stat. Phys.~\textbf{129}, 417--439 (2007)
\bibitem{LLV10}
{\sc Larini, L., Lu, L., Voth, G.A.}:
  The multiscale coare-graining method. VI. Implementation of three-body
  coarse-grained potentials.
  J. Chem. Phys.~{\bf 132}, 164107 (2010)
\bibitem{LyLa95}
{\sc Lyubartsev, A.P., Laaksonen, A.}:
   Calculation of effective interaction potentials from radial distribution
   functions: A reverse Monte Carlo approach.
   Phys. Rev. E~{\bf 52}, 3730--3737 (1995)
\bibitem{Min77}
{\sc Minlos, R.A., Poghosyan, S.}:
   Estimates of Ursell functions, group functions and their derivatives.
   Theoret. and Math. Phys.~{\bf 31}, 408--418 (1977)
\bibitem{MoSa13}
{\sc Monticelli, L., Salonen, E.} (eds.):
   Biomolecular Simulations. Methods and Protocols.
   Springer, New York (2013)
\bibitem{MKV09}
{\sc Murtola, T., Karttunen, M., Vattulainen, I.}:
   Systematic coarse graining from structure using internal states: 
   Application to phospholipid/cholesterol bilayer.
   J. Chem. Phys.~{\bf 131}, 055101 (2009)
\bibitem{Navr16}
{\sc Navrotskaya, I.}:
   Inverse problem in classical statistical mechanics.
   PhD thesis, University of Pitsburgh (2016)
\bibitem{Noid13a}
{\sc Noid, W.G.}:
   Perspective: Coarse-grained models for biomolecular systems.
   J Chem. Phys.~{\bf 139}, 090901 (2013)
\bibitem{Noid13b}
{\sc Noid, W.G.}:
   Systematic methods for structurally consistent coarse-grained models.
   In \cite{MoSa13}, pp.~487--531 (2013)
\bibitem{PeKr09}
{\sc Peter, C., Kremer, K.}:
   Multiscale simulation of soft matter systems -- from the atomistic
   to the coarse-grained level and back.
   Soft Matter~{\bf 5}, 4357--4366 (2009) 
\bibitem{PDK05}
{\sc Praprotnik, M., Delle Site, L., Kremer, K.}:
   Adaptive resolution molecular-dynamics simulation: 
   changing the degrees of freedom on the fly.
   J. Chem. Phys.~\textbf{123}, 224106 (2005)
\bibitem{RPM03}
{\sc Reith, D., P\"utz, M., M\"uller-Plathe, F.}:
   Deriving effective mesoscale potentials from atomistic simulations.
   J. Comput. Chem.~{\bf 24}, 1624--1636 (2003)
\bibitem{RSSV19}
{\sc Rosenberger, D., Sanyal, T., Shell, M.S., van der Vegt, N.F.A.}:
   Transferability of local density-assisted implicit solvation models for
   homogeneous fluid mixtures.
   J. Chem. Theor. Comput.~{\bf 15}, 2881--2895 (2019)
\bibitem{RJLKA09}
{\sc R\"uhle, V., Junghans, C., Lukyanov, A., Kremer, K., Andrienko, D.}:
   Versatile object-oriented toolkit for coarse-graining applications.
   J. Chem. Theor. Comput.~{\bf 5}, 3211--3223 (2009)
\bibitem{Ruel69}
{\sc Ruelle, D.}:
   Statistical Mechanics: Rigorous Results.
   W.A.~Benjamin Publ., New York (1969)
\bibitem{Ruel70}
{\sc Ruelle, D.}:
   Superstable interactions in classical statistical mechanics.
   Comm. Math. Phys.~\textbf{18}, 127--159 (1970)
\bibitem{Schm22}
{\sc Schmid, F.}:
   Editorial: Multiscale simulation methods for soft matter systems.
   J. Phys.: Condens. Matter~{\bf 34}, 160401 (2022)
\bibitem{Shel08}
{\sc Shell, M.S.}:
   The relative entropy is fundamental to multiscale and inverse thermodynamic
   problems.
   J. Chem. Phys.~{\bf 129}, 144108 (2008)
\bibitem{Sope96}
{\sc Soper, A.K.}:
   Empirical potential Monte Carlo simulation of fluid structure.
   Chemical Physics~{\bf 202}, 295--306 (1996)
\bibitem{Stel64}
{\sc Stell, G.}:
   Cluster expansions for classical systems in equilibrium.
   In: Fritsch, H.L., Lebowitz, J.L. (eds.):
   The Equilibrium Theory of Classical Fluids, pp.~II-171--II-266.
   W.A. Benjamin Publ., New York (1964)
\bibitem{Toth07}
{\sc T\'{o}th, G.}:
   Interactions from diffraction data: historical and comprehensive overview
   of simulation assisted methods.
   J. Phys: Condens. Matter~{\bf 19}, 335220 (2007)
\bibitem{THT17}
{\sc Tsourtis, A., Harmandaris, V., Tsagkarogiannis, D.}:
   Parameterization of coarse-grained molecular interactions through
   potential of mean force calculations and cluster expansion techniques.
   Entropy~{\bf 19}, 395 (2017)
\bibitem{WNLV09}
  {\sc Wang, Y., Noid, W.G., Liu, P., Voth, G.A.}:
  Effective force coarse-graining.
  Phys. Chem. Chem. Phys.~{\bf 11}, 2002--2015 (2009)
\end{thebibliography}
\end{document}